\title{Multi-Channel Factor Analysis: Identifiability and Asymptotics}
\author{
Gray~Stanton,~\IEEEmembership{Student Member,~IEEE,
}
\and
David~Ram{\'i}rez,~\IEEEmembership{Senior Member,~IEEE,
}
\and
Ignacio~Santamaria,~\IEEEmembership{Senior Member,~IEEE,
}
\and
Louis~Scharf,~\IEEEmembership{Life Fellow,~IEEE,
}
and
Haonan~Wang%
\thanks{%Manuscript received December 8, 2023; revised May 8, 2024. The associate editor coordinating
%the review of this manuscript was Dr. Tatsuya Yokota. 
This paper is the journal version of \cite{Stanton2023}.  %\emph{(Corresponding author: Gray Stanton.)}

Gray Stanton and Haonan Wang are with the Department of Statistics, Colorado State University, Fort Collins, CO 80523 USA (e-mail: gstanton@colostate.edu; wanghn@stat.colostate.edu).

David Ram{\'i}rez is with the Department of Signal Theory and Communications, Universidad Carlos III de Madrid, Madrid 28903, Spain, and also with Gregorio Mara\~{n}\'{o}n Health Research Institute, Madrid 28007, Spain (e-mail:
david.ramirez@uc3m.es).

Ignacio Santamaria is with the Department of Communications Engineering, Universidad de Cantabria, 39005 Santander, Spain (e-mail:
i.santamaria@unican.es).

Louis Scharf is with the Department of Mathematics, Colorado State University, Fort Collins, CO 80523 USA (e-mail: louis.scharf@colostate.edu).

© 2024 IEEE.  Personal use of this material is permitted.  Permission from IEEE must be obtained for all other uses, in any current or future media, including reprinting/republishing this material for advertising or promotional purposes, creating new collective works, for resale or redistribution to servers or lists, or reuse of any copyrighted component of this work in other works.

This paper has supplementary downloadable material available at https://ieeexplore.ieee.org/document/10596994, provided by the author. The material includes additional proofs. This material is 2 pages in size.
}}
\renewcommand{\vec}[1]{\mathbf{#1}}
\newcommand{\Var}{\mathop{\textnormal{Var}}}
\renewcommand{\vec}[1]{\mathbf{#1}}
\newcommand{\tr}{\mathop{\mathrm{tr}}}
\newcommand{\argmin}{\mathop{\mathrm{argmin}}}
\newcommand{\bvarphi}{\boldsymbol{\varphi}}
\newcommand{\bSigma}{\boldsymbol{\Sigma}}
\newcommand{\brho}{\boldsymbol{\rho}}
\newcommand{\bPhi}{\boldsymbol{\Phi}}
\newcommand{\tran}{\mathsf{T}}
\newcommand{\rank}{\mathop{\mathrm{rank}}}
\newcommand{\blkdiag}{\mathop{\mathrm{blkdiag}}}
\newcommand{\kernel}{\mathop{\mathrm{Ker}}}
\newcommand{\imagesp}{\mathop{\mathrm{Im}}}
\newcommand{\veta}{\boldsymbol{\eta}}
\newcommand{\Sym}{\mathop{\mathrm{Sym}}}
\newtheorem{theorem}{Theorem}
\newtheorem{lemma}{Lemma}
\newtheorem{proposition}{Proposition}
\newtheorem{condition}{Condition}
\begin{document}
	\maketitle

\begin{abstract}\label{sec:abstract}
Recent work by Ram{\'i}rez et al. \cite{ramirez2020} has introduced Multi-Channel Factor Analysis (MFA) as an extension of factor analysis to multi-channel data that allows for latent factors common to all channels as well as factors specific to each channel. This paper validates the MFA covariance model and analyzes the statistical properties of the MFA estimators. In particular, a thorough investigation of model identifiability under varying latent factor structures is conducted, and sufficient conditions for generic global identifiability of MFA are obtained. The development of these identifiability conditions enables asymptotic analysis of estimators obtained by maximizing a Gaussian likelihood, which are shown to be consistent and asymptotically normal even under misspecification of the latent factor distribution.

\end{abstract}
\begin{IEEEkeywords}
Asymptotic normality, consistency, factor analysis (FA), identifiability, multi-channel factor analysis (MFA)
\end{IEEEkeywords}

%%%%%%%%%%%%%%%%%%%%%%%%%%%%%%%%%%%%%%
\section{Introduction}
\label{sec:intro}
%%%%%%%%%%%%%%%%%%%%%%%%%%%%%%%%%%%%%%%

Factor analysis (FA) is a statistical technique for modeling second-order structure within a collection of measurements. The method explains an observed vector through an unobserved systemic part (which is typically of scientific or engineering interest) and an unobserved noise part. Classical or \emph{single-channel} factor analysis was originally developed within the field of psychometrics by Spearman \cite{spearman1904} as a method to identify a small number of unobserved random \emph{factors} which explain the between-individual variation in psychometric scores. In signal processing, FA and its extensions \cite{Klami2015,Sardarabadi2018,Ramirez2022} are employed in the uncalibrated setting where the noise variance is anisotropic and unknown \cite{Pesavento2001,Ramirez2011,Koutrouvelis2019}.  %Later work by Anderson \cite{anderson1956}, J\"{o}reskog \cite{Joreskog1967}, Lawley \cite{Lawley1967} and many others has established factor analysis as a core method in multivariate statistics and signal processing.

A recent extension to factor analysis by Ram{\'i}rez et al.  \cite{ramirez2020} is of central interest to this paper. These authors developed \emph{multi-channel} factor analysis (MFA), which enables joint factor analysis of observations collected from \emph{multiple channels}. Many problems and associated methods possess a natural channel structure\cite{Cochran1990,Cochran1995,Ramirez2010,Santamaria2017}, such as grouping multi-sensor data by sensor modality \cite{Correa2010,Bhinge2017}. In MFA, some factors, termed \emph{common} factors, may influence all channels. In addition to common factors, in MFA each channel may also possess \emph{distinct} factors influencing that channel alone. MFA then decomposes the vector of observations into latent vectors that can be described as a signal that influences all channels, within-channel interference, and idiosyncratic noise. This decomposition is of value, as detecting a weak signal that presents across all channels in the presence of channel-specific interference and noise is a goal in domains such as passive radar \cite{Bandiera2007,Hack2014}, speech recognition\cite{Zwyssig2015,wen2022}, and astronomy \cite{Leshem2001,antman2020}.

Previous work on MFA has provided an estimation procedure for the parameters of the model based on likelihood-maximization under normality assumptions. However, for the output of the procedure to be meaningful, it is crucial that MFA be guaranteed to be identifiable using only what is known \emph{a priori}, namely the channel sizes and the dimensions of the signal and interference subspaces. For single-channel factor analysis, practitioners correctly assume identifiability whenever the number of common factors is much smaller than the number of observations \cite{shapiro1985}. However, for multi-channel factor analysis, the channel sizes and desired number of common and distinct factors may vary widely, and so the question of identifiability becomes more challenging and less intuitive. In \cite{ramirez2020}, the question of identifiability is recognized and some necessary conditions on the maximum number of common and distinct factors are discussed. This paper extends that discussion by carefully examining the two main sources of non-identifiability of MFA, namely isolation of the idiosyncratic noise variances and separation of signal and interference covariances.

The purpose of this paper is to give identifiability guarantees requiring only the specification of the channel sizes and signal and interference dimensionality. The asymptotic properties of the MFA estimators are then derived, which provides the previously-missing theoretical underpinnings for interpretation of the MFA parameter estimates. This parallels the advancement of single-channel FA as a statistical method as reviewed in Section \ref{sec:identifiability}. 
The main contributions of this paper to MFA are
\begin{enumerate}
    \item Necessary and sufficient conditions for separation of signal and interference covariances.
    \item Sufficient conditions on the number of common and distinct factors for generic global identifiability.
    \item Proof of the asymptotic consistency and normality of estimators derived from Algorithm 1 in \cite{ramirez2020}.
\end{enumerate}

The sufficient conditions for generic global identifiability of the MFA covariance model ensure that, for reasonable numbers of common and {distinct} factors, the decomposition of the observation covariance into common, {distinct}, and idiosyncratic parts will be unique for almost all population covariance matrices.  With this identifiability result, parameter estimates obtained by maximizing a Gaussian likelihood are shown to be consistent and asymptotically normal, even in the case where the true distribution of the latent vectors is non-normal. %These results ensure the validity of modeling the second-order structure of the observations with multi-channel factor analysis, provide bounds for the allowable factor structures, and allow for uncertainty quantification in the large-sample case.

\subsection{Notation}\label{sec:notation}
Matrices and vectors are denoted with bold-faced symbols, and scalars are denoted with light-face symbols. A real matrix of size $n \times m$ is written as $\vec{D} \in \mathbb{R}^{n\times m}$, and a column vector of length $n$ is written as  $\vec{d} \in \mathbb{R}^{n}$. The zero matrix of dimension $m\times n$ is  $\vec{0}_{m,n}$ and the $n\times n$ identity matrix is $\vec{I}_n$. A zero vector of dimension $n$ is written as $\vec{0}_n$.  When clear from context, the subscripts may be dropped. The standard basis for $\mathbb{R}^n$ will be written as $\{\vec{e}_{1}, \dots, \vec{e}_{n}\}$. Matrix and vector transposes are written as $\vec{D}^{\tran}$ and $ \vec{d}^{\tran}$ respectively. The determinant of $\vec{D}$ is written as $\det \vec{D}$, and the trace is written as $\tr\vec{D}$. The $(i,j)$th entry of a matrix $\vec{D}$ is $[\vec{D}]_{ij}$, and similarly for the $i$th entry of a column vector. The matrix obtained by concatenating the columns of $\vec{B}$ to the right of the columns of $\vec{A}$ is written as $[\vec{A}\ \vec{B} ]$. For row and column index sets $\alpha \subset \{1,\dots, n\}$ and $\beta \subset \{1,\dots, m\}$, the submatrix $\vec{D}[\alpha, \beta]$ contains the entries $[\vec{D}]_{ij}$ with $(i,j) \in \alpha \times \beta$.  The non-negative part of a scalar expression $a \in  \mathbb{R}$ is $(a)_{+} \equiv \max\{a, 0\}$. {The normal distribution with mean $\vec{m}$ and variance $\vec{V}$ is $\mathcal{N}(\vec{m}, \vec{V})$}.

The operator $\mathrm{Diag}^{-1}$ applied to a matrix yields the vector containing the diagonal entries. The block-diagonal operator $\mathrm{blkdiag}$ applied to a list of matrices yields the block-diagonal matrix with the listed blocks. The $\mathrm{vec}$ operator vectorizes a matrix by stacking the columns vertically, while $\mathrm{vech}$, applicable to square matrices, vectorizes by extracting only the lower-triangular entries. Denote the spaces of $n \times n$ symmetric, symmetric positive semidefinite, orthogonal, and diagonal matrices as $\mathrm{Sym}(n)$, $\mathrm{PSD}(n)$, $\mathrm{O}(n)$ and $\mathrm{Diag}(n)$ respectively, with $\mathrm{Diag}_{\geq 0}(n)$ being $\mathrm{PSD}(n) \cap \mathrm{Diag}(n)$. For matrix $\vec{A}$ with submatrix $\vec{A}_{11}$, the generalized Schur complement of $\vec{A}_{11}$ is $\vec{A} \setminus \vec{A}_{11}$. {For symmetric matrices $\vec{V}, \vec{W}$ of the same size $\vec{V} \succeq \vec{W}$ indicates that $\vec{V} - \vec{W}$ is positive semi-definite}.
For vector subspaces $\mathcal{A}$ and $\mathcal{B}$, the subspace intersection is $\mathcal{A} \cap \mathcal{B}$ and the subspace sum and direct sum are respectively denoted by $\mathcal{A} + \mathcal{B}$ and $\mathcal{A} \oplus \mathcal{B}$. For a linear map $\vec{T}$, the image and kernel subspaces are $\imagesp(\vec{T})$ and $\kernel(\vec{T})$.

%%%%%%%%%%%%%%%%%%%%%%%%%%%%%%%%%%%%%%
\section{Model}
\label{sec:model}
%%%%%%%%%%%%%%%%%%%%%%%%%%%%%%%%%%%%%%

    \subsection{Description}
    An archetypal data collection scheme for which MFA is applicable consists of multiple sensors or observation units, each of which collects a vector of measurements. Often these sensors are homogeneous (such as when all sensors measure voltage), but MFA is also applicable to a heterogeneous collection of sensors. The input from an individual sensor then composes an individual \emph{channel} of observation for some shared signal which is measured by multiple sensors.  This channel structure is set by the design of the sensor array, and is known in advance of data collection. The channels are numbered by $c = 1,\dots, C$ with $n_c$ scalar measurements in channel $c$.

	For channel $c$, denote the vector of measurements within that channel as $\vec{x}_{c}$. The generative model for $\vec{x}_c$ is
	\begin{equation}
	\label{eq:stationarymfa_channel}
	\vec{x}_{c} = \vec{A}_{c}\vec{f} + \vec{B}_{c}\vec{g}_{c} + \vec{u}_{c},
	\end{equation}
    where $\vec{A}_c\vec{f}$ is the signal in channel $c$, $\vec{B}_{c}\vec{g}_{c}$ is the channel-$c$ interference {that lives within a low-dimensional subspace}, and $\vec{u}_c$ is the measurement noise. The matrices $\vec{A}_c \in \mathbb{R}^{n_c \times r_0}$ and $\vec{B}_c \in \mathbb{R}^{n_c \times r_c}$ are the common and {distinct} factor loadings for channel $c$.
	The number of common factors $r_0 \leq n$ and {distinct} factors $r_1,\dots, r_C$ with $r_c \leq n_c$ determine the flexibility of the model, as the common factor $\vec{f}$ is in $\mathbb{R}^{r_0}$ and the {distinct} factor for channel $c$, $\vec{g}_{c}$, is in $\mathbb{R}^{r_c}$. The remaining portion of each measurement in channel $c$ which is not a result of the influence of the latent factors is contributed by $\vec{u}_{c} \in \mathbb{R}^{n_c}$.
 
    {
    The above data collection scheme and related model \eqref{eq:stationarymfa_channel} is appropriate for several signal processing problems. In passive radar \cite{Hark2014}, the observations are collected from two multi-sensor arrays which make up the \emph{reference} and \emph{surveillance} channels. The common signal $\vec{f}$ affects both channels as $\vec{A}_1\vec{f}$ and $\vec{A}_2\vec{f}$ when a target is reflected by an opportunistic illuminator. As the multi-sensor arrays are spatially separated, the interferences can be modeled as the uncorrelated terms $\vec{B}_1\vec{g}_1$ and $\vec{B}_2\vec{g}_2$. Finally, the measurements are contaminated by uncorrelated noises whose variances are unknown in the absence of an accurate calibration. Another possible application of \eqref{eq:stationarymfa_channel} is cooperative relaying in Time-Division Multiple Access (TDMA) systems \cite{Dohler2010}, where multiple relays each transmit a common signal to a multi-antenna access point in sequential time slots $c=1,\dots, C$. The common signal presents in slot $c$ as $\vec{A}_c\vec{f}$ and is subject to interference $\vec{B}_c\vec{g}_c$ and measurement noise $\vec{u}_c$.}

	The all-channel observation vector is obtained by stacking the channels as $\vec{x} \equiv [\vec{x}_{1}^{\tran} \cdots \vec{x}_{c}^{\tran}]^{\tran}$. The first-order model for the all-channel observations is
	\begin{equation}
	\label{eq:stationarymfa}
	\vec{x} = \vec{A} \vec{f} + \vec{B} \vec{g} + \vec{u},\\
    \end{equation}
    where $\vec{A}$ and $\vec{B}$ are the all-channel loadings,
    \begin{equation}\label{eq:mfaparams}
	\vec{A} \equiv 
	\big [
	\vec{A}_1^{\tran}\ \cdots 
	\vec{A}_C^{\tran}
	\big ]^{\tran},
	\quad
	\vec{B} \equiv  \mathrm{blkdiag}(\vec{B}_1, \dots, \vec{B}_C),
    \end{equation}
	with $\vec{g} \equiv  [\vec{g}_{1}^{\tran} \cdots \vec{g}_{C}^{\tran}]^{\tran}$ and $\vec{u} \equiv  [\vec{u}_{1}^{\tran} \cdots \vec{u}_{C}^{\tran}]^{\tran}$. 
    For clarity of notation, let $\vec{n} \equiv [n_1, \dots, n_C]$, and $\vec{r} \equiv [r_0, r_1, \dots, r_C]$. The total number of observations is $n \equiv \sum_{c=1}^{C} n_c$ and the total number of {distinct} factors is $r \equiv \sum_{c=1}^{C} r_c$. Denote the cumulative sum of the number of observations and {distinct} factors as $n_{<c} \equiv \sum_{k=1}^{c-1} n_k$ and $r_{<c} \equiv \sum_{k=1}^{c-1} r_k$, respectively. For $c=1$, $r_{<1}$ and $n_{<1}$ are set to $0$. Similarly, define $n_{>c} \equiv n - n_c - n_{<c}$ and $r_{>c} \equiv r - r_c - r_{<c}$.  {Table \ref{tab:notation} summarizes the commonly used notation}.
	
	\subsection{Covariance Specification}\label{sec:specification}
	In \eqref{eq:stationarymfa}, the factors $\vec{f}, \vec{g}$ and the errors $\vec{u}$ are unobserved random quantities while the factor loadings $\vec{A}, \vec{B}$ are fixed unknown parameters. The latent factors are assumed to satisfy
    \vspace*{-0.3em}
	\begin{align*}\label{eq:factormoments}
	E[\vec{f}] &= \vec{0}_{r_0}, & E[\vec{f}\vec{f}^{\tran}] & \equiv \vec{R}_{\vec{f}\vec{f}}, \\ \mkern-10mu
	E[\vec{g}_c] &= \vec{0}_{r_c}, & E[\vec{g}_{c}\vec{g}_{c}^{\tran}] &\equiv \vec{R}_{\vec{g}_c\vec{g}_c}.
	\end{align*}
	Factors of different types are required to be uncorrelated,
     \vspace*{-0.3em}
	\begin{align*}
	E[\vec{f}\vec{g}^{\tran}] &= \vec{0}_{r_0,r}, & E[\vec{g}_{c} \vec{g}^{\tran}_{c'} ] &= \vec{0}_{r_c,r_{c'}} \   c \neq c'.
	\end{align*}
	The idiosyncratic errors $\vec{u}$ are assumed to satisfy
     \vspace*{-0.3em}
	\begin{align*}
	E[\vec{u}] &= \vec{0}_{n}, \mkern-10mu& E[\vec{u}\vec{g}^{\tran}] &= \vec{0}_{n,r}, \mkern-10mu \\ E[\vec{u}\vec{f}^{\tran}] &= \vec{0}_{n,r_0}, \mkern-10mu&  E[\vec{u}\vec{u}^{\tran}] &= \bPhi, 
	\end{align*}
	for some covariance matrix $\bPhi \in \mathrm{Diag}_{\geq 0}(n)$.
    Under the above specification, $\vec{x}$ has mean zero with covariance matrix
    \begin{equation}\label{eq:obscov}
    \begin{aligned}
    \vec{R}_{\vec{x}\vec{x}} &\equiv \vec{A}\vec{R}_{\vec{f}\vec{f}}\vec{A}^{\tran} + \vec{B}\vec{R}_{\vec{g}\vec{g}}\vec{B}^{\tran} + \bPhi,\\
    &= \vec{R}_{\vec{s}\vec{s}} + \vec{R}_{\vec{i}\vec{i}} + \bPhi,
    \end{aligned}
    \end{equation}
    where $\vec{R}_{\vec{g}\vec{g}} = \blkdiag(\vec{R}_{\vec{g}_1\vec{g}_1}, \dots, \vec{R}_{\vec{g}_C\vec{g}_C})$ and the \emph{signal} and \emph{interference} covariances are $\vec{R}_{\vec{s}\vec{s}} \equiv \vec{A}\vec{R}_{\vec{f}\vec{f}}\vec{A}^{\tran}$ and  $\vec{R}_{\vec{i}\vec{i}} \equiv \vec{B}\vec{R}_{\vec{g}\vec{g}}\vec{B}^{\tran}$, respectively. The set $\mathcal{R}(\vec{n}, \vec{r}) \subset \mathrm{PSD}(n)$ contains all observation covariance matrices realizable by \eqref{eq:obscov}.

    \subsection{Parameterization of MFA Covariance Models}\label{sec:parameterization}
    For given channel sizes $\vec{n}$ and factor numbers $\vec{r}$, the generative model \eqref{eq:stationarymfa} for the all-channel observation $\vec{x}$ under MFA determines a set of covariance matrices $\mathcal{R}(\vec{n}, \vec{r}) \subset \mathrm{PSD}(n)$ by \eqref{eq:obscov}. The set $\mathcal{R}(\vec{n}, \vec{r})$ can be parameterized in three ways, namely by the triple of structured components $(\vec{R}_{\vec{s}\vec{s}}, \vec{R}_{\vec{i}\vec{i}}, \bPhi)$, by the loading matrices $\vec{A}, \vec{B}$ and noise variances $\bPhi$ whose structures are shown in Figure \ref{fig:mfa_ident_params}, or by a vector $\veta$ which captures the degrees of freedom in the $(\vec{A}, \vec{B}, \bPhi)$ parameterization.

    \subsubsection{Parameterization by $(\vec{R}_{\vec{s}\vec{s}}, \vec{R}_{\vec{i}\vec{i}}, \bPhi)$} In MFA, \eqref{eq:obscov} shows that the observation covariance $\vec{R}_{\vec{x}\vec{x}}$ is the sum of a low-rank matrix $\vec{R}_{\vec{s}\vec{s}}$, a channel-structured block-diagonal matrix $\vec{R}_{\vec{i}\vec{i}}$ with low-rank blocks, and a non-negative diagonal matrix $\bPhi$. Any triple ($\vec{R}_{\vec{s}\vec{s}}, \vec{R}_{\vec{i}\vec{i}}, \bPhi)$ of appropriately structured $n\times n$ matrices determines an element of $\mathcal{R}(\vec{n}, \vec{r})$ by the second line of \eqref{eq:obscov}. That is, if $\vec{R}_{\vec{s}\vec{s}} \in \mathrm{PSD}(n)$ has rank at most $r_0$, $\vec{R}_{\vec{i}\vec{i}} \in \mathrm{PSD}(n)$ is block-diagonal whose $c$th block is $n_c \times n_c$ with rank at most $r_c$, and $\bPhi$ is in $\mathrm{Diag}_{\geq 0}(n)$, then 
    \begin{equation}
    	\vec{R}_{\vec{x}\vec{x}}(\vec{R}_{\vec{s}\vec{s}}, \vec{R}_{\vec{i}\vec{i}}, \bPhi) \equiv \vec{R}_{\vec{s}\vec{s}} + \vec{R}_{\vec{i}\vec{i}} + \bPhi
    \end{equation}
    is in $\mathcal{R}(\vec{n}, \vec{r})$. This can be seen by taking $\vec{R}_{\vec{f}\vec{f}}$ and $\vec{R}_{\vec{g}\vec{g}}$ to be identity matrices and obtaining $\vec{A}$ and $\vec{B}$ from the Cholesky factors of $\vec{R}_{\vec{s}\vec{s}}$ and $\vec{R}_{\vec{i}\vec{i}}$ respectively. 
    
    Recovering $(\vec{R}_{\vec{s}\vec{s}}, \vec{R}_{\vec{i}\vec{i}}, \bPhi)$ from an estimate of  $\vec{R}_{\vec{x}\vec{x}}$ is the central goal of MFA, as decomposing $\vec{R}_{\vec{x}\vec{x}}$ into the three summands will separate  $\vec{R}_{\vec{s}\vec{s}}$, which controls the cross-channel covariance, from $\vec{R}_{\vec{i}\vec{i}}$, which modifies the within-channel covariance. Both covariance-controlling components are then isolated from the idiosyncratic noise variance for individual inputs. As the summands are separately interpretable and are identifiable from $\vec{R}_{\vec{x}\vec{x}}$, as will be shown in Section \ref{sec:identifiability}, the parameterization of $\mathcal{R}(\vec{n}, \vec{r})$ in terms of $(\vec{R}_{\vec{s}\vec{s}}, \vec{R}_{\vec{i}\vec{i}}, \bPhi)$ forms the basis for interpreting the results of MFA.

    \subsubsection{Parameterization by $(\vec{A}, \vec{B}, \bPhi)$}\label{sec:loadingparam}However, the rank constraints on $\vec{R}_{\vec{s}\vec{s}}$ and $\vec{R}_{\vec{i}\vec{i}}$ are inconvenient, as the set of such matrices is not a vector space. It is typical in factor analysis to  parameterize in terms of the loading matrices $\vec{A}$ and $\vec{B}$, so that the rank constraints are automatically satisfied. This increases the complexity of the parameterization map (as it is quadratic rather than linear), but simplifies the domain. 
    
    The first line of \eqref{eq:obscov} parameterizes $\mathcal{R}(\vec{n}, \vec{r})$ in terms of $(\vec{A}, \vec{B}, \vec{R}_{\vec{f}\vec{f}}, \vec{R}_{\vec{g}\vec{g}}, \bPhi)$.  However, without further information about either the loading matrices $\vec{A}, \vec{B}$ or the factor variances $\vec{R}_{\vec{f}\vec{f}}, \vec{R}_{\vec{g}\vec{g}}$, it is clear that the pairs $(\vec{A}, \vec{R}_{\vec{f}\vec{f}})$ and $(\vec{B}, \vec{R}_{\vec{g}\vec{g}})$ are non-identifiable from knowledge of $\vec{x}$ alone. As the factors are unobserved, any change of basis on the factor spaces taking $(\vec{A}, \vec{f})$ to $(\vec{A}\vec{T}_0, \vec{T}^{-1}_0\vec{f})$ and $(\vec{B}_c, \vec{g}_c)$ to $(\vec{B}_c\vec{T}_c, \vec{T}^{-1}_c\vec{g}_c)$ leaves the observations unchanged. In the exploratory case where no information beyond the channel structure and the factor space dimensionality is assumed, the invariance of the observations to linear transformations of the factor space is most easily resolved by imposing that the factors be uncorrelated and unit-scale, $\vec{R}_{\vec{f}\vec{f}} = \vec{I}_{r_0}$ and $\vec{R}_{\vec{g}_c\vec{g}_c} = \vec{I}_{r_c}$ for all $c=1,\dots, C$. 

    Under this assumption, $\mathcal{R}(\vec{n}, \vec{r})$ can be parameterized as
    \begin{equation}
    \begin{aligned}
    \label{eq:covariancemodel}
	\vec{R}_{\vec{x}\vec{x}}(\vec{A}, \vec{B}, \bPhi) &\equiv \vec{A}\vec{A}^\tran + \vec{B}\vec{B}^{\tran} + \bPhi.
    \end{aligned}
	\end{equation}
    The set of common factor loadings $\vec{A}$ is $\mathbb{A} \equiv \mathbb{R}^{n\times r_0}$, while the set of {distinct} factor loadings is $\mathbb{B} \subset \mathbb{R}^{n \times r}$ containing those $\vec{B} \in \mathbb{R}^{n \times r}$ which are block diagonal with $c$th block $\vec{B}_{c} \in \mathbb{R}^{n_c \times r_c}$. The domain of $\vec{R}_{\vec{x}\vec{x}}(\vec{A}, \vec{B}, \bPhi)$ is $\mathbb{A} \times \mathbb{B} \times \mathrm{Diag}_{\geq 0}(n)$.

    \begin{figure}
        \centering
        \includegraphics{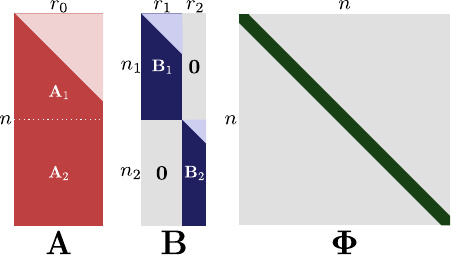}
        \caption{{Depiction of MFA covariance parameters $(\vec{A}, \vec{B}, \bPhi)$ for two channels. Triangles indicate constraints of $\mathbb{A}_{L}$ and $\mathbb{B}_{L}$.}}
        \label{fig:mfa_ident_params}
    \end{figure}

    \begin{table}
    \caption{{Table of commonly used notation and descriptions.}}
    \resizebox{\linewidth}{!}{
    \begin{tabular}{|c|c| c| c|}
    \hline
    Quantity & Description & Quantity & Description\\
    \hline
    $C$ & \# of channels & $n_c$ & Chan-$c$ size \\
    $r_0$ & Common fac. num.  & $r_c$ & Chan-$c$ distinct fac. num. \\
    $n$ & Total chan. size & $r$ & Total distinct fac. num.\\ 
    $\vec{n} \in \mathbb{N}^{C}$ & Vector of chan. sizes & $\vec{r}\in \mathbb{N}^{C{+}1}$ & Vector of fac. numbers\\
    $\vec{A} \in \mathbb{R}^{n \times r_0}$ & Common fac. loadings & $\vec{B}\in \mathbb{R}^{n\times r}$ & Distinct fac. loadings \\
    $\mathbb{A} \subset \mathbb{R}^{n\times r_0}$ & Set of $\vec{A}$s & $\mathbb{B} \subset{\mathbb{R}^{n\times r}}$ & Set of $\vec{B}$s \\
    $\mathbb{A}_{L}, \mathbb{B}_L$ & LT top block subspaces & $\mathbb{A}_L^*, \mathbb{B}_L^*$ & LT with positive diag.\\
    $\bPhi \in \mathbb{R}^{n\times n}$ & Diag. noise cov. & $\veta \in \mathbb{R}^{L}$ & $(\vec{A}, \vec{B},\bPhi)$ free params \\ 
    $\vec{R}_{\vec{s}\vec{s}} \in \mathbb{R}^{n\times n}$ & Rank-$r_0$ cov. & $\vec{R}_{\vec{i}\vec{i}} \in \mathbb{R}^{n\times n}$&  Blkdiag interf. cov. \\
    $\vec{R}_{\vec{x}\vec{x}}$ & MFA observation cov. & $\mathcal{R}(\vec{n}, \vec{r})$ & Set of $\vec{R}_{\vec{x}\vec{x}}$s\\ 
    \hline
    \end{tabular}
    }
    
    \label{tab:notation}
    \end{table}

    \subsubsection{Parameterization by $\veta$}
    The above parameterization $\vec{R}_{\vec{x}\vec{x}}(\vec{A}, \vec{B},\bPhi)$ by the loading matrices introduces a \emph{rotation invariance}, as $\vec{R}_{\vec{x}\vec{x}}(\vec{A}\vec{Q}_f, \vec{B}\vec{Q}_g, \bPhi)$ equals  $\vec{R}_{\vec{x}\vec{x}}(\vec{A}, \vec{B}, \bPhi)$ for any orthogonal $\vec{Q}_f$ and $\vec{Q}_g$, where $\vec{Q}_g = \blkdiag(\vec{Q}_{g,1}, \dots, \vec{Q}_{g, C})$ with $\vec{Q}_{g,c} \in \mathbb{R}^{r_c\times r_c},\ c=1,\dots, C$. For purposes of estimation and asymptotic analysis, it is desirable to eliminate this invariance by adding artificial restrictions to $\vec{A}$ and $\vec{B}$, in such a way that the realizable products $\vec{A}\vec{A}^{\tran}$ and $\vec{B}\vec{B}^{\tran}$ are not restricted. Analogous restrictions which remove rotation invariance in single-channel factor analysis are well-known, and typically involve orthogonality of loading matrix columns or imposition of structural zeros \cite{anderson1956}\cite{joreskog1969}. 
    
    %In addition to the restrictions imposed for mathematical convenience, there is a related problem involving selection of a rotationally-equivalent loading matrix which maximizes scientific \emph{interpretability}. For single-channel factor analysis, this is known as \emph{rotation to simple structure} \cite{Darton1980}. Although of interest for future work, the problem of rotating to maximize interpretability in multi-channel factor analysis is not treated in this paper. 
    
   Here, appropriate restrictions are imposed in the same fashion as in \cite{ramirez2020}. Consider loading matrices $\vec{A}$, $\vec{B}_1, \dots, \vec{B}_C$ which have lower-triangular (LT) top blocks, $\vec{A}_1$ and $\vec{B}_{1,1},\dots, \vec{B}_{C,1}$, of sizes $r_0 \times r_0$ and $r_c \times r_c, c=1,\dots, C$ respectively. The remaining rows are unconstrained, and the remaining submatrices are written as $\vec{A}_2$ and $\vec{B}_{1,2}, \dots, \vec{B}_{C,2}$.  Define $\mathbb{A}_{L} \subset \mathbb{A}$ and $\mathbb{B}_{L} \subset \mathbb{B}$ as the subspaces of loading matrices which satisfy their respective restrictions. Further, distinguish $\mathbb{A}^*_{L}$  as the subset where, for each $j=1,\dots,r_0$, the main diagonal element $[\vec{A}_1]_{jj}$ is either positive or the $j$th column of $\vec{A}_1$ is zero. The set $\mathbb{B}_L^*$ is defined similarly. 
    
    The non-redundant degrees of freedom in $(\vec{A}, \vec{B}, \bPhi)$ for $\vec{A} \in \mathbb{A}_L$ and $\vec{B} \in \mathbb{B}_L$ compose the vector $\veta \in \mathbb{R}^L$ as
     \begin{equation}
	\label{eq:vectorization}
	\begin{aligned}
	\veta = \big[&\mathrm{vech}(\vec{A}_1)^{\tran}\ \mathrm{vec}(\vec{A}_2)^{\tran}\ \mathrm{vech}(\vec{B}_{1,1})^{\tran}\\
	 &\mathrm{vec}(\vec{B}_{1, 2})^{\tran}\ \dots\ \mathrm{vec}(\vec{B}_{C,2})^{\tran}\ \mathrm{Diag}^{-1}(\bPhi)^{\tran} \big]^{\tran},
	\end{aligned}
	\end{equation}
    where the dimension $L$ is
	\begin{equation}
	\label{eq:unknown_count}
	L = nr_0 - \textstyle\frac{1}{2}r_0(r_0-1) + \textstyle\sum_{c=1}^C \big[n_cr_c - \frac{1}{2} r_c(r_c-1)\big] + n.
	\end{equation}
    The subset of $\veta$ so obtained is $V \subset \mathbb{R}^L$. The parameterization $\vec{R}_{\vec{x}\vec{x}}(\veta)$ of $\mathcal{R}(\vec{n}, \vec{r})$ is obtained by inverting \eqref{eq:vectorization} for $(\vec{A}(\veta), \vec{B}(\veta), \bPhi(\veta))$ and taking $\vec{R}_{\vec{x}\vec{x}}(\vec{A}(\veta), \vec{B}(\veta), \bPhi(\veta))$.
    %However, without further restrictions on the domain, the parametrization $\vec{R}$ fails to be injective in several distinct ways. The inability to invert $\vec{R}$ and obtain $(\vec{A}, \vec{B}, \bPhi)$ from $\vec{R}_{\vec{x}\vec{x}}$ has significant consequences for inference, and those consequences differ depending on what causes $\vec{R}$ to be many-to-one. Examining the potential obstacles to inverting $\vec{R}$ and discussing how they can be surpassed is the goal of section \ref{sec:identifiability}.

%%%%%%%%%%%%%%%%%%
\section{Identifiability}
\label{sec:identifiability}
%%%%%%%%%%%%%%%%%%

    \subsubsection{Definition} For multi-channel factor analysis as defined in Section \ref{sec:model}, we say that an observation covariance matrix $\bSigma_{\vec{x}\vec{x}} \in \mathcal{R}(\vec{n}, \vec{r})$ is \emph{identified} when it can be uniquely decomposed into a sum of appropriately structured components. That is, in the terms of Section \ref{sec:parameterization}, $\bSigma_{\vec{x}\vec{x}}$ is identified if there is a unique triple $(\vec{R}_{\vec{s}\vec{s}}, \vec{R}_{\vec{i}\vec{i}}, \bPhi)$ such that $\vec{R}_{\vec{x}\vec{x}}(\vec{R}_{\vec{s}\vec{s}}, \vec{R}_{\vec{i}\vec{i}}, \bPhi) = \bSigma_{\vec{x}\vec{x}}$. As the covariance matrices $\vec{R}_{\vec{s}\vec{s}}, \vec{R}_{\vec{i}\vec{i}},$ and $\bPhi$  contain all information in MFA about the statistical properties of the signal, interference, and noise respectively, any inference must be based on these covariances.  However, if there exists a different triple $(\tilde{\vec{R}}_{\vec{s}\vec{s}}, \tilde{\vec{R}}_{\vec{i}\vec{i}}, \tilde{\bPhi})$ of structured matrices which also sums to $\bSigma_{\vec{x}\vec{x}}$, then any inference based on the individual values for the summands must be suspect. As $\bSigma_{\vec{x}\vec{x}}$ is not known, practical application of MFA requires that the observation covariance model \eqref{eq:obscov} be guaranteed to be identified using only what is specified $\emph{a priori}$, namely the channel sizes and number of common and {distinct} factors. If the channel sizes $\vec{n}$ and factor numbers $\vec{r}$ permit such a guarantee, we say that MFA is \emph{identifiable} with those channel sizes and factor numbers.
    
    {
    In this definition, identifiability of MFA is a property of the covariance matrix $\bSigma_{\vec{x}\vec{x}}$ and refers to the uniqueness of the second order decomposition \eqref{eq:covariancemodel}, \emph{not} uniqueness of the first order generative model \eqref{eq:stationarymfa_channel}. As discussed in Section \eqref{sec:loadingparam}, the latent factors themselves are not uniquely identifiable even in the noise-free case, as the bases for the common and {distinct} factor spaces can be changed without altering the observations. However, if the MFA covariance  $\bSigma_{\vec{x}\vec{x}}$ is identified in the above sense and a preferred basis for the factor space is chosen, then the uniqueness of the MFA decomposition allows for linear MMSE estimation of the latent factors $\vec{f}$ and $\vec{g}_{c},\ c=1,\dots,C$ in that basis (see \cite[Section IV.B]{ramirez2020} for a related experiment).}

    Identification of $(\vec{R}_{\vec{s}\vec{s}}, \vec{R}_{\vec{i}\vec{i}}, \bPhi)$ from $\bSigma_{\vec{x}\vec{x}}$ breaks into two subproblems, namely \emph{isolation of the idiosyncratic variances} and \emph{separation of the signal and interference covariances}. That is, the former subproblem refers to whether $\bSigma_{\vec{x}\vec{x}}$ uniquely determines $(\vec{R}_{\vec{s}\vec{s}} + \vec{R}_{\vec{i}\vec{i}}, \bPhi)$ while the latter subproblem refers to whether $\vec{R}_{\vec{s}\vec{s}} + \vec{R}_{\vec{i}\vec{i}}$ uniquely determines $(\vec{R}_{\vec{s}\vec{s}}, \vec{R}_{\vec{i}\vec{i}})$.
    
    In Section \ref{sec:uniquedecomp}, conditions on $\vec{n}$ and $\vec{r}$ which resolve the  subproblems and ensure the identifiability of $(\vec{R}_{\vec{s}\vec{s}}, \vec{R}_{\vec{i}\vec{i}}, \bPhi)$ from $\bSigma_{\vec{x}\vec{x}}$ are derived.  The following Section \ref{sec:paramident} investigates the identifiability and associated properties of the parameterization $\vec{R}_{\vec{x}\vec{x}}(\veta)$ of $\mathcal{R}(\vec{n}, \vec{r})$ in terms of $\veta$, 
    which provides the required technical foundation for the asymptotic analysis of Section \ref{sec:asymptotics}. The relationships between the conditions and results of this section are summarized in Figure \ref{fig:conditiondiag}.

    \subsubsection{Generic, Global, and Local Identifiability}
    
     To establish the channel sizes and factor numbers for which MFA is identifiable, it is important to realize that certain degenerate $\bSigma_{\vec{x}\vec{x}}$ will never be identified. For example, choose $\vec{A}, \vec{B}$ such that the block matrix $[\vec{A}\ \vec{B}]$ has some orthogonal rows and all other rows being zero. The resulting $\vec{R}_{\vec{s}\vec{s}} + \vec{R}_{\vec{i}\vec{i}}$ will itself be diagonal and so the noise variances cannot be isolated. Although the subset of $\mathcal{R}(\vec{n}, \vec{r})$ containing the non-identified MFA observation covariance models is not precisely characterized, it is sufficient for practical applications to find conditions on $\vec{n}$ and $\vec{r}$ which imply that such non-identified covariance models are atypical. {In addition, a distinction can be made between \emph{local} and \emph{global} identifiability. A locally identified triple $(\vec{R}_{\vec{s}\vec{s}}, \vec{R}_{\vec{i}\vec{i}}, \bPhi)$ summing to $\bSigma_{\vec{x}\vec{x}}$ is guaranteed to be the unique such triple within some neighborhood, whereas global identifiability extends this guarantee to the entire space.}
     %Further, these conditions imply that the parameterization $\vec{R}_{\vec{x}\vec{x}}(\veta)$ is invertible excepting a null set of $\veta$.

     These types of result are common in the identifiability literature; \cite{shapiro1985} and \cite{bekker1997} establish local and global identifiability for single-channel factor analysis in this \emph{generic} sense, while \cite{kiraly2012} examines generic identifiability in low-rank matrix completion. Formally, we call a subset of a $d$-dimensional real vector space $\emph{null}$ if its image under a linear isomorphism to $\mathbb{R}^d$ has Lebesgue measure zero. A statement is \emph{generically} true if it is true for all elements excepting a null subset.

    \begin{figure}
        \centering
        \includegraphics[width=\linewidth]{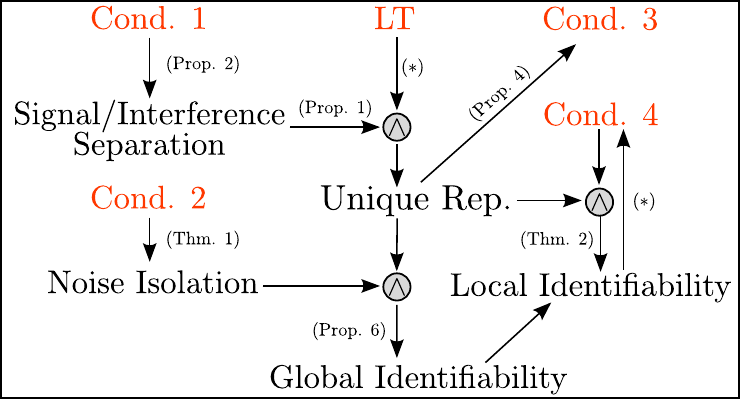}
        \caption{Diagram indicating the relationships between the conditions imposed on the channel sizes and factor numbers and their implications for the different aspects of MFA identification. Directions marked with asterisks were obtained in \cite{ramirez2020}. ``LT" refers to the lower-triangular structure in $\mathbb{A}^*_{L}$ and $\mathbb{B}_L^*$}
        \label{fig:conditiondiag}
    \end{figure}

    \subsubsection{Connections to Previous Results in FA}
     The question of identifiability in single-channel factor analysis was an area of interest for many years. Based on an equation-counting argument, Ledermann \cite{Ledermann1937} provided a heuristic for the maximum number of factors (known as the Ledermann bound). Anderson \cite{anderson1956} set out a simple sufficient condition for identifiability by requiring that the loading matrix contain two disjoint full-rank submatrices after removal of a single row, which will be generically satisfied when the number of common factors is less than half the number of observations. Later, Shapiro \cite{shapiro1985} demonstrated that the Ledermann bound was generically sufficient for \emph{local} identifiability, providing a maximal number of common factors which approaches $n$ rather than $n/2$. Shapiro also conjectured that this identifiability threshold also held for \emph{global} identifiability, which was later shown to be correct by Bekker and ten Berge \cite{bekker1997}.
    
    In this paper, analogous results for multi-channel factor analysis are obtained. The discussion on the identifiability of MFA was opened by \cite[Sec. III]{ramirez2020} and the importance of the problem was recognized. In particular, the authors provided the restrictions which remove rotation invariance used in Proposition \ref{prop:LT}, and give \emph{necessary} conditions on the factor numbers by an equation-counting argument. The authors conjectured that, as in single-channel FA, the threshold obtained by counting knowns and unknowns should be \emph{sufficient} for identifiability, which here is Condition \ref{cond:fac_count}. With the addition of conditions to ensure separability of the signal and interference covariances, which was not treated in \cite{ramirez2020}, this conjecture is verified for \emph{local} identifiability. For \emph{global} identifiability, we instead require the slightly stronger Condition \ref{cond:phisep}.  

    \subsubsection{Identifiability in Related Multi-Channel Methods}
    {Just as classical FA has deep connections with other multivariate statistical methods such as Canonical Correlation Analysis (CCA) \cite{rao1955}, MFA can be related to other techniques for multi-channel data analysis. In particular, CCA (both in classical two-view form \cite{Ibrahim2019} and in the generalized multi-view form \cite{Sorenson2021}) has been successfully used to find latent structures shared across multiple channels, which is also an objective of MFA. Other multi-channel techniques such as Joint Independent Subspace Analysis (JISA) \cite{Lahat2016}, Shared Independent Component Analysis (ShICA) \cite{richard2021} and Deep CCA \cite{Galen2013} also enable discovery of latent structures under differing assumptions on the relations of the shared and unshared aspects to the multi-channel observations. Useful identifiability results for Generalized CCA \cite{Sorenson2021}, Deep CCA \cite{lyu2022},  JISA \cite{Lahat2019}, and ShICA \cite{richard2021} have been obtained through a variety of approaches.}

    {
    However, the MFA-specific identifiability results obtained in this paper are not direct consequences of previous results, and differ in two ways. First, Proposition \ref{prop:genericrotdet} for generic separability of the signal and interference covariances does \emph{not} require that the number of factors $r_0 +r_c$ affecting channel $c$ be less than the channel size, as Condition \ref{cond:rotdet} allows for  $r_0 + r_c$ to be greater than $n_c$ for some channels. If $r_0 + r_c > n_c$, then the latent factors $(\vec{f}, \vec{g}_c)$ cannot be uniquely determined from $\vec{x}_c$ alone, even in the noise-free ($\bPhi = \vec{0}$) case with $\vec{A}_c$ and $\vec{B}_c$ known. Second, the presence of noise in the observations substantially alters the identifiability problem, as unique isolation of the noise variance $\bPhi$ neither implies nor is implied by separability of the signal and interference covariances.}

    \subsection{Identifiability of $\vec{R}_{\vec{x}\vec{x}}(\vec{R}_{\vec{s}\vec{s}}, \vec{R}_{\vec{i}\vec{i}}, \bPhi)$}\label{sec:uniquedecomp}

    % \begin{figure}
    %     \centering
    %     \includegraphics[width=\linewidth]{}
    %     \caption{Source diagram of MFA with $2$ channels.  Due to problem invariances, the observation covariance is unchanged if the the loading matrix $[\vec{A}\ \vec{B}]$ is right multiplied by orthogonal $\vec{Q}$, patterned as  \eqref{eq:Q_pattern}, and the factors $[\vec{f}^{\tran}\ \vec{g}^{\tran}]^{\tran}$ are left-multiplied by  $\vec{Q}^\tran$. If $\vec{Q}$ is block-diagonal, as ensured by Propositions \ref{prop:submatrices_rotdet} and \ref{prop:genericrotdet}, the dotted paths depicted in grey will be zero and $\vec{A}\vec{A}^{\tran}$ and $\vec{B}\vec{B}^{\tran}$ can be uniquely separated.}
    %     \label{fig:scheme_diag_rot_det}
    % \end{figure}

    \subsubsection{Separation of Signal and Interference Covariances}
    
    The first subproblem of MFA identifiability involves the noise-free part of \eqref{eq:covariancemodel}, namely the combined signal-and-interference covariance $\vec{R}_{\vec{s}\vec{s}}+\vec{R}_{\vec{i}\vec{i}}$. To treat the first subproblem, it is convenient to work with the loading parameterization $\vec{R}_{\vec{x}\vec{x}}(\vec{A}, \vec{B}, \bPhi)$, then relate back to $\vec{R}_{\vec{x}\vec{x}}(\vec{R}_{\vec{s}\vec{s}}, \vec{R}_{\vec{i}\vec{i}}, \bPhi)$. 
    
    Define two equivalence relations on $\mathbb{A} \times \mathbb{B}$ by
    \begin{equation*}
    \begin{aligned}
    (\vec{A}, \vec{B}) \sim_{1} (\tilde{\vec{A}}, \tilde{\vec{B}})\ &\Longleftrightarrow\ \vec{A}\vec{A}^{\tran} + \vec{B}\vec{B}^{\tran} = \tilde{\vec{A}}\tilde{\vec{A}}^{\tran} + \tilde{\vec{B}}\tilde{\vec{B}}^{\tran} \\
     (\vec{A}, \vec{B}) \sim_{2} (\tilde{\vec{A}}, \tilde{\vec{B}})\ &\Longleftrightarrow\ \vec{A}\vec{A}^{\tran}=\tilde{\vec{A}}\tilde{\vec{A}}^{\tran}\ \mbox{and}\ \vec{B}\vec{B}^{\tran}=\tilde{\vec{B}}\tilde{\vec{B}}^{\tran}.
    \end{aligned}
    \end{equation*}
    Under $\sim_1$, two pairs of loading matrices are equivalent if they correspond to the same sum $\vec{R}_{\vec{s}\vec{s}} + \vec{R}_{\vec{i}\vec{i}}$, while under $\sim_2$, two pairs are equivalent if they correspond to the same tuple $(\vec{R}_{\vec{s}\vec{s}}, \vec{R}_{\vec{i}\vec{i}})$. It is clear that $\sim_2$ is a \emph{finer} relation than $\sim_1$, and by definition $\vec{R}_{\vec{s}\vec{s}}  + \vec{R}_{\vec{i}\vec{i}}$ can be uniquely separated into $\vec{R}_{\vec{s}\vec{s}}$ and $\vec{R}_{\vec{i}\vec{i}}$ iff the $\sim_2$-equivalence class associated with $\vec{R}_{\vec{s}\vec{s}}  + \vec{R}_{\vec{i}\vec{i}}$ contains a single $\sim_1$-equivalence class. 

    More can be said about the structure of these $\sim_1$ and $\sim_2$ equivalence classes.  Application of a well-known result (see, e.g., \cite[Lemma 5.1]{Anderson1959}) shows that for pairs $(\vec{A}, \vec{B}), (\tilde{\vec{A}}, \tilde{\vec{B}}) \in \mathbb{A} \times \mathbb{B}$ with  $(\vec{A}, \vec{B}) \sim_1 (\tilde{\vec{A}}, \tilde{\vec{B}})$ we must have
    \[
    \big[ \tilde{\vec{A}}\ \tilde{\vec{B}} \big] = \big[ \vec{A}\  \vec{B}\big ] \vec{Q},
    \]
    for some orthogonal matrix $\vec{Q} \in \mathrm{O}(r_0+r)$. That is, any two $\sim_1$-equivalent pairs are such that the combined loading matrices $[\vec{A}\ \vec{B}]$ and $[\tilde{\vec{A}}\ \tilde{\vec{B}}]$ represent the same map under different orthonormal bases for the  combined factor space of both common and {distinct} factors. Similarly, if $(\vec{A}, \vec{B}) \sim_2 (\tilde{\vec{A}}, \tilde{\vec{B}})$ then $\tilde{\vec{A}} = \vec{A}\vec{Q}_{00}$ for $\vec{Q}_{00} \in \mathrm{O}(r_0)$ and $\tilde{\vec{B}}_{c} = \tilde{\vec{B}}_c \vec{Q}_{cc}$ with $\vec{Q}_{cc} \in \mathrm{O}(r_c)$ for each $c=1,\dots, C$. Partitioning $\vec{Q}$ as,
    \vspace*{-1em}
    \begin{equation}\label{eq:Q_pattern}
       \vec{Q} = \begin{bNiceMatrix}[first-row,last-col=5,nullify-dots]
            r_0 & r_1 & \Cdots & r_C &  \\
            \vec{Q}_{00} & \vec{Q}_{01} & \dots & \vec{Q}_{0C} & r_0 \\
            \vec{Q}_{10} & \vec{Q}_{11} & \dots & \vec{Q}_{1C} & r_1\\
            \vdots& \vdots &  & \vdots  & \Vdots \\
            \vec{Q}_{C0} & \vec{Q}_{C1} & \dots & \vec{Q}_{CC}& r_C\\
        \end{bNiceMatrix}
    \end{equation}
    then $(\vec{A}, \vec{B}) \sim_2 (\tilde{\vec{A}}, \tilde{\vec{B}})$ iff $[\tilde{\vec{A}}\ \tilde{\vec{B}}]$ can be obtained from $[\vec{A}\ \vec{B}]$ by right-multiplication by a \emph{block-diagonal} $\vec{Q}$.

    One distinction between multi-channel FA and single-channel FA with $r_0 + r$ total factors is that not all products $[\vec{A}\ \vec{B}]\vec{Q}$ will correspond to a valid pair of MFA loading matrices $(\tilde{\vec{A}}, \tilde{\vec{B}}) \in \mathbb{A} \times \mathbb{B}$. This is due to the structural zeros of $\mathbb{B}$. If $\vec{Q}$ is block diagonal, the product $[\vec{A}\ \vec{B}]\vec{Q}$ will preserve the structural zeros in $\vec{B}$ and correspond to a valid member of $\mathbb{A} \times \mathbb{B}$. However, the converse is not true without further restrictions on $\vec{n}$ and $\vec{r}$, as non-block-diagonal $\vec{Q}$ which preserve the structural zeros in $\vec{B}$ can exist even in non-degenerate cases. A example of this is given in the Supplementary Materials.

    With the above equivalence relations $\sim_1$ and $\sim_2$, existence of such a non-block-diagonal $\vec{Q}$ occurs exactly when $\vec{R}_{\vec{s}\vec{s}} + \vec{R}_{\vec{i}\vec{i}}$ cannot be uniquely separated. 
    % In Figure \ref{fig:scheme_diag_rot_det}, the dashed lines indicate the connections which must be ruled out for separability. 
    The following proposition gives sufficient conditions on $(\vec{A}, \vec{B})$ so that all $\vec{Q}$ which preserve the structural zeros of $\vec{B}$ are block-diagonal.

    \begin{proposition}\label{prop:submatrices_rotdet}
     For $\vec{A} \in \mathbb{A}$ and $\vec{B}\in\mathbb{B}$, suppose that, after possibly renumbering the channels, the submatrices $\vec{M}_1, \dots, \vec{M}_{C}$ of $[\vec{A}\ \vec{B}]$ have Full Column Rank (FCR), where $\vec{M}_c$ is
    \begin{equation}\label{eq:Mc_def}
    \vec{M}_c = \begin{bmatrix}
    \vec{A}_{<c} & \vec{B}_{<c} \\
    \vec{A}_{>c} & \vec{0}
    \end{bmatrix},
    \ 
    \vec{M}_1 = [\vec{A}^{\tran}_{2}\ \dots\ \vec{A}^{\tran}_{C}]^{\tran}
    \end{equation}
    with $\vec{A}_{<c} = [\vec{A}^{\tran}_1 \dots \vec{A}_{c-1}^{\tran}]^{\tran}$, $\vec{A}_{>c} = [\vec{A}^{\tran}_{c+1} \dots \vec{A}_{C}^{\tran}]^{\tran}$ and $\vec{B}_{<c} = \blkdiag(\vec{B}_1, \dots, \vec{B}_{c-1})$.
    Then any $\vec{Q} \in \mathrm{O}(r_0 + r)$ such that $[\vec{A}\  \vec{B}]\vec{Q} = [\tilde{\vec{A}}\ \tilde{\vec{B}}]$ for some $(\tilde{\vec{A}}, \tilde{\vec{B}}) \in \mathbb{A} \times \mathbb{B}$ must have $\vec{Q}_{ij} = \vec{0}$ for all $i\neq j$ when partitioned as \eqref{eq:Q_pattern}.
    \end{proposition}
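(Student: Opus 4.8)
The plan is to expand the identity $[\vec{A}\ \vec{B}]\vec{Q} = [\tilde{\vec{A}}\ \tilde{\vec{B}}]$ one block-column at a time and read off the constraints that the structural zeros of $\tilde{\vec{B}} \in \mathbb{B}$ impose on the blocks of $\vec{Q}$ in the partition \eqref{eq:Q_pattern}. Restricting the $k$th distinct-factor block of $[\vec{A}\ \vec{B}]\vec{Q}$ to its channel-$c$ rows gives $\vec{A}_c\vec{Q}_{0k} + \vec{B}_c\vec{Q}_{ck}$, and requiring this to vanish for every $c\neq k$ yields
\[
\vec{A}_c \vec{Q}_{0k} + \vec{B}_c \vec{Q}_{ck} = \vec{0}, \qquad c \neq k,\ k = 1, \dots, C,
\]
while the channel-$k$ rows (which define the admissible block $\tilde{\vec{B}}_k$) and the whole $\tilde{\vec{A}}$ block are unconstrained. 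These are the only conditions to work with, and the goal is to show they force every off-diagonal $\vec{Q}_{ij}$ to vanish.

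Next I would compare this system with the kernel condition of $\vec{M}_k$. From \eqref{eq:Mc_def}, $\vec{M}_k\vec{v} = \vec{0}$ with $\vec{v}$ stacking $\vec{v}_0$ and $\vec{v}_1,\dots,\vec{v}_{k-1}$ is equivalent to $\vec{A}_c\vec{v}_0 + \vec{B}_c\vec{v}_c = \vec{0}$ for $c<k$ together with $\vec{A}_c\vec{v}_0 = \vec{0}$ for $c>k$. Hence \emph{if} one knew $\vec{Q}_{ck} = \vec{0}$ for all $c>k$, the displayed constraints (with $\vec{v}_0 = \vec{Q}_{0k}$, $\vec{v}_c = \vec{Q}_{ck}$) would coincide exactly with $\vec{M}_k\vec{v}=\vec{0}$, and FCR of $\vec{M}_k$ would force $\vec{Q}_{0k}=\vec{0}$ and $\vec{Q}_{ck}=\vec{0}$ for $c<k$. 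The crux of the proof is precisely this gap: for $c>k$ the true constraint is the weaker $\vec{A}_c\vec{Q}_{0k}+\vec{B}_c\vec{Q}_{ck}=\vec{0}$ rather than $\vec{A}_c\vec{Q}_{0k}=\vec{0}$, so FCR of $\vec{M}_k$ by itself is \emph{not} enough; the orthogonality of $\vec{Q}$ must be used to clear the $c>k$ blocks first.

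I would therefore argue by reverse induction on $k = C, C-1, \dots, 1$, carrying the hypothesis that every block-column $k'>k$ is \emph{pure} (that is, $\vec{Q}_{0k'}=\vec{0}$ and $\vec{Q}_{ck'}=\vec{0}$ for $c\neq k'$) with $\vec{Q}_{k'k'}\in\mathrm{O}(r_{k'})$, hence invertible. In the step for $k$, for each $c>k$ the $(c,k)$ block of $\vec{Q}^{\tran}\vec{Q}=\vec{I}$ collapses, by purity of block-column $c$, to $\vec{Q}_{cc}^{\tran}\vec{Q}_{ck}=\vec{0}$, so $\vec{Q}_{ck}=\vec{0}$. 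With the $c>k$ blocks cleared, the structural constraints become $\vec{M}_k\vec{v}=\vec{0}$ where $\vec{v}$ stacks $\vec{Q}_{0k}$ and the $\vec{Q}_{ck}$, $c<k$ (column by column); FCR of $\vec{M}_k$ then gives $\vec{Q}_{0k}=\vec{0}$ and $\vec{Q}_{ck}=\vec{0}$ for $c<k$, so block-column $k$ is pure and $\vec{Q}_{kk}^{\tran}\vec{Q}_{kk}=\vec{I}_{r_k}$. The base case $k=C$ is the same computation with no $c>C$ blocks to clear, so $\vec{M}_C=[\vec{A}_{<C}\ \vec{B}_{<C}]$ is applied directly.

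After the induction all block-columns $1,\dots,C$ are pure, and only the blocks $\vec{Q}_{k0}$, $k\geq 1$, remain. For each, the $(k,0)$ block of $\vec{Q}^{\tran}\vec{Q}=\vec{I}$ reduces by purity of block-column $k$ to $\vec{Q}_{kk}^{\tran}\vec{Q}_{k0}=\vec{0}$, and invertibility of $\vec{Q}_{kk}$ forces $\vec{Q}_{k0}=\vec{0}$. Thus every off-diagonal block of $\vec{Q}$ vanishes. The one genuinely delicate point is the choice of ordering: the asymmetric ($<c$ versus $>c$) form of $\vec{M}_c$ in \eqref{eq:Mc_def} is exactly what lets the already-purified later block-columns supply, through orthogonality, the extra zeros needed to upgrade the weak structural constraints into the full $\vec{M}_k$ kernel condition, which is why neither orthogonality nor the FCR hypotheses suffice in isolation.
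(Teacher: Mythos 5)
Your proof is correct and takes essentially the same route as the paper's: from the structural zeros of $\tilde{\vec{B}}$ you extract $\vec{A}_c\vec{Q}_{0k} + \vec{B}_c\vec{Q}_{ck} = \vec{0}$ for $c \neq k$, then run a channel-by-channel induction (in the reverse order forced by the asymmetric $<c$ versus $>c$ structure of $\vec{M}_c$) in which orthogonality of $\vec{Q}$ clears the blocks $\vec{Q}_{ck}$, $c>k$, so that FCR of $\vec{M}_k$ applies to annihilate $\vec{Q}_{0k}$ and $\vec{Q}_{ck}$, $c<k$, with the leftover blocks $\vec{Q}_{k0}$ killed by orthogonality against the purified block-columns at the end. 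Your explicit identification of why neither the FCR hypotheses nor orthogonality suffices alone is precisely the mechanism the definition of $\vec{M}_c$ is designed around.
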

     \begin{proof}
     {See Supplementary Materials for proof.}
     \end{proof}
    If the $\sim_1$-equivalence class of $\vec{R}_{\vec{s}\vec{s}} + \vec{R}_{\vec{i}\vec{i}}$ contains any $(\vec{A}, \vec{B})$ which satisfy the condition of Proposition \ref{prop:submatrices_rotdet}, then all elements in the $\sim_1$-equivalence class belong to the same $\sim_2$-equivalence class and so $\vec{R}_{\vec{s}\vec{s}} + \vec{R}_{\vec{i}\vec{i}}$ can be uniquely separated. This can be seen by letting $(\vec{A}, \vec{B})$ satisfy the above condition, so for any $(\tilde{\vec{A}}, \tilde{\vec{B}}) \sim_{2} (\vec{A}, \vec{B})$ the pairs must in fact be related by a block-diagonal orthogonal transformation. Hence, the $\sim_2$ and $\sim_1$ equivalence classes collapse by transitivity. 

    The following condition on $\vec{n}$ and $\vec{r}$ implies that the hypothesis of Proposition \ref{prop:submatrices_rotdet} is generically satisfied on $\mathbb{A} \times \mathbb{B}$, and therefore $\vec{R}_{\vec{s}\vec{s}} + \vec{R}_{\vec{i}\vec{i}}$  can be uniquely separated into $(\vec{R}_{\vec{s}\vec{s}}, \vec{R}_{\vec{i}\vec{i}})$. 
    \begin{condition}\label{cond:rotdet}
        The channel sizes $n_1, \dots, n_C$ and factor numbers $r_0, \dots, r_C$ satisfy $r_{c} \leq n_{c}$ and
        \begin{equation}
        \label{eq:rot_det1}
        \begin{aligned}
        r_0 + r_{<c} &\leq n - n_{c},
        \end{aligned}
        \end{equation}
        for all $c=1,\dots, C$.
    \end{condition}
    {If $r_0 + r_c \leq n_c$ for all channels, then Condition \ref{cond:rotdet} is  satisfied for any ordering of the channels. This follows as $r_0 \leq \min_{c=1,\dots,C} \{n_c - r_c \}$ implies $r_0 \leq (n_1 - r_1) + \dots (n_{c-1} - r_{c-1}) + n_{c+1} + \dots n_{C}$, and so \eqref{eq:rot_det1} is satisfied for all $c=1,\dots,C$. Condition \ref{cond:rotdet} depends on channel ordering, but its use in the following proposition is not order dependent. }
    \begin{proposition}\emph{(Generic Separability of $\vec{R}_{\vec{s}\vec{s}} + \vec{R}_{\vec{i}\vec{i}}$)}\label{prop:genericrotdet}
        If Condition \ref{cond:rotdet} is satisfied for some permutation of the channel numbers, then the subset of $\mathbb{A} \times \mathbb{B}$ which does not satisfy the condition of Proposition \ref{prop:submatrices_rotdet} is null.
    \end{proposition}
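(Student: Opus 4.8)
The plan is to reduce the statement to a standard fact about real algebraic sets: the zero set of a polynomial that is not identically zero has Lebesgue measure zero. Since $\mathbb{A}\times\mathbb{B}$ is a finite-dimensional real vector space (isomorphic to $\mathbb{R}^{d}$ with $d = nr_0 + \sum_{c=1}^{C} n_c r_c$), nullity is well-defined, and it is invariant under the channel renumbering permitted in Proposition \ref{prop:submatrices_rotdet} since a permutation of channels acts as a linear automorphism. I would first fix the permutation for which Condition \ref{cond:rotdet} holds and define the $\vec{M}_c$ relative to that ordering. The subset in question, namely the set failing the hypothesis of Proposition \ref{prop:submatrices_rotdet}, is $\bigcup_{c=1}^{C} N_c$ where $N_c = \{(\vec{A},\vec{B}) : \vec{M}_c \text{ is not FCR}\}$, so it suffices to show each $N_c$ is null.

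First I would observe that the entries of $\vec{M}_c$ are linear coordinate functions of $(\vec{A},\vec{B})$, and that $\vec{M}_c$ is tall, having $n-n_c$ rows and $r_0 + r_{<c} \leq n-n_c$ columns by \eqref{eq:rot_det1}. Hence $\vec{M}_c$ fails to have full column rank exactly when all its $(r_0+r_{<c})\times(r_0+r_{<c})$ minors vanish simultaneously; each such minor is a polynomial in the coordinates of $(\vec{A},\vec{B})$, so $N_c$ is a real algebraic variety. If even one of these minors is not the zero polynomial, then $N_c$ is contained in that minor's zero set and is therefore null, and $\bigcup_{c=1}^{C} N_c$ is then a finite union of null sets, hence null.

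The key step, and the only place Condition \ref{cond:rotdet} is genuinely used, is to exhibit for each $c$ a single $(\vec{A},\vec{B})\in\mathbb{A}\times\mathbb{B}$ at which $\vec{M}_c$ has full column rank, thereby certifying that some maximal minor of $\vec{M}_c$ is a nonzero polynomial. The construction is combinatorial. The $r_{<c}$ columns coming from $\vec{B}_{<c} = \blkdiag(\vec{B}_1,\dots,\vec{B}_{c-1})$ are supported only on the top $n_{<c}$ rows; the bounds $r_k \leq n_k$ let me choose each $\vec{B}_k$ so that its columns are distinct standard basis vectors within channel $k$'s rows, occupying $r_{<c}$ of the top rows. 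For the $r_0$ columns of $\vec{A}$ (restricted to channels $\neq c$, i.e.\ forming $\vec{A}_{<c}$ and $\vec{A}_{>c}$), I place distinct standard basis vectors on the remaining free rows: the $n_{<c}-r_{<c}$ unused top rows together with the $n_{>c}$ bottom rows, totalling $n - n_c - r_{<c}$ rows. The inequality $r_0 + r_{<c}\leq n-n_c$ gives $r_0 \leq n - n_c - r_{<c}$, so this placement is possible, and all $r_0 + r_{<c}$ columns of $\vec{M}_c$ are then distinct standard basis vectors and hence linearly independent.

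I expect the main obstacle to be precisely this construction, specifically verifying that the two inequalities in Condition \ref{cond:rotdet} translate exactly into the needed row budget: $r_c \leq n_c$ supplies independent columns from each distinct-factor block, while $r_0 + r_{<c}\leq n - n_c$ guarantees enough leftover rows to place the common-factor columns disjointly from them. Once a full-rank point is produced for each $c$, the measure-zero conclusion follows from the standard polynomial fact with no further case analysis.
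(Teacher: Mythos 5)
Your proposal is correct and follows essentially the same route as the paper's proof (deferred to the Supplementary Materials, but whose character is confirmed by the remark in the proof of Theorem 2 that the failure set is ``closed and null''): for each $c$ the set where $\vec{M}_c$ fails FCR is the common zero locus of its maximal minors, hence an algebraic variety, and Condition \ref{cond:rotdet} enters exactly as you describe, via a standard-basis witness point certifying that some maximal minor is a nonzero polynomial, so each $N_c$ is null. The only nit is that the set failing the hypothesis of Proposition \ref{prop:submatrices_rotdet} (which quantifies over \emph{all} renumberings) is contained in, rather than equal to, $\bigcup_{c} N_c$ for your fixed ordering, but containment is all the nullity conclusion requires.
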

     \begin{proof}
     {See Supplementary Materials for proof.}
     \end{proof}

    \subsubsection{Isolation of Noise Variances}
    
    For single-channel FA identifiability, the main criterion is $\phi$ defined for $n,r,\rho \in \mathbb{N}$ as
    \[
    \phi(n, r, \rho) = \frac{r(r+1)}{2} - \frac{\rho(\rho+1)}{2} - \rho(r-\rho) - n.
    \]
    The threshold for global identifiability of single-channel FA \cite{bekker1997} is then $\phi(n, r, 2r - n) > 0$. For MFA, the analogous criterion function $\psi(\vec{n}, \vec{r}, \brho)$ is
    \begin{equation}\label{eq:MFAidentcritfunc}
    \psi(\vec{n}, \vec{r}, \brho) = n + \phi(n\mkern-3mu, r_0, \rho_0) + \sum_{c=1}^C \phi(n_c, r_c, \rho_c) + r_c (r_0 - \rho_0),
    \end{equation}
    with non-negative integer vector $\brho=[\rho_0, \rho_1, \dots, \rho_C]$. The criterion $\psi$ depends on all of the factor numbers $r_0, r_1, \dots, r_C$ and not a function of the total number of factors alone.

    %As for single-channel FA, positive $\psi$ will ensure that $\bPhi$ can be uniquely isolated from $\vec{R}_{\vec{s}\vec{s}} + \vec{R}_{\vec{i}\vec{i}}$. However, for multi-channel factor analysis, positivity of the criterion $\psi$ must hold for all members of a class of possible \emph{reductions} to MFA models with smaller channel sizes and factor numbers. This requirement is induced by the channel structure and is different from the single-channel case, where the positivity of $\phi$ over reduced models follows from the unreduced case. The following condition captures the required positivity of $\psi$ over the appropriate set of reductions. 

    \begin{condition}\label{cond:phisep}
        The channel sizes $\vec{n}$ and factor numbers $\vec{r}$ satisfy $r_c \leq  n_c$ and $r_0 + r \leq n$. In addition, let
        $\psi^*$ be the smallest criterion value over possible MFA reductions,  
    \begin{equation}\label{eq:phisep1}
       \psi^* = \min_{(\vec{n}', \vec{r}', \brho) \in M} \psi(\vec{n}', \vec{r}', \brho) 
    \end{equation}
    where $M \subset \mathbb{N}^{3C+2}$ contains all non-negative $(\vec{n}', \vec{r}', \brho)$ satisfying
    \begin{equation}\label{eq:phisep2}
    \begin{aligned}
        n_c' &\leq n_c, \ c = 1,\dots, C,\\
        r_c' &= (r_c - (n_c - n'_c))_{+}, \ c=1,\dots, C, \\
        r'_0 &=  [r_0 - \textstyle \sum_{c=1}^{C} (n_c - n'_c - r_c)_{+}]_{+}, \\
        \rho_c &\leq \min\{r_c', 2(r_0' + r_c') - n'_c\},\ c = 1,\dots, C, \\
        \rho_0 &= \min\{r_0, 2r_0' + \textstyle \sum_{c=1}^C 2r'_c - \rho_c - n'_c\},  
    \end{aligned}
    \end{equation}
    and $\sum_{c=1}^C n'_c > 0$. Either $\psi^* > 0$ or $M$ is empty.
     \end{condition}

    Similarly to \cite{bekker1997}, $(\vec{A}, \vec{B}, \bPhi)$ is said to have globally identified noise variances if $\vec{R}_{\vec{x}\vec{x}}(\vec{A}, \vec{B}, \bPhi) = \vec{R}_{\vec{x}\vec{x}}(\tilde{\vec{A}}, \tilde{\vec{B}}, \tilde{\bPhi})$ implies that $\bPhi = \tilde{\bPhi}$. The following theorem establishes that Condition \ref{cond:phisep} is sufficient for  $(\vec{A}, \vec{B}, \bPhi)$ to generically have globally identifiable noise variances and hence that the noise variances can be uniquely isolated from $\vec{A}\vec{A}^{\tran} + \vec{B}\vec{B}^{\tran}$.
	\begin{theorem}\emph{(Separation of $\bPhi$)}\label{thm:phi_separation}
    If Condition \ref{cond:phisep} is met, $(\vec{A}, \vec{B}, \bPhi)$ has globally identified noise variances except for a null subset of $\mathbb{A} \times \mathbb{B} \times \mathrm{Diag}_{\geq 0}(n)$.
	\end{theorem}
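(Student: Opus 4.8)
The plan is to follow the generic global identifiability strategy of Bekker and ten Berge \cite{bekker1997}, adapted to the block-structured MFA covariance, and to exhibit the non-identified parameters as a finite union of proper subvarieties, each of which is null. First I would set up the difference equation. If $(\vec{A}, \vec{B}, \bPhi)$ fails to have globally identified noise variances, there is an alternative $(\tilde{\vec{A}}, \tilde{\vec{B}}, \tilde{\bPhi})$ with $\vec{R}_{\vec{x}\vec{x}}(\vec{A}, \vec{B}, \bPhi) = \vec{R}_{\vec{x}\vec{x}}(\tilde{\vec{A}}, \tilde{\vec{B}}, \tilde{\bPhi})$ but $\vec{D} \equiv \tilde{\bPhi} - \bPhi \neq \vec{0}$, so that
\[
\vec{A}\vec{A}^{\tran} + \vec{B}\vec{B}^{\tran} + \vec{D} = \tilde{\vec{A}}\tilde{\vec{A}}^{\tran} + \tilde{\vec{B}}\tilde{\vec{B}}^{\tran};
\]
two validly structured signal-plus-interference matrices differ by a nonzero diagonal. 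Since rank-deficient loadings form a null set, I may assume all of $\vec{A}, \vec{B}_c, \tilde{\vec{A}}, \tilde{\vec{B}}_c$ have full column rank.

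Second, I would reduce to a \emph{full-support} subproblem. Let $S$ be the support of $\vec{D}$; since $\vec{D}$ is diagonal the two decompositions agree on the block $S^c \times S^c$ and on the cross blocks, so the discrepancy is confined to the principal submatrix indexed by $S$. Restricting there keeps $n_c' = |S \cap \text{channel } c|$ observations per channel, and the matching forced on the zero-support indices pins down degrees of freedom of each factor, reducing the effective factor numbers to exactly $r_c' = (r_c - (n_c - n_c'))_{+}$ and $r_0' = [r_0 - \sum_{c}(n_c - n_c' - r_c)_{+}]_{+}$ as in \eqref{eq:phisep2}. This exhibits the original non-identified configuration as a full-support non-identified configuration of the reduced model $(\vec{n}', \vec{r}')$, so it suffices to treat the full-support case for every reduction enumerated in $M$.

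Third, and this is the technical heart, I would carry out the generic dimension count for the full-support case. Write $\rho_0$ for the dimension of the intersection of the common signal column spaces of the two decompositions and $\rho_c$ for the channel-$c$ distinct intersections. Full support forces these intersections to be small: as in the single-channel case, requiring the diagonal difference to have full rank $n_c'$ on each block pins $\brho$ into the admissible ranges of \eqref{eq:phisep2}. I would parametrize the stratum of non-identified pairs with a fixed profile $\brho$ by a polynomial map from a space consisting of the alternative decomposition together with the intersection data, impose the off-diagonal vanishing conditions, and show the domain has dimension equal to the reduced parameter dimension minus $\psi(\vec{n}', \vec{r}', \brho)$; the MFA-specific term $r_c(r_0 - \rho_0)$ reflects the extra vanishing constraints coupling the unmatched common directions to each block's distinct loadings. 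Hence whenever $\psi(\vec{n}', \vec{r}', \brho) > 0$ the image is null.

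Finally, since $\brho$ and the reduction $(\vec{n}', \vec{r}')$ range over finite sets, the non-identified set is a finite union of null sets. Condition \ref{cond:phisep} guarantees $\psi^* > 0$ (or $M = \emptyset$), so every stratum is null; appending the free diagonal $\bPhi$, whose only restrictions $\bPhi \succeq \vec{0}$ and $\bPhi - \vec{D} \succeq \vec{0}$ are open, preserves nullity, giving a null non-identified set in $\mathbb{A} \times \mathbb{B} \times \mathrm{Diag}_{\geq 0}(n)$. I expect the main obstacle to lie in the third step: computing the codimension contributed by the off-diagonal constraints across the block structure so that it equals exactly $\psi$, and rigorously justifying the second-step bookkeeping so that every support pattern of $\vec{D}$ is captured by some admissible member of $M$ with the stated effective factor numbers.
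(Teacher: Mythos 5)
Your skeleton matches the paper's actual proof almost step for step---difference equation, partition by the support of the diagonal difference $\bvarphi$, reduction of the zero-support case to a full-support case with exactly the reduced sizes of \eqref{eq:phisep2}, and a per-stratum generic dimension count against $\psi$---but the two steps you defer are precisely where the content lies, and one of them is misstated as written. Restricting to the principal submatrix indexed by the support $S$ does \emph{not} by itself reduce the factor numbers: the restricted decompositions still have $r_0$ and $r_c$ columns each. The actual mechanism is the generalized Schur complement with respect to the zero-support block: equality on $S^c \times S^c$ and on the cross blocks gives $\vec{K}\setminus\vec{K}_{11} = \tilde{\vec{K}}\setminus\tilde{\vec{K}}_{11} + \bvarphi'$, and one must then prove that the Schur complement of a noise-free MFA covariance is again a noise-free MFA covariance with the stated reduced factor numbers. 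The paper does this with an explicit orthonormal basis of $\kernel([\vec{A}_{11}\ \vec{B}_{11}])$ assembled into a block-lower-triangular transformation so that the block-diagonal structure of the interference loadings is preserved---without this, the reduced configuration need not lie in the MFA class at all. That step also needs more than ``rank-deficient loadings form a null set'': to get $\dim\kernel(\vec{B}_{11}) = (r_1 - j_1)_{+}$ exactly, every submatrix of the channel loadings must have maximal rank (Lemma \ref{lem:maxranksubload}), a strictly stronger generic condition than full column rank of $[\vec{A}\ \vec{B}]$.

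Second, your stratification data differ from the paper's in a way you would have to reconcile: you take $\brho$ to be intersection dimensions of the two decompositions' column spaces, whereas the paper takes $\brho$ to be the rank profile of $\vec{H} = \vec{I} + [\vec{A}\ \vec{B}]^{\tran}\bvarphi^{-1}[\vec{A}\ \vec{B}]$, whose overall and lower-right block ranks are bounded through Schur rank additivity \eqref{eq:HrankUB}; those bounds are what pin $\brho$ into the admissible ranges of \eqref{eq:phisep2}, and it is not evident that your intersection-based $\brho$ obeys the same ranges. Moreover, nullity of each stratum is not a bare polynomial-image count in the paper: it shows $\vec{H}$ is a smooth submersion (injectivity of $d\vec{H}$ on a structured lower-triangular subspace via a reversed Cholesky factor), realizes each low-rank stratum $\mathcal{S}_{\alpha}$ as an embedded submanifold via low-rank matrix completion, and invokes transversality so that $\vec{H}^{-1}(\mathcal{S}_{\alpha})$ has codimension equal to $\mathop{\mathrm{codim}}\mathcal{S}_{\alpha}$, after which \eqref{eq:dimcrit} is exactly $\psi(\vec{n},\vec{r},\brho) > 0$. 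You would need equivalents of all three ingredients---or a verification that your proposed parametrization of each stratum is generically of maximal rank---before ``the image is null'' is legitimate, along with the monotonicity of $\psi$ in $\rho_0$ used to check only extreme values of $\brho$. (Your closing Fubini step is fine, though note $\bPhi \succeq \vec{0}$ is a closed condition, not open; this is harmless because dropping the positivity constraints only enlarges the non-identified set, which is how the paper defines $\mathcal{U}$.)
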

      \begin{proof}
     {See Appendix for proof.}
     \end{proof}

    If the channel sizes and factor numbers meet both Conditions \ref{cond:rotdet} and \ref{cond:phisep}, the results of this section imply that the observation covariance can be uniquely decomposed into the signal, interference, and noise covariances, except for a null set of degenerate cases. Therefore, interpretation of the individual components of MFA is well-founded. 

    {
    Showing that Condition \eqref{cond:phisep} is sufficient for the unique isolation of the noise variances divides into two cases. The first possibility considered is whether the observation covariance $\vec{R}_{\vec{s}\vec{s}} + \vec{R}_{\vec{i}\vec{i}} + \bPhi$ permits a second representation as $\widetilde{\vec{R}}_{\vec{s}\vec{s}} + \widetilde{\vec{R}}_{\vec{i}\vec{i}} + \widetilde{\bPhi}$ with all noise variances $[\widetilde{\bPhi}]_{ii}$ not equaling $[\bPhi]_{ii},\ i=1,\dots,n$. Such a second representation precludes unique isolation of the noise variances, and implies that $\vec{R}_{\vec{s}\vec{s}} + \vec{R}_{\vec{i}\vec{i}}$ differs by a diagonal matrix from another noise-free MFA covariance with the same factor numbers. This relationship between two noise-free MFA covariances implies the existence of a symmetric matrix $\vec{H}$ of size $r_0 + r$ with appropriate structural zeros and satisfying both an overall rank constraint and rank constraints on the main diagonal blocks, where the constraints are functions of the channel sizes and common and {distinct} factor numbers. As the overall and block rank constraints interact, the integer vector $\brho$ sets the ranks of the diagonal blocks of $\vec{H}$, where the possible values are given in Condition $\eqref{cond:phisep}$ for $n_c' = n_c$, $r_0' = r_0$ and $r_c' = r_c$ for $c=1,\dots, C$. The criterion $\psi$ can be seen \eqref{eq:dimcrit} as the effective number of constraints imposed on $\vec{H}$ minus the degrees of freedom in choosing the diagonal difference matrix. If the criterion $\psi$ is positive for all permitted $\brho$, then all block ranks lead to an overdetermined problem, so generically no such second representation exists.}

    {
    In the second case, the noise variances have $[\widetilde{\bPhi}]_{ii} = [\bPhi]_{ii}$ for $i$ in some index set $\beta$. 
    %The relevance of this case can be seen by considering a simple two-channel example with no common factors. If $\phi(n_1, r_1, 2r_1 - n_1) > 0$ and so the noise variances in the first channel can be uniquely isolated while $\phi(n_2, r_2, 2r_2 - n_2) < 0$ and so the noise variances in the second channel \emph{cannot} be uniquely isolated, then there will exist a $\widetilde{\bPhi}$ whose first $n_1$ diagonal entries equal those of $\bPhi$ while the second $n_2$ entries may differ.
    To resolve the second case, the fact that the difference $[\widetilde{\bPhi}]_{ii} - [\bPhi]_{ii} = 0$ for indices in $\beta$ yields that the associated principal submatrices of $\vec{R}_{\vec{s}\vec{s}} + \vec{R}_{\vec{i}\vec{i}}$ and $\widetilde{\vec{R}}_{\vec{s}\vec{s}} + \widetilde{\vec{R}}_{\vec{i}\vec{i}}$ are equal. Taking the generalized Schur complement in $\vec{R}_{\vec{s}\vec{s}} + \vec{R}_{\vec{i}\vec{i}}$ of this submatrix reduces the second case to the first case with smaller channel sizes $n_c'$ and factor numbers $r_0'$ and $r_c'$ for $c=1,\dots, C$, where the possible reduced channel sizes and factor numbers (for varying index sets $\beta$) are set out in Condition \ref{cond:phisep}. If $\psi$ is positive for all possible reduced $\vec{n}', \vec{r}'$ and the associated possible block ranks $\brho$, then $\bPhi$ can be generically be isolated from $\vec{R}_{\vec{s}\vec{s}} + \vec{R}_{\vec{i}\vec{i}}$.}

    \subsection{Identifiability of $\vec{R}_{\vec{x}\vec{x}}(\veta)$}
    \label{sec:paramident}
    %%%%%%%%%%%%%%%%%%%%%%%%%%%%%%%%%%%%%%%

    The previous section established conditions under which the MFA decomposition of the observation covariance into the signal, interference, and noise covariance matrices is identifiable and thus interpretable. This section provides complementary results for the identifiability of $\vec{R}_{\vec{x}\vec{x}}(\veta)$. These results are of technical relevance for the analysis of Section \ref{sec:asymptotics} as they allow standard parameter estimation theory to be applied.

    \subsubsection{Unique Representative}\label{sec:uniquerep}
    In constructing the parameterization of $\mathcal{R}(\vec{n}, \vec{r})$ in terms of $\veta$, the first step is defining the subset $\mathbb{A}^*_L \times \mathbb{B}_L^*$ of $\sim_2$-equivalence class representatives. The following proposition establishes that $\mathbb{A}^*_L \times \mathbb{B}_L^*$ contains a unique representative from each $\sim_2$-equivalence class. 

    \begin{proposition}(LT Uniqueness)\label{prop:LT}
    For any $(\vec{A}, \vec{B}) \in \mathbb{A} \times \mathbb{B}$ there is a unique  $(\tilde{\vec{A}}, \tilde{\vec{B}}) \in \mathbb{A}^*_L \times \mathbb{B}^*_{L}$ such that $(\vec{A}, \vec{B}) \sim_2 (\tilde{\vec{A}}, \tilde{\vec{B}})$.
    \end{proposition}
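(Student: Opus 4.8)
The plan is to exploit the block structure of $\sim_2$ recorded just before the statement: $(\vec{A},\vec{B}) \sim_2 (\tilde{\vec{A}},\tilde{\vec{B}})$ holds precisely when $\tilde{\vec{A}} = \vec{A}\vec{Q}_{00}$ and $\tilde{\vec{B}}_c = \vec{B}_c\vec{Q}_{cc}$ for orthogonal $\vec{Q}_{00}\in\mathrm{O}(r_0)$ and $\vec{Q}_{cc}\in\mathrm{O}(r_c)$, $c=1,\dots,C$. Since membership in $\mathbb{A}^*_L\times\mathbb{B}^*_L$ is likewise a conjunction of one constraint on $\vec{A}$ and one on each $\vec{B}_c$, the problem decouples and it suffices to prove a single-block statement: for $\vec{M}\in\mathbb{R}^{m\times k}$ with $m\ge k$ and top block $\vec{M}_1\in\mathbb{R}^{k\times k}$, among all right rotations $\vec{M}\vec{Q}$, $\vec{Q}\in\mathrm{O}(k)$, exactly one has lower-triangular top block in which each diagonal entry is positive or the corresponding column vanishes. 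I would then apply this with $(\vec{M},m,k)=(\vec{A},n,r_0)$ and with $(\vec{B}_c,n_c,r_c)$ for each channel.

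For existence I would triangularize the top block from the right. Building the columns $\vec{q}_1,\dots,\vec{q}_k$ of $\vec{Q}$ by a Gram--Schmidt sweep so that $\vec{q}_l$ is orthogonal to the first $l-1$ rows of $\vec{M}_1$ makes $\vec{M}_1\vec{Q}$ lower triangular; absorbing column signs makes each diagonal entry nonnegative, and whenever a diagonal pivot vanishes the corresponding column of $\vec{M}_1\vec{Q}$ is set to zero. This produces a representative of the $\sim_2$-class lying in $\mathbb{A}^*_L\times\mathbb{B}^*_L$.

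The core of the argument is uniqueness of the top block, which I would reduce to uniqueness of a Cholesky-type factorization. If $\tilde{\vec{A}}$ and $\tilde{\vec{A}}'$ are two admissible representatives of the same class, then $\tilde{\vec{A}}\tilde{\vec{A}}^\tran=\vec{A}\vec{A}^\tran=\tilde{\vec{A}}'\tilde{\vec{A}}'^\tran$, and restricting to the top-left $r_0\times r_0$ block gives $\tilde{\vec{A}}_1\tilde{\vec{A}}_1^\tran=\tilde{\vec{A}}_1'\tilde{\vec{A}}_1'^\tran$. I would then prove by induction on $k$ that a lower-triangular $\vec{L}$ with $\vec{L}\vec{L}^\tran$ prescribed and satisfying the ``positive diagonal or zero column'' normalization is unique: $[\vec{L}\vec{L}^\tran]_{11}=[\vec{L}]_{11}^2$ fixes $[\vec{L}]_{11}\ge 0$; if it is positive the remaining first-column entries are forced by $[\vec{L}\vec{L}^\tran]_{i1}=[\vec{L}]_{i1}[\vec{L}]_{11}$, and if it is zero the normalization forces the whole first column to vanish, after which one passes to the trailing $(k-1)\times(k-1)$ block. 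This is exactly the condition that pins down the Cholesky factor of a possibly singular Gram matrix, and it yields $\tilde{\vec{A}}_1=\tilde{\vec{A}}_1'$ and, identically, equality of every distinct-block top block.

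The step I expect to be the main obstacle is upgrading equality of the top blocks to equality of the full loadings. Writing $\tilde{\vec{A}}'=\tilde{\vec{A}}\vec{Q}$, equality of the top blocks gives only $\tilde{\vec{A}}_1\vec{Q}=\tilde{\vec{A}}_1$, i.e.\ $\vec{Q}$ stabilizes the top block. When $\tilde{\vec{A}}_1$ has full column rank this forces $\vec{Q}=\vec{I}$ and hence $\tilde{\vec{A}}=\tilde{\vec{A}}'$, which is the generic situation. The delicate case is a rank-deficient top block, equivalently the presence of a zero column, where the stabilizer of $\tilde{\vec{A}}_1$ is a nontrivial subgroup of $\mathrm{O}(r_0)$; there one must control the trailing rows $\tilde{\vec{A}}_2\vec{Q}$ under this residual orthogonal freedom, and reconciling this with the zero-column clause is where I expect the argument to require the most care.
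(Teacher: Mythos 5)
Your blockwise reduction is sound, and your treatment of the top blocks is correct and is the right engine for this statement: the normalized-Cholesky induction (positive pivot forces the column; zero pivot forces, by positive semidefiniteness of the trailing Gram matrix, the whole column of the factor to vanish) uniquely pins down $\tilde{\vec{A}}_1$ and each $\tilde{\vec{B}}_{c,1}$, and existence follows by producing this factor of $\vec{A}_1\vec{A}_1^{\tran}$ and invoking the standard fact that two matrices with the same number of columns and equal Gram matrices differ by a right orthogonal factor. One small repair: your Gram--Schmidt sweep, as phrased, does not by itself deliver the ``zero pivot $\Rightarrow$ zero column'' clause (orthogonality of $\vec{q}_l$ to the first $l-1$ rows says nothing about rows below the $l$th); route existence through the Cholesky factor and the equal-Gram-matrix fact instead. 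With a nonsingular top block, $\tilde{\vec{A}}_1\vec{Q}=\tilde{\vec{A}}_1$ forces $\vec{Q}=\vec{I}$ and the full loading is determined, as you say.

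The step you flagged as delicate is, however, a genuine gap, and no amount of care will close it in the form you set up: under the main text's definition of $\mathbb{A}^*_L$ (zero diagonal entry permits a zero column of the top block $\vec{A}_1$ only), uniqueness actually fails whenever $\rank\vec{A}_1<\rank\vec{A}$. Concretely, take $n=2$, $r_0=1$, $\vec{A}=[0\ \ 1]^{\tran}$: both $[0\ \ 1]^{\tran}$ and $[0\ \ {-1}]^{\tran}$ lie in $\mathbb{A}^*_L$ (the $1\times 1$ top block is a zero column) and generate the same $\vec{A}\vec{A}^{\tran}$. In general the stabilizer $\{\vec{Q}\in\mathrm{O}(r_0):\tilde{\vec{A}}_1\vec{Q}=\tilde{\vec{A}}_1\}$ acts as an arbitrary orthogonal map on $\kernel(\tilde{\vec{A}}_1)$, so distinct admissible representatives arise exactly when $\kernel(\tilde{\vec{A}}_1)\not\subseteq\kernel(\tilde{\vec{A}}_2)$, i.e., when the top block does not carry the full rank of the loading; and reading the normalization as requiring the $j$th column of all of $\vec{A}$ to vanish does not rescue matters, since then the same example has no representative at all and existence fails. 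So your proof is complete precisely on the set where $\rank\tilde{\vec{A}}_1=\rank\tilde{\vec{A}}$ and $\rank\tilde{\vec{B}}_{c,1}=\rank\tilde{\vec{B}}_c$ for all $c$ (in particular for nonsingular top blocks) --- a full-measure set, and the only regime in which the paper subsequently invokes the proposition (e.g., in the proof of Theorem~\ref{prop:local_identifiability}, where $[\vec{A}\ \vec{B}]$ is taken generic). To cover all of $\mathbb{A}\times\mathbb{B}$ as the statement literally demands, you would need to strengthen the canonical form to an echelon-type normalization in which a vanishing pivot moves below the diagonal (first nonzero entry of each nonzero column positive, pivot rows strictly increasing, zero columns last), rather than merely permitting a zero column of the top block; absent that, state the rank hypothesis explicitly. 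The residual orthogonal freedom you identified is not a technicality to be finessed but a true failure mode of the statement's literal form.
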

     \begin{proof}
     {See Supplementary Materials for proof.}
     \end{proof}

    This result parallels the use of LT restrictions to select a unique representative loading matrix in single-channe FA. However, for MFA, a previously unrecognized complication occurs when $\vec{R}_{\vec{s}\vec{s}} + \vec{R}_{\vec{i}\vec{i}}$ cannot be uniquely separated. In this case, there exist multiple elements of $\mathbb{A}^*_L \times \mathbb{B}^*_{L}$ which are $\sim_1$ equivalent, and so the LT restrictions do not select a unique representative from each $\sim_1$-equivalence class. 
    %An example of this non-uniqueness is in Appendix \ref{app:ident}. 
    
     Proposition \ref{prop:rot_det_nesc} applies a result for confirmatory factor analysis \cite{Bekker1986} to give a \emph{necessary} condition that the channel sizes and factor numbers must satisfy so that the LT restriction will distinguish a unique representative of the $\sim_1$-equivalence class. Condition \ref{cond:rotdet} is \emph{sufficient} for the same result.

    \begin{condition}\label{cond:rot_det_nesc}
    The  channel sizes $\vec{n}$ and the factor numbers $\vec{r}$ satisfy
    \begin{equation}\label{eq:rot_det_nesc}
    \textstyle r_0r + \sum_{c=1}^C r_c r_{<c} \leq \sum_{c=1}^{C} (n - n_c)r_c.
    \end{equation}
    \end{condition}
    
    \begin{proposition}\label{prop:rot_det_nesc}
    If almost all $(\tilde{\vec{A}}, \tilde{\vec{B}}) \in \mathbb{A}^*_L \times \mathbb{B}^*_L$ are the unique representative in $\mathbb{A}^*_L \times \mathbb{B}^*_L$ of their $\sim_1$-equivalence class, then Condition \ref{cond:rot_det_nesc} is satisfied. Conversely, if Condition \ref{cond:rotdet} is satisfied, then almost all $(\tilde{\vec{A}}, \tilde{\vec{B}})$ are the unique representative in $\mathbb{A}^*_L \times \mathbb{B}^*_L$ of their $\sim_1$-equivalence class.
    \end{proposition}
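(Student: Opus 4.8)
The plan is to prove both directions by studying the quadratic map $\Gamma:\mathbb{A}_L\times\mathbb{B}_L\to\Sym(n)$, $\Gamma(\vec{A},\vec{B})=\vec{A}\vec{A}^\tran+\vec{B}\vec{B}^\tran$, whose fibers (intersected with the domain) are the $\sim_1$-equivalence classes. First I would record the elementary reduction that a pair $(\tilde{\vec{A}},\tilde{\vec{B}})\in\mathbb{A}^*_L\times\mathbb{B}^*_L$ is the unique LT representative of its $\sim_1$-class \emph{if and only if} its $\sim_1$-class coincides with its $\sim_2$-class. Indeed, since Proposition \ref{prop:LT} gives each $\sim_2$-class a single representative in $\mathbb{A}^*_L\times\mathbb{B}^*_L$, coincidence of the two classes yields a unique representative; conversely, if the $\sim_1$-class strictly contained the $\sim_2$-class, a pair $\sim_1$-equivalent but not $\sim_2$-equivalent would, by transitivity and Proposition \ref{prop:LT}, contribute a second distinct LT representative. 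Both directions of the proposition thus reduce to controlling when $\sim_1$ and $\sim_2$ agree generically.

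For the converse (Condition \ref{cond:rotdet} sufficient), Proposition \ref{prop:submatrices_rotdet} already shows that full column rank (FCR) of the submatrices $\vec{M}_1,\dots,\vec{M}_C$ forces every structure-preserving $\vec{Q}$ to be block diagonal, hence collapses $\sim_1$ onto $\sim_2$. The lever is that FCR of each $\vec{M}_c$ is a $\sim_2$-invariant property: replacing $(\vec{A},\vec{B})$ by a $\sim_2$-equivalent pair right-multiplies $\vec{M}_c$ by an invertible block-diagonal orthogonal matrix, leaving its rank unchanged. Consequently the FCR locus is a union of $\sim_2$-classes, and the minors cutting it out, viewed as polynomials on the linear space $\mathbb{A}_L\times\mathbb{B}_L$, are nonzero at the LT representative of any FCR pair. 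Proposition \ref{prop:genericrotdet} provides such pairs under Condition \ref{cond:rotdet}, so these minors are not identically zero on $\mathbb{A}_L\times\mathbb{B}_L$; their common zero set is therefore null, and a set null in $\mathbb{A}_L\times\mathbb{B}_L$ is null in the sign-constrained subset $\mathbb{A}^*_L\times\mathbb{B}^*_L$. Hence almost every LT representative satisfies FCR and is the unique representative of its $\sim_1$-class.

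For necessity I would analyze the kernel of $d\Gamma$ on $\mathbb{A}_L\times\mathbb{B}_L$. Differentiating the Anderson-type relation $[\tilde{\vec{A}}\ \tilde{\vec{B}}]=[\vec{A}\ \vec{B}]\vec{Q}$ from Section \ref{sec:uniquedecomp} along $\vec{Q}(t)=\exp(t\vec{S})$ shows that tangent directions to a fiber are $[\dot{\vec{A}}\ \dot{\vec{B}}]=[\vec{A}\ \vec{B}]\vec{S}$ with $\vec{S}$ skew-symmetric and partitioned as in \eqref{eq:Q_pattern}. Preserving the block-diagonal structure of $\dot{\vec{B}}$ imposes, for each $c\neq c'$, the linear constraint $\vec{A}_c\vec{S}_{0c'}+\vec{B}_c\vec{S}_{cc'}=\vec{0}$, totaling $\sum_{c=1}^C(n-n_c)r_c$ scalar equations on the off-diagonal blocks, which is exactly the right-hand side of \eqref{eq:rot_det_nesc}. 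Preserving the LT structure generically pins down the skew diagonal blocks $\vec{S}_{00},\dots,\vec{S}_{CC}$ in terms of the off-diagonal blocks (the infinitesimal statement that LT removes the $\sim_2$ freedom), leaving the independent off-diagonal entries, $r_0r+\sum_{c=1}^C r_c r_{<c}$ in number, as the left-hand side of \eqref{eq:rot_det_nesc}. If \eqref{eq:rot_det_nesc} fails, this homogeneous system is underdetermined and admits a nonzero solution everywhere, so $\ker d\Gamma$ is generically nontrivial; by the fiber-dimension theorem the generic fiber of $\Gamma$ is then positive-dimensional, and since the positive-diagonal constraints are open, nearby fiber points remain in $\mathbb{A}^*_L\times\mathbb{B}^*_L$ and give distinct $\sim_1$-equivalent representatives. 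The contrapositive yields \eqref{eq:rot_det_nesc}, and this parameter-versus-constraint count is precisely the confirmatory factor analysis identification criterion of \cite{Bekker1986}.

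The main obstacle is the bookkeeping in the necessity direction: one must check that the LT constraints decouple cleanly, generically determining the skew diagonal blocks while leaving exactly the $\sum_c(n-n_c)r_c$ structural constraints acting on the $r_0r+\sum_c r_c r_{<c}$ off-diagonal parameters, so that the two sides of \eqref{eq:rot_det_nesc} arise as honest constraint and parameter counts, and that a nontrivial infinitesimal kernel genuinely integrates to a positive-dimensional family of LT representatives rather than an isolated degeneracy. A secondary point, common to both directions, is the transfer of genericity from the linear space $\mathbb{A}_L\times\mathbb{B}_L$, where it is phrased through non-vanishing polynomials, to its positive-measure subset $\mathbb{A}^*_L\times\mathbb{B}^*_L$.
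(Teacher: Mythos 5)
Your converse direction follows essentially the paper's own machinery: Proposition \ref{prop:submatrices_rotdet} plus Proposition \ref{prop:genericrotdet} to make the FCR locus generic, Anderson's lemma plus the block-diagonality of $\vec{Q}$ to collapse $\sim_1$ onto $\sim_2$, and Proposition \ref{prop:LT} to conclude uniqueness. Your one genuinely nice variation there is the observation that FCR of each $\vec{M}_c$ is a $\sim_2$-invariant (right multiplication by a block-diagonal orthogonal matrix preserves rank), which lets you transfer genericity from $\mathbb{A}\times\mathbb{B}$ to the LT subspace directly; the paper accomplishes the same transfer through its Lemma \ref{lem:nullequiv} on null unions of $\sim_2$-classes. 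For the necessity direction, the paper simply \emph{applies} the confirmatory-FA identification result of \cite{Bekker1986} as a black box, whereas you re-derive the count inline via $\ker d\Gamma$ restricted to $\mathbb{A}_L\times\mathbb{B}_L$. That is a legitimately different, self-contained route, and your tallies are right: the structural-zero constraints number $\sum_{c}(n-n_c)r_c$ and the off-diagonal skew parameters number $r_0r+\sum_c r_c r_{<c}$, with the $\frac{r_0(r_0-1)}{2}+\sum_c\frac{r_c(r_c-1)}{2}$ diagonal skew parameters cancelling against the LT constraints. Note in fact that this cancellation means the raw unknowns-minus-constraints count already produces a nonzero solution whenever \eqref{eq:rot_det_nesc} fails; your stronger claim that the LT constraints ``generically pin down'' the diagonal blocks is not needed for existence.

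Two points in the necessity direction need shoring up. First, a nonzero skew $\vec{S}$ certifies a fiber direction only if $[\vec{A}\ \vec{B}]\vec{S}\neq\vec{0}$. When $r_0+r\leq n$ the generic FCR of $[\vec{A}\ \vec{B}]$ makes $\vec{S}\mapsto[\vec{A}\ \vec{B}]\vec{S}$ injective and this is automatic, but nothing in Condition \ref{cond:rot_det_nesc} rules out $r_0+r>n$, and in that regime the trivial solutions $\{\vec{S}\ \mathrm{skew}:[\vec{A}\ \vec{B}]\vec{S}=\vec{0}\}$, of dimension $\binom{(r_0+r-n)_{+}}{2}$, satisfy all of your linear constraints vacuously; your count must exceed that dimension, or the regime must be argued separately. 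Second, the passage from a nontrivial infinitesimal kernel to an actual second LT representative --- which you correctly flag as the main obstacle --- should be pinned down: since $\Gamma$ is polynomial, the set where the Jacobian of $\Gamma|_{\mathbb{A}_L\times\mathbb{B}_L}$ attains its generic rank is open with null complement, and on it the constant-rank theorem makes the fibers local manifolds of dimension equal to the kernel dimension, so a nontrivial kernel at generic points genuinely integrates to a positive-dimensional family; your openness remark about the sign constraints of $\mathbb{A}^*_L\times\mathbb{B}^*_L$ then finishes the contrapositive. With these two repairs the argument is complete and arguably more informative than citing \cite{Bekker1986}, at the cost of the bookkeeping you anticipated.
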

    \begin{proof}
     {See Supplementary Materials for proof.}
     \end{proof}

     %A method other than the LT restrictions of $\mathbb{A}^*_L \times \mathbb{B}^*_L$ to select a unique representative from each $\sim_1$-equivalence class could be employed. However, the $\sim_1$-equivalence classes for which $\mathbb{A}^*_L \times \mathbb{B}^*_L$ contains more than one representative are precisely those which are composed of more than one $\sim_2$-equivalence class, and thus are non-separable. That is, for any channel sizes and factor numbers for which MFA is \emph{identifiable}, the restrictions of  $\mathbb{A}^*_L \times \mathbb{B}^*_L$ will be satisfactory.

    In connection to $\vec{R}_{\vec{x}\vec{x}}(\veta)$, if $\veta, \tilde{\veta} \in V$  are obtained by \eqref{eq:vectorization}  applied to $(\vec{A}, \vec{B}, \bPhi_0)$ and $(\tilde{\vec{A}}, \tilde{\vec{B}}, \bPhi_0)$  respectively, for $(\vec{A}, \vec{B})$ and $(\tilde{\vec{A}}, \tilde{\vec{B}})$ in $\mathbb{A}^*_L \times \mathbb{B}_{L}^*$, then $\vec{R}_{\vec{x}\vec{x}}(\veta) = \vec{R}_{\vec{x}\vec{x}}(\tilde{\veta})$ implies $\veta = \tilde{\veta}$ except on a null subset of $V$. That is, if the noise variances are known, then Condition \ref{cond:rotdet}, under which the signal and interference covariances can generically be separated, also yields that $\veta$ is generically globally identifiable. The following two subsections treat the typical case where $\bPhi$ is unknown.

    % As an aside, note that Proposition \ref{prop:genericrotdet} combined with Proposition \ref{prop:submatrices_rotdet} gives \emph{sufficient} conditions on $\vec{n}$ and $\vec{r}$ for almost all $(\tilde{\vec{A}}, \tilde{\vec{B}}) \sim \mathbb{A}^*_L \times \mathbb{B}^*_L$ to be the unique representative of their $\sim_1$-equivalence class satisfying the LT restrictions. On the other hand, Proposition \ref{prop:local_ident_nesc} gives \emph{necessary} conditions on $\vec{n}$ and $\vec{r}$ for the same conclusion. To assess the size of the gap between the two sets of conditions, note that if $\vec{r}$ is such that the inequalities in \eqref{eq:rot_det1} hold with equality, the necessary condition \eqref{eq:rot_det_nesc} will also hold with equality. That is, an increase in any of $r_0, r_1, \dots, r_C$ will cause both the sufficient and necessary conditions to be violated. So, the gap between the two conditions is small.

    \subsubsection{Local Identifiability}\label{sec:localident}
    For fixed $\bSigma_{\vec{x}\vec{x}} \in \mathcal{R}(\vec{n},\vec{r})$, the equation $\vec{R}_{\vec{x}\vec{x}}(\veta) = \bSigma_{\vec{x}\vec{x}}$ defines a quadratic system of equations in the entries of $\veta$. As this system is nonlinear, simply counting the number of knowns in $\bSigma_{\vec{x}\vec{x}}$ and the number of unknowns in $\veta$ is not sufficient to determine whether a solution is unique. However, linearization of the system by considering the differential $d\vec{R}_{\vec{x}\vec{x}}(\veta)$ allows investigation of \emph{local identification}. Here, local identification at $\veta$ means that there is a neighborhood of $\veta$ on which $\vec{R}_{\vec{x}\vec{x}}(\veta)$ is an invertible map. We say that MFA is \emph{generically locally identifiable} with channel sizes $\vec{n}$ and factor numbers $\vec{r}$ if almost all $\veta \in V$ are locally identified. 

    As $\vec{R}_{\vec{x}\vec{x}}(\veta)$ is a smooth map, local identification at $\veta$ follows by the inverse function theorem if $d\vec{R}_{\vec{x}\vec{x}}(\veta)$ is injective. To assess this, a key condition follows from the tabulation of knowns and unknowns in MFA with channel sizes $\vec{n}$ and factor numbers $\vec{r}$, as discussed in \cite[Sec. III]{ramirez2020}.
    \begin{condition}\label{cond:fac_count}
      The number of common factors $r_0$ satisfies
    \begin{equation}
        \label{eq:local_identifiability_bound1}
        r_0 \leq \textstyle\frac{1}{2}\left(2n + 1 - \sqrt{8(n+D)+1} \right),
        \end{equation}
        where 
        \begin{equation}
        \label{eq:K1def}
        D = \textstyle\sum_{c=1}^C n_c r_c - \frac{1}{2}r_c(r_c-1)
        \end{equation}
        and for each channel $c=1,\dots, C$, the number of {distinct} factors in that channel satisfies
        \begin{equation}
        \label{eq:local_identifiability_bound2}
        r_c \leq \textstyle\frac{1}{2} \big(2n_c + 1 - \sqrt{8n_c+1} \big).
        \end{equation}
    \end{condition} 
    The following proposition, which was proven in \cite{ramirez2020}, shows that Condition \ref{cond:fac_count} is \emph{necessary} for $d\vec{R}_{\vec{x}\vec{x}}$ to be injective. 

    \begin{proposition}\label{prop:local_ident_nesc}
    It is \emph{necessary} that the channel sizes $\vec{n}$ and factor numbers $\vec{r}$  satisfy Condition \ref{cond:fac_count} for $d\vec{R}_{\vec{x}\vec{x}}(\veta)$ to be injective at any $\veta \in V$.
    \end{proposition}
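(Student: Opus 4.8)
The plan is to exploit that $d\vec{R}_{\vec{x}\vec{x}}(\veta)$ is a \emph{linear} map from the $L$-dimensional parameter tangent space into $\mathrm{Sym}(n)$, so that its injectivity is controlled by elementary rank inequalities. Since a linear map can be injective only when its domain dimension does not exceed its codomain dimension, injectivity at any $\veta \in V$ forces $L \leq \dim \mathrm{Sym}(n) = n(n+1)/2$. Substituting the expression \eqref{eq:unknown_count} for $L$ and collecting terms yields a quadratic inequality in $r_0$ whose nonnegative-discriminant analysis reproduces exactly the bound \eqref{eq:local_identifiability_bound1}, with $D$ as in \eqref{eq:K1def}; indeed the discriminant of the relevant quadratic collapses to $8(n+D)+1$, matching the radicand there.

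The per-channel bounds \eqref{eq:local_identifiability_bound2} require isolating the distinct-factor directions. First I would restrict the differential to the subspace $\{d\vec{A} = \vec{0}\}$ of the tangent space; injectivity of the full differential certainly forces injectivity of this restriction. On this subspace the differential equals $d\vec{B}\,\vec{B}^{\tran} + \vec{B}\,d\vec{B}^{\tran} + d\bPhi$, which is block-diagonal because $\vec{B}$ and its admissible perturbations $d\vec{B}$ are block-diagonal and $d\bPhi$ is diagonal. Consequently the restricted map lands in the block-diagonal subspace $\bigoplus_{c} \mathrm{Sym}(n_c)$ and splits as a direct sum over channels, the $c$th summand sending $(d\vec{B}_c, d\bPhi_c)$ into $\mathrm{Sym}(n_c)$. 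Since a direct sum of linear maps with independent domains and codomains is injective only if each summand is, every channel map must be injective, and its domain dimension $n_c r_c - \tfrac{1}{2}r_c(r_c-1) + n_c$ (the LT loadings $d\vec{B}_c$ plus the diagonal $d\bPhi_c$) cannot exceed $n_c(n_c+1)/2$. Rearranging this single-channel Ledermann count reproduces \eqref{eq:local_identifiability_bound2}.

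The argument is almost entirely bookkeeping, and the only point requiring genuine care is the decomposition step: one must verify both that the restriction to $\{d\vec{A}=\vec{0}\}$ truly lands in the block-diagonal subspace and that the per-channel counts are not already implied by the global count \eqref{eq:local_identifiability_bound1}. Taking $r_0 = 0$ makes the latter concern concrete, as the global bound then reduces to $2D \leq n(n-1)$, which constrains only the aggregate $D$ and leaves each $r_c$ unconstrained; this confirms that the per-channel conditions are independent necessary requirements rather than consequences of the overall dimension tally. The remaining quadratic manipulations in both parts are routine, and I would relegate each to a single line of algebra.
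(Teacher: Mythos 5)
Your proposal is correct, and its overall strategy---treating $d\vec{R}_{\vec{x}\vec{x}}(\veta)$ as a linear map on an $L$-dimensional tangent space and comparing domain and codomain dimensions, globally for \eqref{eq:local_identifiability_bound1} and channel-by-channel for \eqref{eq:local_identifiability_bound2}---is exactly the paper's (the proof appears as the opening portion of the paper's proof of its local identifiability theorem). The one substantive difference is in the per-channel step, and there your version is actually the more careful one. The paper restricts by setting $d\vec{A}$, \emph{all of} $d\bPhi$, and $d\vec{B}_{c'}$ for $c' \neq c$ to zero, obtaining a domain of dimension $n_c r_c - \tfrac{1}{2}r_c(r_c-1)$, and then asserts that $n_c r_c - \tfrac{1}{2}r_c(r_c-1) \leq n_c(n_c+1)/2$ is equivalent to \eqref{eq:local_identifiability_bound2}; but that inequality factors as $(r_c - n_c)(r_c - n_c - 1) \geq 0$ and is automatically satisfied whenever $r_c \leq n_c$, so as literally written it does not yield the Ledermann-type bound. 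You instead keep the channel-$c$ diagonal perturbations $d\bPhi_c$ in the restricted domain, giving dimension $n_c r_c - \tfrac{1}{2}r_c(r_c-1) + n_c$, and it is precisely this extra $+n_c$ that makes the quadratic in $r_c$ have discriminant $8n_c + 1$ and rearrange exactly to \eqref{eq:local_identifiability_bound2}. Your direct-sum observation (the restriction to $\{d\vec{A}=\vec{0}\}$ lands in $\bigoplus_c \mathrm{Sym}(n_c)$ and splits over channels, so injectivity of the whole forces injectivity of each summand) is the clean way to justify the channel-wise count, and your $r_0=0$ sanity check correctly confirms that the per-channel bounds are not implied by the global one. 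In short: same method as the paper, with your bookkeeping repairing a minor slip in the paper's per-channel dimension count.
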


    However, ensuring that $d\vec{R}_{\vec{x}\vec{x}}(\veta)$ is \emph{generically} injective is more challenging, as it requires examining the differential itself in addition to the dimensions of the domain and codomain. The following theorem shows that, when combined with the separability result of Proposition \ref{prop:genericrotdet}, Condition \ref{cond:fac_count} is also \emph{sufficient} for local identifiability. 

    \begin{theorem}\emph{(Local Identifiability)}\label{prop:local_identifiability}
        If  the channel sizes $\vec{n}$ and factor numbers $\vec{r}$ satisfy Conditions \ref{cond:rotdet} and \ref{cond:fac_count}, then the differential $d\vec{R}_{\vec{x}\vec{x}}(\veta)$ is generically injective.
    \end{theorem}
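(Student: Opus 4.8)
The plan is to pass to loading coordinates, identify the kernel of the differential, reduce generic injectivity to a single polynomial rank condition, and then kill the kernel in two stages. First I would write a tangent perturbation of $\veta$ as $(d\vec{A}, d\vec{B}, d\bPhi)$, with $d\vec{A}$ tangent to $\mathbb{A}_L$, $d\vec{B}$ tangent to $\mathbb{B}_L$, and $d\bPhi$ diagonal, so that $d\vec{R}_{\vec{x}\vec{x}} = d\vec{A}\vec{A}^{\tran} + \vec{A}\,d\vec{A}^{\tran} + d\vec{B}\vec{B}^{\tran} + \vec{B}\,d\vec{B}^{\tran} + d\bPhi$. Since $d\bPhi$ ranges over all diagonal matrices and affects only the diagonal, $d\vec{R}_{\vec{x}\vec{x}}(\veta)$ is injective iff the induced map $\Theta_{\veta}$, sending $(d\vec{A}, d\vec{B})$ to the off-diagonal part of $d\vec{A}\vec{A}^{\tran}+\vec{A}\,d\vec{A}^{\tran}+d\vec{B}\vec{B}^{\tran}+\vec{B}\,d\vec{B}^{\tran}$, has trivial kernel. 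Here the domain of $\Theta_\veta$ has dimension $L-n$ and the codomain has dimension $n(n-1)/2$; I would record that \eqref{eq:local_identifiability_bound1} is exactly the inequality $L-n\le n(n-1)/2$ and that \eqref{eq:local_identifiability_bound2} is exactly the per-channel inequality $\dim\mathbb{B}_{L,c}=n_cr_c-\tfrac12 r_c(r_c-1)\le n_c(n_c-1)/2$, so Condition~\ref{cond:fac_count} is precisely the two budget constraints needed for $\Theta_\veta$ to be injective.

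Next I would make the genericity reduction. The matrix of $\Theta_\veta$ has entries affine in $\veta$, so each of its maximal minors is a polynomial in $\veta$ and the set where $\Theta_\veta$ fails to have full column rank is either all of $V$ or a proper, hence null, algebraic subset. Thus generic injectivity is equivalent to $\ker\Theta_{\veta}=\{0\}$ at a single generic $\veta$, and it suffices either to exhibit one loading pair at which $\Theta_\veta$ is injective or to show the generic kernel is trivial directly.

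The core is the structural analysis of $\ker\Theta_\veta$. Setting $d\vec{R}_{\vec{s}\vec{s}}=d\vec{A}\vec{A}^{\tran}+\vec{A}\,d\vec{A}^{\tran}$ and $d\vec{R}_{\vec{i}\vec{i}}=d\vec{B}\vec{B}^{\tran}+\vec{B}\,d\vec{B}^{\tran}$, a kernel element is exactly one for which $d\vec{R}_{\vec{s}\vec{s}}+d\vec{R}_{\vec{i}\vec{i}}$ is diagonal. Because $d\vec{R}_{\vec{i}\vec{i}}$ is block-diagonal, the off-block-diagonal blocks of $d\vec{R}_{\vec{s}\vec{s}}$ must vanish, giving the cross-channel equations $d\vec{A}_c\vec{A}_{c'}^{\tran}+\vec{A}_c\,d\vec{A}_{c'}^{\tran}=\vec{0}$ for $c\ne c'$. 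I would show these equations, together with the lower-triangular tangent constraints on $\vec{A}$, are the infinitesimal form of Proposition~\ref{prop:genericrotdet}: under Condition~\ref{cond:rotdet} the only block-structure-preserving orthogonal $\vec{Q}$ taking $[\vec{A}\ \vec{B}]$ to a valid loading pair are block-diagonal, and the corresponding linearized statement (combined with the LT uniqueness of Proposition~\ref{prop:LT}) generically removes every rotational and cross-channel degree of freedom in $d\vec{A}$. What survives is the within-channel system requiring the off-diagonal part of $[\,d\vec{A}_c\ d\vec{B}_c][\vec{A}_c\ \vec{B}_c]^{\tran}+[\vec{A}_c\ \vec{B}_c][\,d\vec{A}_c\ d\vec{B}_c]^{\tran}$ to vanish for each $c$, coupled only through the shared columns of $\vec{A}$. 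I would then close this reduced system with Condition~\ref{cond:fac_count}: the per-channel bound \eqref{eq:local_identifiability_bound2} is the Ledermann threshold for the interference loading $\vec{B}_c$, which via the generic single-channel results of Shapiro~\cite{shapiro1985} and Bekker--ten Berge~\cite{bekker1997} eliminates the interference-only directions, while the global bound \eqref{eq:local_identifiability_bound1}, equivalent to $L\le n(n+1)/2$, leaves no residual off-diagonal budget through which the shared signal loadings could reintroduce a kernel direction.

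The step I expect to be the main obstacle is the infinitesimal separability: rigorously converting the finite, $\mathrm{O}(r_0+r)$-level statement of Proposition~\ref{prop:genericrotdet} into the exact elimination of the linear cross-channel equations for $d\vec{A}$, while correctly handling channels with $r_0+r_c>n_c$, where $[\vec{A}_c\ \vec{B}_c]$ is not of full column rank and the within-channel factors are themselves non-identifiable from $\vec{x}_c$. Keeping the lower-triangular tangent constraints, the cross-channel rank conditions, and the per-channel Ledermann counts simultaneously consistent---so that after separability absorbs the cross-channel freedom the global count \eqref{eq:local_identifiability_bound1} is exactly what remains---is the delicate bookkeeping at the center of the argument.
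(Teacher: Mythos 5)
Your setup is sound: the reduction to the off-diagonal map $\Theta_{\veta}$ is a correct reformulation (the kernel of $d\vec{R}_{\vec{x}\vec{x}}$ is nontrivial iff some nonzero $(d\vec{A},d\vec{B})$ makes $\vec{A}d\vec{A}^{\tran}+d\vec{A}\vec{A}^{\tran}+\vec{B}d\vec{B}^{\tran}+d\vec{B}\vec{B}^{\tran}$ diagonal), your two budget identities for \eqref{eq:local_identifiability_bound1} and \eqref{eq:local_identifiability_bound2} are exactly right, and the polynomial-minor observation legitimately reduces genericity to injectivity at a single point. The gap is in the core elimination, and it is genuine. The step you yourself flag as the main obstacle --- converting Proposition \ref{prop:genericrotdet} into an ``infinitesimal'' statement --- is never carried out, and the paper in fact never linearizes it. The paper's mechanism runs in the opposite order from yours: it first disposes of the diagonal slack entirely, proving $(\mathcal{A}\oplus\mathcal{B})\cap\mathrm{Diag}(n)=\{\vec{0}\}$, which reduces the kernel condition to the slack-free equation $\vec{F}_{\vec{A},\vec{B}}(\vec{X},\vec{Y})=\vec{0}$; from FCR of $[\vec{A}\ \vec{B}]$ it then derives $[\vec{X}\ \vec{Y}]=[\vec{A}\ \vec{B}]\vec{W}$ with $\vec{W}$ skew-symmetric, and the \emph{Cayley transform} converts $\vec{W}$ into a finite orthogonal $\vec{Q}$ with $(\vec{A}+\vec{X},\vec{B}+\vec{Y})\sim_1(\vec{A}-\vec{X},\vec{B}-\vec{Y})$, so that the finite Proposition \ref{prop:submatrices_rotdet} (generic under Condition \ref{cond:rotdet} by Proposition \ref{prop:genericrotdet}, applied after rescaling $(\vec{X},\vec{Y})$ so the perturbed pair stays in the open good set) together with LT uniqueness (Proposition \ref{prop:LT}) forces $(\vec{X},\vec{Y})=(\vec{0},\vec{0})$. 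This order is essential: in your cross-channel-first scheme the slack is still present, the within-channel blocks need only be diagonal rather than zero, and the image-containment argument that yields $[\vec{X}\ \vec{Y}]=[\vec{A}\ \vec{B}]\vec{W}$ is unavailable. Moreover your claim that the cross-channel equations $d\vec{A}_c\vec{A}_{c'}^{\tran}+\vec{A}_c d\vec{A}_{c'}^{\tran}=\vec{0}$ plus the LT constraint remove all the relevant freedom in $d\vec{A}$ cannot stand on its own: for $C=2$, $n_1=n_2=5$, $r_0=3$, $r_1=r_2=1$ (which satisfies Conditions \ref{cond:rotdet} and \ref{cond:fac_count}) the cross-channel system has $n_1n_2=25$ scalar equations against $nr_0-\tfrac{1}{2}r_0(r_0-1)=27$ LT-constrained unknowns, so at least a two-dimensional space of non-rotational $d\vec{A}$ survives and enters the within-channel residual system coupled to the $d\vec{B}_c$ and $d\bPhi$.

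That residual system is exactly where your second gap sits: you close it by asserting the global bound \eqref{eq:local_identifiability_bound1} ``leaves no residual off-diagonal budget,'' but that is a dimension count, and the dimension counts here are precisely the \emph{necessary} conditions of Proposition \ref{prop:local_ident_nesc}; sufficiency is the assertion that the relevant linear constraints are \emph{generically independent}, which is the content that must be proved. The paper supplies it via Shapiro's machinery: per channel, $\imagesp(\vec{F}_{\vec{B}_c})\cap\mathrm{Diag}(n_c)=\{\vec{0}\}$ generically under \eqref{eq:local_identifiability_bound2} by \cite[Thm. 3.2]{shapiro1985}, and globally $\mathcal{A}\cap(\mathcal{B}\oplus\mathrm{Diag}(n))=\{\vec{0}\}$ generically under \eqref{eq:local_identifiability_bound1} by \cite[Lemma 2.1]{Shapiro1982} --- the substantive point being that the $\binom{n-r_0}{2}$ constraints $\vec{v}_i^{\tran}\vec{C}\vec{v}_j=0$, with $\vec{v}_1,\dots,\vec{v}_{n-r_0}$ spanning $\kernel(\vec{A}^{\tran})$, are generically independent because $\mathbb{A}$ places no restriction on $\kernel(\vec{A}^{\tran})$; a subspace-lattice computation then assembles these facts into $(\mathcal{A}\oplus\mathcal{B})\cap\mathrm{Diag}(n)=\{\vec{0}\}$. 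Note that this decomposition --- signal against interference-plus-diagonal, and interference against diagonal channel by channel --- also dissolves your flagged $r_0+r_c>n_c$ worry, since no channel is ever treated as a stand-alone FA model with $r_0+r_c$ factors; only $\vec{B}_c$ faces a per-channel Ledermann test. To repair your proof you would need either to carry out the diagonal-first/Cayley argument or to exhibit an explicit witness $(\vec{A},\vec{B})$ at which $\Theta_{\veta}$ is injective; the proposal contains neither.
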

     \begin{proof}
     {See Appendix for proof.}
     \end{proof}

    \subsubsection{Global Identifiability}\label{sec:globalident}
 Although the local identifiability result of Theorem \ref{prop:local_identifiability} provides valuable information about the behavior of $\vec{R}_{\vec{x}\vec{x}}(\veta)$ on small neighborhoods and will be needed in Section \ref{sec:asymptotics}, a stronger global identifiability result for $\vec{R}_{\vec{x}\vec{x}}(\veta)$ is desired.  The following proposition combines the results of Section \ref{sec:uniquerep} with Proposition \ref{prop:genericrotdet} and Theorem \ref{thm:phi_separation} to show that $\vec{R}_{\vec{x}\vec{x}}(\veta)$ is an invertible map, excepting a null set of $\veta$.
 
    \begin{proposition}\emph{(Global Identifiability)}\label{prop:global_identifiability}
        If the channel sizes $\vec{n}$ and factor numbers $\vec{r}$ satisfy Conditions \ref{cond:rotdet} and \ref{cond:phisep}, then there exists a subset $\tilde{V} \subset V$ such that $\vec{R}_{\vec{x}\vec{x}}(\veta)$ is injective on $\tilde{V}$ and $V \setminus \tilde{V}$ is null.
    \end{proposition}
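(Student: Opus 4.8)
The plan is to certify injectivity of $\vec{R}_{\vec{x}\vec{x}}(\veta)$ by chaining together the three identifiability results already in hand, and then to verify that the union of their exceptional sets remains null after restriction to $V$. Accordingly, I would define $\tilde{V}$ to be the set of $\veta \in V$ whose associated triple $(\vec{A}, \vec{B}, \bPhi) \in \mathbb{A}^*_{L} \times \mathbb{B}_{L}^* \times \mathrm{Diag}_{\geq 0}(n)$ enjoys both (a) globally identified noise variances, in the sense of Theorem \ref{thm:phi_separation}, and (b) the full-column-rank hypothesis on $\vec{M}_1, \dots, \vec{M}_C$ of Proposition \ref{prop:submatrices_rotdet}.

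On $\tilde{V}$, injectivity follows directly from the existing machinery; in fact I would prove the slightly stronger statement that each point of $\tilde{V}$ is globally identified against all of $V$. Suppose $\veta \in \tilde{V}$ and $\tilde{\veta} \in V$ produce the same covariance, with parameters $(\vec{A}, \vec{B}, \bPhi)$ and $(\tilde{\vec{A}}, \tilde{\vec{B}}, \tilde{\bPhi})$. Property (a) forces $\bPhi = \tilde{\bPhi}$, whence $\vec{A}\vec{A}^{\tran} + \vec{B}\vec{B}^{\tran} = \tilde{\vec{A}}\tilde{\vec{A}}^{\tran} + \tilde{\vec{B}}\tilde{\vec{B}}^{\tran}$, that is, $(\vec{A}, \vec{B}) \sim_1 (\tilde{\vec{A}}, \tilde{\vec{B}})$. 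The structure result of \cite[Lemma 5.1]{Anderson1959} then supplies an orthogonal $\vec{Q}$ with $[\tilde{\vec{A}}\ \tilde{\vec{B}}] = [\vec{A}\ \vec{B}]\vec{Q}$, and property (b) together with Proposition \ref{prop:submatrices_rotdet} forces $\vec{Q}$ to be block-diagonal, so $(\vec{A}, \vec{B}) \sim_2 (\tilde{\vec{A}}, \tilde{\vec{B}})$. Since both pairs lie in $\mathbb{A}^*_{L} \times \mathbb{B}_{L}^*$, Proposition \ref{prop:LT} gives $(\vec{A}, \vec{B}) = (\tilde{\vec{A}}, \tilde{\vec{B}})$, and as \eqref{eq:vectorization} is a bijection onto $V$, we conclude $\veta = \tilde{\veta}$.

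The main obstacle is showing $V \setminus \tilde{V}$ is null. Theorem \ref{thm:phi_separation} and Proposition \ref{prop:genericrotdet} deliver null exceptional sets in the full spaces $\mathbb{A} \times \mathbb{B} \times \mathrm{Diag}_{\geq 0}(n)$ and $\mathbb{A} \times \mathbb{B}$, whereas $V$ meets only the lower-dimensional linear slice $\mathbb{A}_{L} \times \mathbb{B}_{L}$ (with positivity imposed on the triangular diagonals), and an ambient null set need not restrict to a null set on a lower-dimensional subspace. To transfer genericity I would exploit that both failure properties are $\sim_2$-invariant in $(\vec{A}, \vec{B})$: for (b) because under $\sim_2$ each $\vec{M}_c$ is right-multiplied by a block-orthogonal factor, which preserves column rank; for (a) because the noise-identification question depends on $(\vec{A}, \vec{B})$ only through $\vec{A}\vec{A}^{\tran} + \vec{B}\vec{B}^{\tran}$. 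Each failure set is contained in the zero locus of a $\sim_2$-invariant polynomial---for (b) one may take the sum of squares of the maximal minors of the $\vec{M}_c$, which is invariant because every such minor is scaled by the determinant, $\pm 1$, of the block-orthogonal factor. It then suffices to exhibit a single point of $\mathbb{A}^*_{L} \times \mathbb{B}_{L}^*$ at which these polynomials are nonzero, and such a point exists by the following mechanism: the full-space genericity furnishes a point outside the containing variety in $\mathbb{A} \times \mathbb{B}$, $\sim_2$-invariance spreads this goodness across its entire orbit, and Proposition \ref{prop:LT} deposits an orbit representative in $\mathbb{A}^*_{L} \times \mathbb{B}_{L}^*$. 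Consequently the defining polynomials do not vanish identically on the linear space $\mathbb{A}_{L} \times \mathbb{B}_{L}$, their zero sets cut out proper subvarieties of measure zero there, and the failure sets for (a) and (b) are null in $V$; hence $\tilde{V} = V \setminus (N_a \cup N_b)$ has full measure.

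I expect the delicate point to be justifying, for property (a), that the exceptional set of Theorem \ref{thm:phi_separation} is genuinely contained in a $\sim_2$-invariant \emph{proper} variety rather than merely a null set. Making this precise will require reading the containing polynomial off the structure of its appendix proof---in particular the rank conditions on the symmetric matrix $\vec{H}$ governing the difference of two noise-free MFA covariances---rather than invoking the theorem as a black box. Once the invariant-variety containment is secured for both (a) and (b), the measure-zero restriction argument applies uniformly and yields the claimed $\tilde{V}$ of full measure.
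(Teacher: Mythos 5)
Your injectivity chain on $\tilde{V}$ is exactly the paper's: Theorem \ref{thm:phi_separation} pins down $\bPhi$, the Anderson-type lemma gives $[\tilde{\vec{A}}\ \tilde{\vec{B}}] = [\vec{A}\ \vec{B}]\vec{Q}$, Proposition \ref{prop:submatrices_rotdet} forces $\vec{Q}$ block-diagonal, and Proposition \ref{prop:LT} plus the bijectivity of the vectorization \eqref{eq:vectorization} yields $\veta = \tilde{\veta}$. You have also correctly located the crux, which is that the exceptional sets of Theorem \ref{thm:phi_separation} and Proposition \ref{prop:genericrotdet} are null in the ambient spaces while $V$ lives on the lower-dimensional LT slice. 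Where you diverge from the paper is in how this transfer is executed. The paper does it with Lemma \ref{lem:nullequiv}: a null subset of $\mathbb{A} \times \mathbb{B}$ that is a \emph{union of $\sim_2$-equivalence classes} has a null set of representatives in $\mathbb{A}_L \times \mathbb{B}_L$. This is a purely measure-theoretic statement (essentially a Fubini argument over the orbit parametrization by block-diagonal orthogonal matrices), and its hypothesis is immediate for both failure sets: the noise-identification failure set depends on the loadings only through $\vec{A}\vec{A}^{\tran} + \vec{B}\vec{B}^{\tran}$, so it is $\sim_1$-saturated and a fortiori $\sim_2$-saturated, while FCR failure of the $\vec{M}_c$ is preserved under block-orthogonal right multiplication, as you computed. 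So the paper never needs the failure sets to be algebraic --- null plus saturated suffices.

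Your route instead demands containment in a \emph{proper $\sim_2$-invariant variety}, and you candidly flag that for property (a) you cannot supply the polynomial, since the appendix proof of Theorem \ref{thm:phi_separation} is differential-topological (submersions, embedded submanifolds, dimension counting) and delivers only a null set. As written, this is a genuine gap: your restriction argument does not go through for (a) without further input, and the stronger algebraic containment is precisely the part you would have to build from scratch. It is patchable --- the bad set $\mathcal{U}$ of \eqref{eq:nonseplowrank} is the projection of a set cut out by polynomial equations in $(\vec{A}, \vec{B}, \tilde{\vec{A}}, \tilde{\vec{B}}, \bvarphi)$ with $\bvarphi \neq \vec{0}$, hence semialgebraic by Tarski--Seidenberg; a null semialgebraic set has dimension strictly below ambient, so its Zariski closure is a proper variety, and that closure inherits the $\sim_2$-invariance of $\mathcal{U}$. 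With that observation your orbit-representative argument closes. But note what each approach buys: yours, once patched, yields the slightly stronger conclusion that the exceptional set in $V$ is contained in a proper subvariety of the slice (not merely null), whereas the paper's Lemma \ref{lem:nullequiv} reaches the stated measure-zero conclusion with far less machinery and without ever reopening the proof of Theorem \ref{thm:phi_separation}. For the proposition as stated, the saturated-null-set lemma is the shorter and self-contained path, and you should either invoke such a lemma or explicitly supply the semialgebraicity step before the variety argument can be considered complete.
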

	     \begin{proof}
     {See Supplementary Materials for proof.}
     \end{proof}

\section{Asymptotics}
\label{sec:asymptotics}
%%%%%%%%%%%%%%%%%%%%%%%%%%%%%%%%%%

	\subsection{Estimation}\label{sec:estimation}
	
	Suppose $T$ observation vectors $\vec{x}_{1}, \dots, \vec{x}_T$ are available and are i.i.d. with covariance $\bSigma_{\vec{x}\vec{x}} = \vec{R}_{\vec{x}\vec{x}}(\mathring{\veta})$. In this setting, \cite{ramirez2020} presents an estimation procedure to obtain the value of $\veta$ which maximizes the likelihood of the observations under the assumption that the latent factors and idiosyncratic errors are jointly multivariate normal. Under those distributional assumptions, the implied log density for $\vec{x}$ is
    \begin{equation}
    \label{eq:logdensity}
    \begin{aligned}
    \log f(\vec{x}; \veta) = & - \frac{1}{2} \log \det \vec{R}_{\vec{x}\vec{x}}(\veta) - \frac{1}{2} \vec{x}^\tran \vec{R}_{\vec{x}\vec{x}}^{-1}(\veta) \vec{x} + K.
    \end{aligned}
    \end{equation}
    With this density, estimation of $\vec{A}, \vec{B}, \bPhi$ from $\vec{x}_{1}, \dots \vec{x}_T$ is framed as the optimization problem 
    \begin{equation}
    \label{eq:optimization}
    \min_{\veta \in {V}}\ \log \det \vec{R}_{\vec{x}\vec{x}}(\veta) + \tr\vec{R}_{\vec{x}\vec{x}}^{-1}(\veta) \vec{S}_T
    \end{equation}
    where $\vec{S}_T$ is the sample covariance,
    $
    \vec{S}_T = T^{-1}\textstyle \sum_{j=1}^T \vec{x}_j \vec{x}_j^{\tran}, 
    $
    and the sample objective function $\ell_{T}$ is
    \begin{equation}
    \label{eq:sampleobjective}
    \ell_{T}(\vec{S}_T ; \veta) \equiv \log \det \vec{R}_{\vec{x}\vec{x}}(\veta) + \tr\vec{R}_{\vec{x}\vec{x}}^{-1}(\veta) \vec{S}_T.
    \end{equation}

    To avoid the Heywood cases \cite{Lawley1962}, we will restrict attention to $(\vec{A}, \vec{B}, \bPhi)$ such that $\min_{i} [\bPhi]_{ii} \geq \epsilon$ for some fixed $\epsilon > 0$. This has the advantage of ensuring that the smallest eigenvalue $\vec{R}_{\vec{x}\vec{x}}(\vec{A}, \vec{B}, \bPhi)$ is bounded away from zero. Letting $V' \subset V$ contain all $\veta$ which satisfy this additional requirement, we define
     the estimators $\hat{\vec{A}}_{T}$, $\hat{\vec{B}}_{T}$, $\hat{\bPhi}_T$ as those obtained from the minimizer of $\ell_{T}(\vec{S}_T ; \veta)$, 
    \begin{equation}
    \label{eq:gmle}
    \hat{\veta}_{T} = \argmin_{\veta \in {V}'}  \ell_{T}(\vec{S}_T ; \veta).
    \end{equation}
    The estimators for the MFA parameters, $\hat{\vec{A}}_{T}$, $\hat{\vec{B}}_{T}$, $\hat{\bPhi}_T$, are obtained by inverting \eqref{eq:vectorization} for $\hat{\veta}_{T}$.

    As the latent factors and idiosyncratic errors are not observed, the assumption of joint multivariate normality can be difficult to support. Therefore, the asymptotic results of Section \ref{sec:asympprops} are obtained by treating \eqref{eq:sampleobjective} as a \emph{quasi-loglikelihood}\cite{Heyde1997} objective function to be optimized, rather than requiring that \eqref{eq:logdensity} be the true likelihood. The results on the asymptotic consistency and normality of the estimators do not require the joint normality of the latent factors and errors. These results instead require only mild moment assumptions, so the estimators are asymptotically valid if the latent vectors are non-normal.

\subsection{Asymptotic Properties}\label{sec:asympprops}

In this section it is primarily assumed that the observation vectors $\vec{x}_1, \dots, \vec{x}_T$ are independent and identically distributed with mean zero and MFA covariance model \eqref{eq:covariancemodel},
\begin{equation}
\Var(\vec{x}_1) = \bSigma_{\vec{x}\vec{x}} \equiv \vec{R}_{\vec{x}\vec{x}}(\mathring{\vec{A}}, \mathring{\vec{B}}, \mathring{\bPhi}).
\end{equation}  
The true values $\mathring{\vec{A}}$ and $\mathring{\vec{B}}\equiv \blkdiag(\mathring{\vec{B}}_{1}, \dots \mathring{\vec{B}}_{C})$ are such that $(\mathring{\vec{A}}, \mathring{\vec{B}}) \in \mathbb{A}_L^* \times \mathbb{B}_L^*$ and $[\mathring{\bPhi}]_{ii} > \epsilon$ for all $i=1,\dots,n$. The vectorization \eqref{eq:vectorization} of $(\mathring{\vec{A}}, \mathring{\vec{B}}, \mathring{\bPhi})$ is $\mathring{\veta} \in V'$. The higher moments of $\vec{x}_1$ are not specified, and in particular the observations need not be normally distributed. {Theorem \ref{thm:consistency} also speaks to the misspecified case where $\bSigma_{\vec{x}\vec{x}} \notin \mathcal{R}(\vec{n}, \vec{r})$.}

 %The asymptotic properties of the QGML estimators $\hat{\vec{A}}_T$, $\hat{\vec{B}}_T$,  and $\hat{\vec{\bPhi}}_T$ from \eqref{eq:gmle} are investigated here. Under the identifiability conditions from Section \ref{sec:identifiability}, the estimators will converge to the true values and will have a limiting normal distribution for almost all $\vec{A}_0, \vec{B}_0, \bPhi_0$ as long as the observation distribution satisfies certain moment conditions.
 
The next theorem shows that identification of the true $\bSigma_{\vec{x}\vec{x}}$ is enough to ensure that the estimators are consistent for the factor loading parameters $\vec{A}$ and $\vec{B}$ and the idiosyncratic noise variance $\bPhi$. This follows from the fact that $\vec{x}$ has finite second moment and the exclusion of singular covariance models in the definition of the parameter space. The objective function $\ell_T$ is then sufficiently well-behaved so that the maximizer $\hat{\veta}_{T}$ of $\ell_T$ converges to the maximizer of $\ell_0 \equiv E[\ell_T]$, which is $\mathring{\veta}$. Convergence of  $\hat{\vec{A}}_T, \hat{\vec{B}}_T, \hat{\vec{\bPhi}}_T$ to $\mathring{\vec{A}}, 
\mathring{\vec{B}}, \mathring{\bPhi}$ then follows.

\begin{theorem} \emph{(Consistency)}\label{thm:consistency}
Suppose $\vec{x}_1, \vec{x}_2, \dots$ are an i.i.d. sequence of random vectors with $E[\vec{x}_1] = \vec{0}$ and positive definite $\Var(\vec{x}_1) \equiv \bSigma_{\vec{x}\vec{x}}$. {If there exists a unique $\mathring{\vec{R}}_{\vec{x}\vec{x}} \in \mathcal{R}(\vec{n},\vec{r})$ minimizing $D_{KL}\big(\mathcal{N}(\vec{0}, \bSigma_{\vec{x}\vec{x}})||\ \mathcal{N}(\vec{0}, \mathring{\vec{R}}_{\vec{x}\vec{x}})\big)$ with $\mathring{\vec{R}}_{\vec{x}\vec{x}} = \vec{R}_{\vec{x}\vec{x}}(\mathring{\veta})$ for $\mathring{\veta} \in V'$ in the interior of the globally identified set $\tilde{V}$ defined in Proposition \ref{prop:global_identifiability}, then $\hat{\vec{A}}_T, \hat{\vec{B}}_T, \hat{\vec{\bPhi}}_T$ converge in probability to $\mathring{\vec{A}}, \mathring{\vec{B}}, \mathring{\bPhi}$ respectively.}
\end{theorem}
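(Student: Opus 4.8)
The plan is to treat the estimator \eqref{eq:gmle} as a standard extremum (quasi-maximum-likelihood) estimator and run a Wald-type consistency argument. First I would introduce the population objective $\ell_0(\veta) \equiv \log\det\vec{R}_{\vec{x}\vec{x}}(\veta) + \tr\big[\vec{R}_{\vec{x}\vec{x}}^{-1}(\veta)\bSigma_{\vec{x}\vec{x}}\big]$, obtained by replacing $\vec{S}_T$ in \eqref{eq:sampleobjective} with its mean. Since the $\vec{x}_j$ are i.i.d.\ with finite second moments, each entry of $\vec{S}_T$ is an average of i.i.d.\ terms with finite mean $[\bSigma_{\vec{x}\vec{x}}]_{ab}$ (finite by Cauchy--Schwarz), so the strong law gives $\vec{S}_T \to \bSigma_{\vec{x}\vec{x}}$ almost surely. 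The three ingredients I would assemble are (i) a uniform law $\sup_{\veta}\,\lvert\ell_T(\vec{S}_T;\veta) - \ell_0(\veta)\rvert \to 0$ in probability over the relevant region, (ii) a unique, well-separated minimizer of $\ell_0$ at $\mathring\veta$, and (iii) confinement of $\hat\veta_T$ to a fixed compact set.

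For ingredient (ii) I would invoke the zero-mean Gaussian Kullback--Leibler identity $\ell_0(\veta) = 2\,D_{KL}\big(\mathcal N(\vec 0,\bSigma_{\vec{x}\vec{x}})\,\big\|\,\mathcal N(\vec 0,\vec{R}_{\vec{x}\vec{x}}(\veta))\big) + \log\det\bSigma_{\vec{x}\vec{x}} + n$, so that minimizing $\ell_0$ over $V'$ coincides with minimizing the divergence over the image $\{\vec{R}_{\vec{x}\vec{x}}(\veta):\veta\in V'\}\subset\mathcal R(\vec n,\vec r)$. By hypothesis this divergence has the unique minimizer $\mathring{\vec{R}}_{\vec{x}\vec{x}}$, so $\ell_0(\veta)\ge\ell_0(\mathring\veta)$ with equality exactly on the fiber $\{\veta\in V':\vec{R}_{\vec{x}\vec{x}}(\veta)=\mathring{\vec{R}}_{\vec{x}\vec{x}}\}$. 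The key is to show this fiber is the singleton $\{\mathring\veta\}$: as $\mathring\veta$ lies in $\tilde V$ and Conditions \ref{cond:rotdet} and \ref{cond:phisep} are in force, $\mathring{\vec{R}}_{\vec{x}\vec{x}}$ is identified in the sense of Section~\ref{sec:uniquedecomp}, so every $\veta$ in the fiber yields the same decomposition, i.e.\ $\vec{A}(\veta)\vec{A}(\veta)^\tran=\mathring{\vec{A}}\mathring{\vec{A}}^\tran$, $\vec{B}(\veta)\vec{B}(\veta)^\tran=\mathring{\vec{B}}\mathring{\vec{B}}^\tran$, and $\bPhi(\veta)=\mathring\bPhi$; Proposition~\ref{prop:LT} then forces $\vec{A}(\veta)=\mathring{\vec{A}}$ and $\vec{B}(\veta)=\mathring{\vec{B}}$, whence $\veta=\mathring\veta$.

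For ingredient (iii) I would use coercivity on $V'$. Every $\veta\in V'$ has $\bPhi(\veta)\succeq\epsilon\vec I$, so $\vec{R}_{\vec{x}\vec{x}}(\veta)\succeq\epsilon\vec I$ and the term $\tr\big[\vec{R}_{\vec{x}\vec{x}}^{-1}(\veta)\vec{S}_T\big]$ stays in $[0,\epsilon^{-1}\tr\vec{S}_T]$, while $\log\det\vec{R}_{\vec{x}\vec{x}}(\veta)\to\infty$ as $\lVert\veta\rVert\to\infty$ because one eigenvalue diverges and all others remain bounded below by $\epsilon$; hence $\ell_T(\vec{S}_T;\veta)\to\infty$ at infinity. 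Since $\ell_T(\vec{S}_T;\hat\veta_T)\le\ell_T(\vec{S}_T;\mathring\veta)=O_P(1)$, the minimizer lies in a fixed compact set $\mathcal K$ with probability tending to one. On $\mathcal K$ the difference $\ell_T(\vec{S}_T;\veta)-\ell_0(\veta)=\tr\big[\vec{R}_{\vec{x}\vec{x}}^{-1}(\veta)(\vec{S}_T-\bSigma_{\vec{x}\vec{x}})\big]$ is bounded by $\big(\sup_{\veta\in\mathcal K}\lVert\vec{R}_{\vec{x}\vec{x}}^{-1}(\veta)\rVert\big)\,\lVert\vec{S}_T-\bSigma_{\vec{x}\vec{x}}\rVert\to 0$ in probability, yielding ingredient (i). The standard argument then gives $\hat\veta_T\to\mathring\veta$ in probability, and since inverting the vectorization \eqref{eq:vectorization} is continuous, $\hat{\vec{A}}_T,\hat{\vec{B}}_T,\hat\bPhi_T$ converge in probability to $\mathring{\vec{A}},\mathring{\vec{B}},\mathring\bPhi$.

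I expect the main obstacle to be establishing that $\mathring\veta$ is the \emph{unique} global minimizer of $\ell_0$ over all of $V'$, not merely within $\tilde V$. Proposition~\ref{prop:global_identifiability} asserts injectivity of $\vec{R}_{\vec{x}\vec{x}}$ only on $\tilde V$, whose complement is null but nonempty, so a priori the fiber over $\mathring{\vec{R}}_{\vec{x}\vec{x}}$ could harbour stray points in $V'\setminus\tilde V$ toward which $\hat\veta_T$ might drift and break convergence to $\mathring\veta$. The remedy is to argue uniqueness directly from decomposition-level identifiability of the \emph{specific} covariance $\mathring{\vec{R}}_{\vec{x}\vec{x}}$ (via Conditions~\ref{cond:rotdet}, \ref{cond:phisep} and Proposition~\ref{prop:LT}), as in the second paragraph, rather than from injectivity on $\tilde V$ alone. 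Secondary care is needed to confirm that $V'$ is closed, that the coercivity bound confines $\hat\veta_T$ uniformly enough before the uniform law is applied, and that the trace inequality used for ingredient (i) holds in a convenient matrix norm.
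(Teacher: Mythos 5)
Your proposal is correct and follows essentially the same route as the paper: the same Kullback--Leibler rewriting of the population objective $\ell_0$, uniqueness of its minimizer from the global-identifiability hypothesis on $\mathring{\veta}$, a uniform law of large numbers driven by the bound $\lambda_{\min}(\vec{R}_{\vec{x}\vec{x}}(\veta)) \geq \epsilon$ on $V'$, and a standard $M$-estimator conclusion plus continuity of the inverse vectorization. The only divergence is minor and technical: where you confine $\hat{\veta}_T$ to a compact set by coercivity of $\ell_T$ and then invoke uniqueness on that set, the paper instead proves the minimum of $\ell_0$ is well-separated via compactness of its sublevel sets (borrowed from the proof of Theorem 1 of \cite{ramirez2020}) combined with uniform convergence over all of $V'$, and your explicit fiber-uniqueness argument via Conditions \ref{cond:rotdet}, \ref{cond:phisep} and Proposition \ref{prop:LT} --- addressing possible stray preimages in $V' \setminus \tilde{V}$ --- is exactly what the paper compresses into the phrase ``by hypothesis \ldots $\mathring{\veta}$ belongs to the globally identified set.''
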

     \begin{proof}
     {See Appendix for proof.}
     \end{proof}
{In particular, if the model is correctly specified with $\Var(\vec{x}_1) = \vec{R}_{\vec{x}\vec{x}}(\mathring{\veta})$ for $\mathring{\veta}$ in the interior of the globally identified set, then the estimators $\hat{\vec{A}}_{T}, \hat{\vec{B}}_T, \hat{\bPhi}_{T}$ are consistent.}

The following theorem shows that the estimators have a limiting Gaussian distribution when the observation distribution has a finite fourth moment. This is obtained from consistency of the estimators and the nature of $\ell_T$ in a neighborhood of the covariance $\vec{R}_{\vec{x}\vec{x}}(\mathring{\veta})$. In particular, Theorem \ref{prop:local_identifiability} is used to show that the objective generically has a positive second differential, and so the limiting covariance is positive definite. {As the limiting distribution is non-degenerate, $\hat{\veta}_{T} - \mathring{\veta}$ converges to zero in probability at the standard parametric rate of $T^{-1/2}$.}

\begin{theorem}\emph{(Asymptotic Normality)}\label{thm:normality}
Assume the conditions of Theorem \ref{thm:consistency} are satisfied {with $\bSigma_{\vec{x}\vec{x}} = \vec{R}_{\vec{x}\vec{x}}(\mathring{\veta})$}. In addition, assume that $\vec{x}_1$ satisfies $E[||\vec{x}_1||^4] < \infty$. Under these conditions, the estimated parameters $\hat{\veta}_T$ converges in distribution as
\begin{equation}
\sqrt{T}(\hat{\veta}_T - \mathring{\veta}) \overset{d}{\to} \mathcal{N}(\vec{0}_{L}, \vec{W})
\end{equation}
for positive-definite matrix $\vec{W}$ in the Appendix, \eqref{eq:asymptoticcov}.
\end{theorem}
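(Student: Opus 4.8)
The plan is to apply the standard M-estimation / quasi-maximum-likelihood machinery, building on the consistency already established in Theorem \ref{thm:consistency} and the local identifiability result of Theorem \ref{prop:local_identifiability}. Since $\hat{\veta}_T$ minimizes $\ell_T(\vec{S}_T; \veta)$ over $V'$, and by consistency $\hat{\veta}_T \to \mathring{\veta}$ in probability with $\mathring{\veta}$ in the interior of the identified set, for large $T$ the estimator lies in the interior and satisfies the first-order condition $\nabla_{\veta}\ell_T(\vec{S}_T; \hat{\veta}_T) = \vec{0}$. First I would perform a Taylor expansion of this gradient about $\mathring{\veta}$, writing
\begin{equation}
\vec{0} = \nabla \ell_T(\mathring{\veta}) + \nabla^2 \ell_T(\bar{\veta})(\hat{\veta}_T - \mathring{\veta}),
\end{equation}
where $\bar{\veta}$ lies on the segment between $\hat{\veta}_T$ and $\mathring{\veta}$. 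Rearranging gives $\sqrt{T}(\hat{\veta}_T - \mathring{\veta}) = -[\nabla^2\ell_T(\bar{\veta})]^{-1}\sqrt{T}\,\nabla\ell_T(\mathring{\veta})$, so the limiting distribution follows from a central limit theorem for the scaled score $\sqrt{T}\,\nabla\ell_T(\mathring{\veta})$ together with a law of large numbers for the Hessian.

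The key steps, in order, are as follows. The score term: since $\ell_T$ depends on the data only through $\vec{S}_T = T^{-1}\sum_j \vec{x}_j\vec{x}_j^{\tran}$, and $\vec{S}_T$ is an average of i.i.d. terms $\vec{x}_j\vec{x}_j^{\tran}$, I would apply the multivariate CLT to $\sqrt{T}(\vec{S}_T - \bSigma_{\vec{x}\vec{x}})$, which converges to a Gaussian with covariance determined by the fourth moments of $\vec{x}_1$ (hence the hypothesis $E[\|\vec{x}_1\|^4] < \infty$). Because $\nabla\ell_T(\mathring{\veta})$ is linear in $\vec{S}_T$ — differentiating $\log\det\vec{R}_{\vec{x}\vec{x}}(\veta) + \tr\vec{R}_{\vec{x}\vec{x}}^{-1}(\veta)\vec{S}_T$ in $\veta$ yields a term linear in $\vec{S}_T$ — the delta method carries this CLT through to the score, and evaluating at $\mathring{\veta}$ the mean-zero structure holds because $E[\vec{S}_T] = \bSigma_{\vec{x}\vec{x}} = \vec{R}_{\vec{x}\vec{x}}(\mathring{\veta})$ makes $\mathring{\veta}$ a stationary point of $\ell_0 = E[\ell_T]$. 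The Hessian term: by the weak law of large numbers $\vec{S}_T \to \bSigma_{\vec{x}\vec{x}}$, and continuity of the second derivatives of $\ell_T$ in $(\vec{S}_T, \veta)$ on the compact-enough region near $\mathring{\veta}$ (where eigenvalues of $\vec{R}_{\vec{x}\vec{x}}(\veta)$ are bounded away from zero by the $\epsilon$-restriction defining $V'$), together with $\bar{\veta}\to\mathring{\veta}$, gives $\nabla^2\ell_T(\bar{\veta}) \to \nabla^2\ell_0(\mathring{\veta})$ in probability. Finally, assembling via Slutsky's theorem produces the sandwich form $\vec{W} = \vec{H}^{-1}\vec{G}\vec{H}^{-1}$, where $\vec{H} = \nabla^2\ell_0(\mathring{\veta})$ is the limiting Hessian and $\vec{G}$ is the asymptotic score covariance.

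The main obstacle is establishing that the limiting Hessian $\vec{H} = \nabla^2\ell_0(\mathring{\veta})$ is positive definite, which is exactly what is needed both to invert it in the expansion and to guarantee that the limiting covariance $\vec{W}$ is itself positive definite (non-degenerate) as claimed. This is where Theorem \ref{prop:local_identifiability} enters: under Conditions \ref{cond:rotdet} and \ref{cond:fac_count}, the differential $d\vec{R}_{\vec{x}\vec{x}}(\mathring{\veta})$ is generically injective, so its columns — the derivatives $\partial\vec{R}_{\vec{x}\vec{x}}/\partial\eta_k$ — are linearly independent. I would show that the Hessian of the population objective factors through this Jacobian: a direct computation gives $\vec{H} = \vec{J}^{\tran}(\vec{R}^{-1}\otimes\vec{R}^{-1})\vec{J}$ (up to the appropriate vectorization and a factor of $\tfrac{1}{2}$), where $\vec{J}$ is the matrix whose columns are $\mathrm{vec}(\partial\vec{R}_{\vec{x}\vec{x}}/\partial\eta_k)$ evaluated at $\mathring{\veta}$ and $\vec{R} = \vec{R}_{\vec{x}\vec{x}}(\mathring{\veta})$. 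Since $\vec{R}^{-1}\otimes\vec{R}^{-1}$ is positive definite (as $\vec{R}\succ\vec{0}$) and $\vec{J}$ has full column rank precisely by the injectivity of the differential, $\vec{H}$ is positive definite. Because this injectivity holds only \emph{generically}, the positive-definiteness of $\vec{W}$ — and hence the non-degeneracy of the limit and the $T^{-1/2}$ rate — is asserted for $\mathring{\veta}$ outside a null set, consistent with the generic identifiability framework of the paper. The explicit form of $\vec{W}$ recorded in \eqref{eq:asymptoticcov} then follows by carrying out these derivative computations in full.
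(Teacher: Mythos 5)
Your proposal is correct, and the positive-definiteness argument at its core coincides with the paper's: your factorization $\vec{J}^{\tran}(\vec{R}^{-1}\otimes\vec{R}^{-1})\vec{J}$ is exactly the paper's expected second differential $\|\bSigma_{\vec{x}\vec{x}}^{-1/2}\,d\vec{R}_{\vec{x}\vec{x}}(\mathring{\veta},d\veta)\,\bSigma_{\vec{x}\vec{x}}^{-1/2}\|_F^2$ written in vectorized form, and both arguments deduce positive definiteness of the limiting Hessian $\vec{V}_0$ from injectivity of $d\vec{R}_{\vec{x}\vec{x}}(\mathring{\veta})$ via Theorem \ref{prop:local_identifiability}. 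Where you genuinely diverge is the route to the limit law itself. You take the classical Cram\'er path: first-order condition at $\hat{\veta}_T$, Taylor expansion of the sample score around $\mathring{\veta}$, a CLT for $\sqrt{T}(\vec{S}_T-\bSigma_{\vec{x}\vec{x}})$ pushed through the (affine-in-$\vec{S}_T$) score, and a Hessian limit obtained by joint continuity in $(\vec{S}_T,\veta)$ --- a step that is legitimately simple here precisely because the data enter $\ell_T$ only through $\vec{S}_T$, so no uniform law of large numbers over a function class is needed. The paper instead verifies the hypotheses of van der Vaart's Theorem 5.23: it establishes a local Lipschitz bound $|\ell_T(\veta_1)-\ell_T(\veta_2)|\leq m(r,\bSigma_{\vec{x}\vec{x}},\vec{S}_T)\,C(r,\bSigma_{\vec{x}\vec{x}})\,\|\veta_1-\veta_2\|_2$ and uses $E[\|\vec{x}_1\|^4]<\infty$ to get $E[m^2]<\infty$, then cites the M-estimator theorem directly for the sandwich covariance \eqref{eq:asymptoticcov}; your route uses the fourth moment instead to drive the CLT for $\vec{S}_T$. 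The paper's approach buys freedom from second-derivative manipulations of the sample objective and from the one technical wrinkle in yours: the mean-value form $\nabla\ell_T(\hat{\veta}_T)=\nabla\ell_T(\mathring{\veta})+\nabla^2\ell_T(\bar{\veta})(\hat{\veta}_T-\mathring{\veta})$ with a single intermediate point $\bar{\veta}$ is not valid for vector-valued gradients; you would need componentwise intermediate points or the integral form of the remainder (plus invertibility of the sample Hessian with probability tending to one, which your PD limit supplies). These are standard, fixable caveats rather than gaps, and your route has the pedagogical advantage of making the source of the sandwich structure $\vec{W}=\vec{V}_0^{-1}\vec{G}\vec{V}_0^{-1}$ and the role of the fourth moments fully explicit.
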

     \begin{proof}
     {See Appendix for proof.}
     \end{proof}

\section{Experiments}\label{sec:experiments}
\subsection{Numeric Comparison of Conditions}

  \begin{figure}
        \centering
        \includegraphics[width=\linewidth]{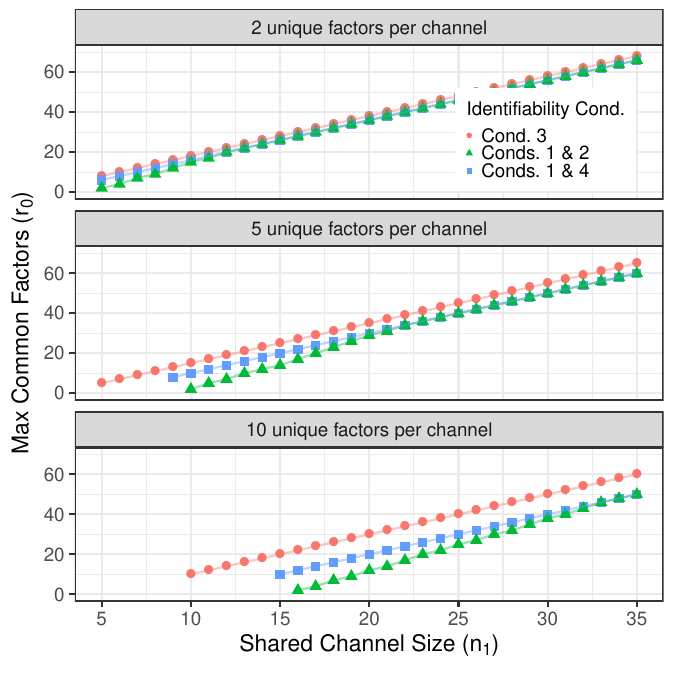}
        \caption{{Comparison of maximum common factor number $r_0$ under three identifiability conditions, for varying channel sizes. Channel structure depicted is three equally-sized channels, with $r_c = 2, 5, 10$ {distinct} factors for $c=1,2,3$.}}
        \label{fig:condcomparison}
    \end{figure}

    {
    In Section \ref{sec:identifiability}, conditions on the channel sizes and factor numbers and their implications for MFA identifiability are given. To gain intuition for how varying channel sizes and factor numbers affects the satisfaction of these conditions,  Figure \ref{fig:condcomparison} depicts the illustrative case of three equal-size channels.
    The figure compares Condition \ref{cond:rot_det_nesc}, which is \emph{necessary} for identifiability, to the hypotheses of Theorem \ref{prop:local_identifiability} and Proposition \ref{prop:global_identifiability} which are respectively \emph{sufficient} for generic local and global identifiability. To interpret Figure \ref{fig:condcomparison}, examine channel size $n_1 = 15$ in the middle panel with $r_c=5$. In this case, Proposition \ref{prop:local_ident_nesc} implies that, for $r_0 > 25$, the set of $\sim_2$-representatives $\mathbb{A}_L^* \times \mathbb{B}_L^* $ does not contain unique representatives of almost all $\sim_1$-equivalence classes, preventing the separation of signal and interference. This is indicated by the circle at $n_1=15$ and $r_0=25$. Further, $19$ is the maximum $r_0$ which guarantees local identifiability under Theorem \ref{prop:local_identifiability} as shown by the square at $n_1=15$ and $r_0=19$. Finally, $14$ is the largest $r_0$ which yields global identifiability under Proposition \ref{prop:global_identifiability}, which is indicated by the triangle at $n_1=15$ and $r_0=14$. The maximum $r_0$ for which generic global and local identifiability can be respectively guaranteed under Proposition \ref{prop:global_identifiability} and Theorem \ref{prop:local_identifiability} agree as channel size increases, but the channel size for which the condition agree increases as the distinct factor number increases. In addition, the gap between Condition \ref{cond:rot_det_nesc} (which is necessary for identifiability) and Conditions \ref{cond:rotdet} \& \ref{cond:fac_count} is constant in the shared channel size and small relative to the total factor number $r_0 + r$. Although the results of this paper give only sufficient conditions for local and global identifiability, the experiment in this section demonstrates that these conditions are close to Condition \ref{cond:rot_det_nesc} which is an upper-bound for MFA identifiability.
    For further discussion and comparisons with unequal channels, see \cite{Stanton2023}. }

\subsection{Asymptotic Behavior of Estimators}

 {
 To verify the consistency of $\hat{\veta}_{T}$ resulting from Theorem \ref{thm:consistency}, 
 Figure \ref{fig:consistency} shows the Normalized Mean Square Error (NMSE) $||\hat{\veta}_{T} - \mathring{\veta}||^2 / ||\mathring{\veta}||^2$ when the model is correctly specified and the channel sizes and factor numbers satisfy Conditions 1 \& 2. For each trial, non-zero entries of the true parameters $\mathring{\vec{A}} \in \mathbb{A}^*_L, \mathring{\vec{B}} \in\mathbb{B}^*_L$ and $\mathring{\bPhi}$ are independent $\mathcal{N}(0, 1)$ samples. For entries constrained to be non-negative, the absolute value is taken. Initial values for the estimation procedure of \cite{ramirez2020} are independently obtained in the same fashion, and $\hat{\veta}_{T}$ is computed from $T$ independent samples with covariance $\bSigma_{\vec{x}\vec{x}}$. }

 {
 For $C=3$ channels with $n_c=8$, Figure \ref{fig:condcomparison} shows that the largest $r_0$ meeting Conditions 1 \& 2 is $r_0 = 9$ while the largest $r_0$ meeting Conditions 1 \& 4 is $r_0 = 12$. The decreasing NMSE in Figure \ref{fig:consistency} for $r_0 = 9$ verifies that the parameters $\veta$ can be consistently estimated when global identifiability is guaranteed, while the non-decreasing NMSE for $r_0 = 12$ shows that local identifiability alone is insufficient for consistency. The decreasing NMSE for intermediate cases $9 < r_0 < 12$ may indicate that MFA is globally identifiable for those factor numbers, but this is not given by Proposition \ref{prop:global_identifiability}.}

  \begin{figure}
        \centering
        \includegraphics[width=\linewidth]{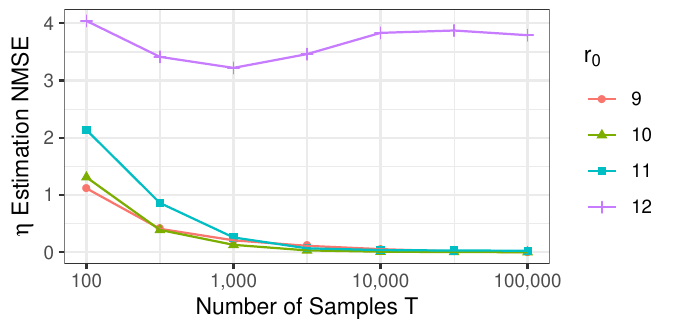}
        \caption{{Experimental validation of Theorem \ref{thm:consistency}  for varying common factor number $r_0$, with $C=3$ channels of size $n_c=8$ and $r_c=2$ {distinct} factors, $c=1,2,3$. Points indicate average NMSE from $1000$ Monte Carlo trials at each setting.  }}
        \label{fig:consistency}
    \end{figure}

%%%%%%%%%%%%%%%%%%%%%%%%%%%
\section{Discussion}
\label{sec:discussion}
%%%%%%%%%%%%%%%%%%%%%%%%%%%%

This paper provides a set of theoretical results for multi-channel factor analysis, which justify applying MFA to analyze the second-order structure of multi-channel observations. Conditions on the allowable number of common and {distinct} factors which guarantee generic uniqueness of the decomposition of the covariance into across-channel, within-channel, and idiosyncratic components are set out in Section \ref{sec:identifiability}. These identifiability results ensure that conclusions drawn from MFA are meaningful as long as the channel sizes and factor numbers satisfy the appropriate conditions. Further, although the estimation procedure proposed in \cite{ramirez2020} is obtained by likelihood maximization under the assumption of normality for the latent vectors, the results of Section \ref{sec:asymptotics} demonstrate that violation of this assumption does not affect the asymptotic validity of the resulting estimators.

%In addition, the derivation of compact expressions of the Fisher information matrix under normality assumptions enhances understanding of the relationships between the covariance parameters and allows for approximate uncertainty quantification for the parameter estimates in the large-sample case.

 When introducing multi-channel factor analysis, \cite{ramirez2020} discusses the broad potential applicability of the MFA model to diverse problems in signal processing, statistics, and machine learning where channel structure is a relevant feature. The promise of this method comes from the utility of the decomposition of the observation covariance into structured parts corresponding to the latent signal, interference and noise, the uniqueness of which can now be verified. {The identifiability results of this paper are obtained assuming that the signal and interference dimensions are prespecified. Many applications of interest would require estimating these dimensions from the observations, which is a challenging order selection problem. In single-channel FA, techniques such as maximization of an information criterion \cite{Akaike1987}, bi-cross-validation \cite{Owen2016}, or eigenvalue analysis \cite{Ahn2009} can be used to estimate the number of common factors. Adapting these techniques for order selection in MFA is an important direction for future work.}

 \section*{Acknowledgment}\label{sec:ack}
 The authors thank the reviewers for their constructive comments.
 This work was supported in part by National Science Foundation grants DMS-1923142, CNS-1932413, and DMS-2123761. The work of I. Santamaria was funded by AEI /10.13039/501100011033 and FEDER UE under grant PID2022-137099NB-C43 (MADDIE). The work of D. Ram{\'\i}rez was partially supported by MICIU/AEI/10.13039/501100011033/FEDER, UE, under grant PID2021-123182OB-I00 (EPiCENTER), by the Office of Naval Research (ONR) Global under contract N62909-23-1-2002, and by the Spanish Ministry of Economic Affairs and Digital Transformation and the European Union-NextGenerationEU through the UNICO 5G I+D SORUS project.

\bibliographystyle{ieeetr}
{\footnotesize
\bibliography{mainfinalfinalfinal}
}
\setcounter{secnumdepth}{0}
\begin{appendices}

%%%%%%%%%%%%%%%%%%%%%%%%%%%%%%%%%%%%%%%%%%%%%%%%%%%%%
\section*{Appendix \\ Proofs of Theorems $1-4$}\label{app:ident}
%%%%%%%%%%%%%%%%%%%%%%%%%%%%%%%%%%%%%%%%%%%%%%%%%%%%%
    For ease of notation, let $\mathbb{A} \oplus \mathbb{B} \subset \mathbb{R}^{n \times (r_0 + r)}$ be the subspace of matrices which can be written as $[\vec{A}\ \vec{B}]$ for some $\vec{A} \in \mathbb{A}, \vec{B} \in \mathbb{B}$. The spaces $\mathbb{A} \oplus \mathbb{B}$ and $\mathbb{A} \times \mathbb{B}$ are trivially isomorphic. Further, let $(\mathbb{A} \oplus \mathbb{B})^*$ contain all FCR elements of $\mathbb{A} \oplus \mathbb{B}$. As long as $r_0 + r \leq n$ and $r_c \leq n_c$ for each $c$, $(\mathbb{A} \oplus \mathbb{B})^*$ is an open submanifold. The set $\mathbb{A}_L \oplus \mathbb{B}_L$ is defined similarly. 
    
    As many of the propositions proved here involve null sets, we distinguish between null subsets of the unrestricted loadings $\mathbb{A} \times \mathbb{B}$ and null subsets of the equivalence class representatives $\mathbb{A}_L^* \times \mathbb{B}_L^*$. A null subset of  $\mathbb{A} \times \mathbb{B}$ need not correspond to a null subset of representatives. For example, $\mathbb{A}_L^* \times \mathbb{B}_L^*$ is itself null in $\mathbb{A} \times \mathbb{B}$. Lemma \ref{lem:nullequiv} connects the two notions.
    Proofs of the following lemmas can be found in the Supplementary Materials for this paper.

%%%%%%%%%%%%% LEMMAS
    \begin{lemma}\label{lem:nullequiv}
    If $\mathcal{C} \subset \mathbb{A} \times \mathbb{B}$ is a null set which is a union of $\sim_2$-equivalence classes, then the set of representatives $\tilde{C} \subset \mathbb{A}^*_L \times \mathbb{B}^*_L$ is null in  $\mathbb{A}_{L} \times \mathbb{B}_L$.
    \end{lemma}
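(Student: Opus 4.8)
The plan is to treat the reduction to LT representatives as a smooth group action and to show that the action map is generically a local diffeomorphism, which then lets a null set upstairs be pushed down to a null set among the representatives.

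First I would set up the group $G = \mathrm{O}(r_0) \times \mathrm{O}(r_1) \times \cdots \times \mathrm{O}(r_C)$ that generates $\sim_2$, together with the smooth action map
\begin{equation*}
\Phi : (\mathbb{A}_L^* \times \mathbb{B}_L^*) \times G \to \mathbb{A} \times \mathbb{B}, \quad \big((\vec{A}, \vec{B}), \vec{Q}\big) \mapsto \big(\vec{A}\vec{Q}_{00},\ \vec{B}\,\blkdiag(\vec{Q}_{11}, \dots, \vec{Q}_{CC})\big),
\end{equation*}
which is well defined since block-diagonality of $\vec{B}$ forces the image into $\mathbb{B}$, and whose every value is $\sim_2$-equivalent to $(\vec{A}, \vec{B})$. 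A dimension count gives $\dim(\mathbb{A}_L \times \mathbb{B}_L) + \dim G = \dim(\mathbb{A} \times \mathbb{B})$, because $\dim G = \binom{r_0}{2} + \sum_{c} \binom{r_c}{2}$ is exactly the number of LT entries suppressed in passing from $\mathbb{A} \times \mathbb{B}$ to $\mathbb{A}_L \times \mathbb{B}_L$; thus $\Phi$ maps between equidimensional manifolds. Equipping the Euclidean factors with Lebesgue measure and $G$ with Haar measure (all smooth volume measures), local diffeomorphisms preserve null sets and send positive-measure sets to positive-measure sets.

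Next I would run a contradiction. Identifying $\tilde{C}$ with $\mathcal{C} \cap (\mathbb{A}_L^* \times \mathbb{B}_L^*)$, the unique representatives (Proposition \ref{prop:LT}) of the $\sim_2$-classes comprising $\mathcal{C}$, suppose $\tilde{C}$ is not null. Discarding the null locus where some top block $\vec{A}_1$ or $\vec{B}_{c,1}$ is singular leaves a positive-measure $\tilde{C}' \subseteq \tilde{C}$ with all top blocks invertible, so $\tilde{C}' \times G$ has positive product measure. Since $\mathcal{C}$ is a union of entire $\sim_2$-classes, $\Phi(\tilde{C}' \times G) \subseteq \mathcal{C}$. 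If $\Phi$ is a local diffeomorphism at each point of $\tilde{C}' \times G$, it sends this positive-measure set to a positive-measure subset of $\mathcal{C}$, contradicting that $\mathcal{C}$ is null; hence $\tilde{C}$ is null, as claimed.

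The hard part will be verifying that $d\Phi$ is invertible wherever the top blocks are invertible. Here $\Phi$ factors over the common block and the $C$ distinct blocks into maps of the form $(\vec{M}, \vec{Q}) \mapsto \vec{M}\vec{Q}$, with $\vec{M}$ of size $N \times p$ having lower-triangular top block $\vec{M}_1$ and $\vec{Q} \in \mathrm{O}(p)$; the differential sends $(\delta\vec{M}, \vec{Q}\vec{S})$ with $\vec{S}$ skew-symmetric to $\delta\vec{M}\,\vec{Q} + \vec{M}\vec{Q}\vec{S}$, and by equidimensionality invertibility reduces to injectivity. Setting this to zero and reading off the top block yields $\vec{M}_1^{-1}\delta\vec{M}_1 = -\vec{Q}\vec{S}\vec{Q}^{\tran}$, whose left side is lower triangular while its right side is skew-symmetric, so both vanish; this forces $\delta\vec{M}_1 = 0$ and $\vec{S} = 0$, after which the remaining rows give $\delta\vec{M}_2 = 0$. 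This is essentially the smoothness and local invertibility of the LQ factorization of the top block, and carrying it out cleanly, together with the measure-theoretic bookkeeping on $G$ in the degenerate case where some $r_c \le 1$ makes $G$ discrete, is the crux of the argument.
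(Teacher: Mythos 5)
Your proposal is correct and follows essentially the same route as the paper's (supplementary) proof: you parameterize each $\sim_2$-orbit through its LT representative via the smooth action of $\mathrm{O}(r_0)\times\mathrm{O}(r_1)\times\cdots\times\mathrm{O}(r_C)$, note the dimension count $\dim(\mathbb{A}_L\times\mathbb{B}_L)+\dim G=\dim(\mathbb{A}\times\mathbb{B})$, and show the equidimensional orbit map is a local diffeomorphism wherever the triangular top blocks are invertible, so a positive-measure set of representatives would sweep out a positive-measure union of $\sim_2$-classes, contradicting that $\mathcal{C}$ is null. Your verification of injectivity of the differential --- reading off the top block to get $\vec{M}_1^{-1}\delta\vec{M}_1=-\vec{Q}\vec{S}\vec{Q}^{\tran}$ and concluding both sides vanish because a lower-triangular skew-symmetric matrix is zero --- is exactly the right mechanism, and discarding the singular-top-block locus is harmless since it is the null zero set of the product of the diagonal entries of the LT blocks.
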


\begin{lemma}\label{LT_lemma}
    Let $\vec{H} = [\vec{H}^{\tran}_{1}\ \vec{H}^{\tran}_{2}]^{\tran}$ and $\vec{Z} = [\vec{Z}^{\tran}_{1}\ \vec{Z}^{\tran}_{2}]^\tran$ be $m \times p$ matrices with $\vec{H}_1,\vec{Z}_1 \in \mathbb{R}^{p \times p}$. If $\vec{H}_1$ is invertible, $\vec{H}_1$ and $\vec{Z}_1$ are LT, then $\vec{H}\vec{Z}^{\tran} + \vec{Z}\vec{H}^{\tran} = \vec{0}$ implies $\vec{Z} = \vec{0}$.
    \end{lemma}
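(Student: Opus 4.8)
The plan is to reinterpret the hypothesis as a skew-symmetry condition and then exploit the triangular structure of the leading blocks together with the invertibility of $\vec{H}_1$. Writing $\vec{K} \equiv \vec{H}\vec{Z}^\tran$, the equation $\vec{H}\vec{Z}^\tran + \vec{Z}\vec{H}^\tran = \vec{0}$ is exactly the statement $\vec{K} = -\vec{K}^\tran$, i.e.\ that $\vec{K}$ is skew-symmetric. I would partition $\vec{K}$ conformally with the block structure $\vec{H} = [\vec{H}_1^\tran\ \vec{H}_2^\tran]^\tran$ and $\vec{Z} = [\vec{Z}_1^\tran\ \vec{Z}_2^\tran]^\tran$, so that the leading $p \times p$ block of $\vec{K}$ is $\vec{H}_1\vec{Z}_1^\tran$ and the upper-right block is $\vec{H}_1\vec{Z}_2^\tran$, while the lower-left block is $\vec{H}_2\vec{Z}_1^\tran$. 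The strategy is to first extract $\vec{Z}_1 = \vec{0}$ from the leading block and then propagate this to $\vec{Z}_2 = \vec{0}$ via the off-diagonal block.

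First I would show $\vec{Z}_1 = \vec{0}$ using the leading block alone. Skew-symmetry of $\vec{K}$ forces $\vec{H}_1\vec{Z}_1^\tran + \vec{Z}_1\vec{H}_1^\tran = \vec{0}$. Since $\vec{H}_1$ is invertible, I would left-multiply by $\vec{H}_1^{-1}$ and right-multiply by $\vec{H}_1^{-\tran}$ to obtain $\vec{Z}_1^\tran\vec{H}_1^{-\tran} + \vec{H}_1^{-1}\vec{Z}_1 = \vec{0}$. Setting $\vec{W} \equiv \vec{H}_1^{-1}\vec{Z}_1$, this reads $\vec{W}^\tran = -\vec{W}$, so $\vec{W}$ is skew-symmetric. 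On the other hand, $\vec{W}$ is a product of lower-triangular matrices (the inverse of an invertible LT matrix is LT, and $\vec{Z}_1$ is LT by hypothesis), hence $\vec{W}$ is itself LT. A matrix that is simultaneously LT and skew-symmetric must vanish: its strictly-upper entries are zero by triangularity, its strictly-lower entries are then zero by skew-symmetry, and its diagonal is zero because a skew-symmetric matrix has zero diagonal. Thus $\vec{W} = \vec{0}$, and invertibility of $\vec{H}_1$ gives $\vec{Z}_1 = \vec{0}$.

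Second I would propagate this to $\vec{Z}_2$ using the off-diagonal block. Equating the upper-right block of $\vec{K}$ with minus the transpose of its lower-left block yields $\vec{H}_1\vec{Z}_2^\tran = -\vec{Z}_1\vec{H}_2^\tran$. Having already established $\vec{Z}_1 = \vec{0}$, the right-hand side vanishes, so $\vec{H}_1\vec{Z}_2^\tran = \vec{0}$; invertibility of $\vec{H}_1$ then forces $\vec{Z}_2 = \vec{0}$. Combining the two steps gives $\vec{Z} = \vec{0}$, as claimed. Note that the bottom-right block condition $\vec{H}_2\vec{Z}_2^\tran + \vec{Z}_2\vec{H}_2^\tran = \vec{0}$ is never used; it is automatically satisfied once $\vec{Z}_2 = \vec{0}$.

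The main obstacle is the first step, and specifically the insight that the change of variables $\vec{W} = \vec{H}_1^{-1}\vec{Z}_1$ converts the leading-block constraint into the assertion that a single matrix is both LT and skew-symmetric. Recognizing that the only matrix with both properties is $\vec{0}$ is the crux; once $\vec{Z}_1 = \vec{0}$ is secured, the rest follows immediately from invertibility of $\vec{H}_1$.
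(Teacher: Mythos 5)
Your proof is correct: converting the hypothesis into skew-symmetry of $\vec{K}=\vec{H}\vec{Z}^{\tran}$, observing that $\vec{W}=\vec{H}_1^{-1}\vec{Z}_1$ is simultaneously lower-triangular and skew-symmetric (hence zero), and then killing $\vec{Z}_2$ via the off-diagonal block identity $\vec{H}_1\vec{Z}_2^{\tran}=-\vec{Z}_1\vec{H}_2^{\tran}$ is exactly the standard argument for this lemma. The paper relegates its proof to the Supplementary Materials, but the same technique appears in its Appendix proof of Theorem 1, where a triangular matrix forced to be skew-symmetric is concluded to vanish, so your route matches the paper's in all essentials.
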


\begin{lemma}(Maximal Rank Sub-Loadings)\label{lem:maxranksubload}
    Let $(\mathbb{A} \oplus \mathbb{B})^{**}$ be the subset of $(\mathbb{A} \oplus \mathbb{B})^{*}$ containing FCR $[\vec{A}\ \vec{B}]$ where, for all $c=1,\dots, C$, the rank of all submatrices of the channel $c$ loadings $[\vec{A}_c\ \vec{B}_c]$ are maximal. That is, if $\vec{D}$ is any $s \times t$ submatrix of $[\vec{A}_c\ \vec{B}_c]$, then
    $\rank(\vec{D}) = \min\{s, t\}$.
     The complement $\mathbb{A} \oplus \mathbb{B} \setminus (\mathbb{A} \oplus \mathbb{B})^{**}$ is null.
    \end{lemma}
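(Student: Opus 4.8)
The plan is to express every way in which an element of $\mathbb{A} \oplus \mathbb{B}$ can fail to lie in $(\mathbb{A} \oplus \mathbb{B})^{**}$ as the simultaneous vanishing of a finite collection of minors, each of which is a polynomial in the freely varying entries of $\vec{A}$ and of the blocks $\vec{B}_1, \dots, \vec{B}_C$. Identifying $\mathbb{A} \oplus \mathbb{B}$ with $\mathbb{R}^{d}$, $d = nr_0 + \sum_{c} n_c r_c$, via these entries, I would invoke the standard fact that the zero set of a polynomial that is not identically zero has Lebesgue measure zero, hence is null in the sense defined in Section \ref{sec:identifiability}. It therefore suffices to show that each failure condition is contained in the zero set of some minor that is \emph{not} identically zero, since a finite union of null sets is null.

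There are two families of conditions. The first is failure of full column rank of $[\vec{A}\ \vec{B}]$, placing the loading outside $(\mathbb{A} \oplus \mathbb{B})^*$; this happens exactly when every $(r_0+r)\times(r_0+r)$ minor of $[\vec{A}\ \vec{B}]$ vanishes. The second is, for some channel $c$ and some row set $\alpha$ and column set $\beta$, the submatrix $\vec{D} = [\vec{A}_c\ \vec{B}_c][\alpha,\beta]$ having $\rank(\vec{D}) < \min\{|\alpha|,|\beta|\}$, which happens exactly when every minor of $\vec{D}$ of order $\min\{|\alpha|,|\beta|\}$ vanishes. There are $C$ channels and only finitely many choices of $\alpha,\beta$ per channel, together with the single FCR condition, so the number of conditions is finite. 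For each channel-submatrix condition the entries of $[\vec{A}_c\ \vec{B}_c]$ are \emph{unconstrained} coordinates of $\mathbb{A} \oplus \mathbb{B}$, so I can set the chosen block to contain an identity submatrix of order $\min\{|\alpha|,|\beta|\}$, witnessing a nonvanishing minor and hence that the associated polynomial is not identically zero.

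For the FCR condition I would exhibit one explicit $[\vec{A}\ \vec{B}] \in (\mathbb{A} \oplus \mathbb{B})^*$: using $r_c \le n_c$, take the columns of each $\vec{B}_c$ to be $r_c$ distinct standard basis vectors of $\mathbb{R}^{n_c}$, and using $r_0 + r \le n$, take the columns of $\vec{A}$ to be $r_0$ standard basis vectors of $\mathbb{R}^{n}$ supported on coordinates disjoint from those occupied by the $\vec{B}_c$. The resulting $r_0+r$ columns are linearly independent, so some $(r_0+r)$-order minor is nonzero. Each failure set is thus a proper algebraic subvariety, their finite union is null, and $\mathbb{A} \oplus \mathbb{B} \setminus (\mathbb{A} \oplus \mathbb{B})^{**}$ is precisely this union. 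The step requiring the most care is the FCR witness, since the block-diagonal structure of $\vec{B}$ restricts which columns are available; the disjoint-support construction must genuinely use both $r_c \le n_c$ and $r_0 + r \le n$ to guarantee that $r_0 + r$ linearly independent columns exist. The remaining steps are routine genericity bookkeeping.
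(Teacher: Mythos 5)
Your proposal is correct and takes essentially the same route as the paper's proof (deferred to the Supplementary Materials): both reduce every failure mode --- loss of FCR of $[\vec{A}\ \vec{B}]$ or rank deficiency of some $s\times t$ submatrix of a channel loading $[\vec{A}_c\ \vec{B}_c]$ --- to the vanishing of a minor that is a not-identically-zero polynomial in the free coordinates of $\mathbb{A}\oplus\mathbb{B}$, so that the complement of $(\mathbb{A}\oplus\mathbb{B})^{**}$ sits inside a finite union of proper algebraic subvarieties, which are null. Your disjoint-support standard-basis witness correctly supplies the needed nonvanishing top-order minor for the FCR condition (important, since the structural zeros of $\vec{B}$ make many $(r_0+r)$-order minors identically zero), and it properly isolates where the standing assumptions $r_c \le n_c$ and $r_0 + r \le n$ enter.
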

 \begin{proof}[Proof of Theorem 1]\label{sec:thm1proof}
    Define $\mathcal{B} \subset \mathbb{A} \times \mathbb{B} \times \mathrm{Diag}_{\geq 0}(n)$ as the subset where $(\vec{A}, \vec{B}, \bPhi)$ does not have globally identified noise variances. For any $\bPhi' \succeq \bPhi$, if $(\vec{A}, \vec{B}, \bPhi)$ is in $\mathcal{B}$ then so too is $(\vec{A}, \vec{B}, \bPhi')$ as $\vec{R}_{\vec{x}\vec{x}}(\vec{A}, \vec{B}, \bPhi) = \vec{R}_{\vec{x}\vec{x}}(\tilde{\vec{A}}, \tilde{\vec{B}}, \tilde{\bPhi})$ implies $\vec{R}_{\vec{x}\vec{x}}(\vec{A}, \vec{B}, \bPhi + [\bPhi' - \bPhi]) = \vec{R}_{\vec{x}\vec{x}}(\tilde{\vec{A}}, \tilde{\vec{B}}, \tilde{\bPhi} + [\bPhi' - \bPhi])$ with $\bPhi' \neq \tilde{\bPhi} + [\bPhi' - \bPhi]$. As the set of $\bPhi'$ greater than $\bPhi$ has positive Lebesgue measure in $\mathrm{Diag}(n)$, $\mathcal{B}$ is null in $ \mathbb{A} \times \mathbb{B} \times \mathrm{Diag}_{\geq 0}(n)$ iff its projection onto $\mathcal{A} \times \mathcal{B}$ is null. That this projection is null is shown in the remainder of the proof.
    
    Let $\mathcal{U} \subset \mathbb{A}\oplus \mathbb{B}$ contain those $[\vec{A}\ \vec{B}] $ such that there exists some other $[\tilde{\vec{A}}\ \tilde{\vec{B}}]$ and a diagonal $\bvarphi = \bPhi - \tilde{\bPhi}$ with
    \begin{align}\label{eq:nonseplowrank}
    \vec{A}\vec{A}^{\tran} + \vec{B}\vec{B}^{\tran} + \bvarphi &= \tilde{\vec{A}}\tilde{\vec{A}}^{\tran} + \tilde{\vec{B}}\tilde{\vec{B}}^{\tran},
    & \bvarphi &\neq \vec{0}.
    \end{align}
    If $[\vec{A}\ \vec{B}] \in \mathcal{U}$ for $\bvarphi = \bPhi - \tilde{\bPhi} \neq 0$, then $(\vec{A}, \vec{B}, \bPhi)$ is in $\mathcal{B}$, and so showing $\mathcal{U}$ to be null will imply that $\mathcal{B}$ is null.
    
    To show $\mathcal{U}$ is null under the conditions of Theorem \ref{thm:phi_separation}, $\mathcal{U}$ is partitioned into a number of cases. Let $\mathcal{U}^*$ contain the elements of $\mathcal{U}$ satisfying the maximal rank condition of Lemma \ref{lem:maxranksubload}, $\mathcal{U}^* = (\mathbb{A} \oplus \mathbb{B})^{**} \cap \mathcal{U}$.  Next, note that $\mathcal{U}^*$ can be written as the finite union of $\mathcal{U}^*_{\beta}$ where $\beta \subset \{1,\dots,n\}$ is a proper subset of the possible indices. The subset $\mathcal{U}_{\beta}$ is obtained by adding the restriction that $[\bvarphi]_{ii} = 0$ for all $i\in\beta$ and $[\bvarphi]_{jj} \neq 0$ for $j \in \beta^c$ to \eqref{eq:nonseplowrank}. The proof proceeds in two steps. In the first step, $\beta$ is the empty set and so $\bvarphi$ is non-singular. Results from differential geometry will imply that $\mathcal{U}^*_{\emptyset}$ is null. In the second step, the diagonal of $\bvarphi$ has zeros, which enables reduction to the invertible case with smaller $\vec{n}$ and $\vec{r}$. 
    \subsubsection*{Case $1$: $\varphi$ non-singular}
    In the primary case, $\beta = \emptyset$ and so $\bvarphi$ is non-singular. To eliminate the quantification over $\tilde{\vec{A}}$ and  $\tilde{\vec{B}}$ in the definition of $\mathcal{U}$, construct the block matrices
    \[
    \vec{M} = \begin{bmatrix}
        \bvarphi & \vec{A} & \vec{B} \\
        \vec{A}^{\tran} & -\vec{I}_{r_0} & \vec{0} \\
        \vec{B}^{\tran} & \vec{0} & -\vec{I}_{r} \\
    \end{bmatrix},\quad 
    \vec{M}_c = 
    \begin{bmatrix}
        \bvarphi_{c} & \vec{A}_{c} & \vec{B}_c \\
        \vec{A}^{\tran}_{c} & -\vec{I}_{r_0} & \vec{0} \\
        \vec{B}^{\tran}_{c} & \vec{0}  & -\vec{I}_{r_c}
    \end{bmatrix},
    \]
    for $c=1,\dots, C$, where $\bvarphi_c$ is the submatrix of $\bvarphi$ in the $c$th channel. Additivity of rank with respect to the Schur complement implies
    \begin{equation}\label{eq:rankeqns}
    \begin{aligned}
    r_0 + r + \rank(\tilde{\vec{A}}\tilde{\vec{A}}^{\tran} + \tilde{\vec{B}}\tilde{\vec{B}}^{\tran}) = n + \rank(\vec{H}), \\
    r_0 + r_c + \rank(\tilde{\vec{A}}_c\tilde{\vec{A}}_c^{\tran} + \tilde{\vec{B}}_c\tilde{\vec{B}}_c^{\tran}) = n_c + \rank(\vec{H}_c), \\
    \end{aligned}
    \end{equation}
    for $c=1,\dots, C$, where $\vec{H}$ is 
    \begin{equation}\label{eq:Hdef}
    \vec{H}(\vec{A}, \vec{B}, \bvarphi) = \begin{bmatrix}
        \vec{I}_{r_0} + \vec{A}^{\tran}\bvarphi^{-1}\vec{A} & \vec{A}^{\tran}\bvarphi^{-1}\vec{B} \\
        \vec{B}^{\tran}\bvarphi^{-1}\vec{A}^{\tran} & \vec{I}_{r} + \vec{B}^{\tran}\bvarphi^{-1}\vec{B}
    \end{bmatrix},
    \end{equation}
    and similarly $\vec{H}_c$ is
    \[
    \vec{H}_c(\vec{A}_c, \vec{B}_c, \bvarphi_c) = \vec{I}_{r_0+r_c} + [\vec{A}_c\ \vec{B}_c]^{\tran}\bvarphi_c^{-1}[\vec{A}_c\ \vec{B}_c].
    \]
    % The MFA covariance model establishes rank conditions on the possible $\tilde{\vec{A}}\tilde{\vec{A}}^{\tran} + \tilde{\vec{B}}\tilde{\vec{B}}^{\tran}$ and $\tilde{\vec{A}}_c\tilde{\vec{A}}_c^{\tran} + \tilde{\vec{B}}_c\tilde{\vec{B}}_c^{\tran}$, and these rank conditions are reflected through $\vec{M}$ and $\vec{M}_c$ onto $\vec{H}$ and $\vec{H}_c$ which removes the quantification over $\tilde{\vec{A}}$ and $\tilde{\vec{B}}$. 
    Further, the block-diagonal structure of $\vec{B}$ yields that the lower-right part of $\vec{H}$ is block-diagonal with blocks of size $r_c \times r_c, c=1,\dots, C$. The lower-right part of $\vec{H}_{c}$ equals the $c$th block of the lower-right part of $\vec{H}$. Combining the bounds
    \begin{equation*}
    \begin{aligned}
     \rank(\tilde{\vec{A}}\tilde{\vec{A}}^{\tran} + \tilde{\vec{B}}\tilde{\vec{B}}^{\tran}) &\leq r_0 + r,\\
     \rank(\tilde{\vec{A}}_c\tilde{\vec{A}}_c^{\tran} + \tilde{\vec{B}}_c\tilde{\vec{B}}_c^{\tran}) &\leq \min(n_c, r_0 + r_c),
     \end{aligned}
    \end{equation*}
    with \eqref{eq:rankeqns} yields that 
    \begin{equation}\label{eq:HrankUB}
    \begin{aligned}
    \rank\left(\vec{H}(\vec{A}, \vec{B}, \bvarphi)\right) &\leq 2(r_0 +r) - n, \\
    \rank(\vec{I}_{r_c} + \vec{B}^{\tran}_{c}\bvarphi_{c}^{-1} \vec{B}_c) &\leq \min\{r_c, 2(r_0 + r_c) - n_c\}, \\
    \end{aligned}
    \end{equation}
    for $c=1,\dots, C$. {To establish the second line of \eqref{eq:HrankUB}, we combine \eqref{eq:rankeqns} with the above bounds to yield that $\rank(\vec{H}_c) \leq \min\{r_0 + r_c, 2(r_0 + r_c) - n_c\}$. As $\vec{I}_{r_c} + \vec{B}_c^{\tran}\bvarphi_{c}^{-1}\vec{B}_c$ is the lower-right block of $\vec{H}_{c}$ of size $r_c \times r_c$, its rank is bounded above by the minimum of the block size and rank of the whole matrix as $\min\{r_c, \min\{r_0 +r_c, 2(r_0 + r_c) - n_c)\}$. This expression then equals the RHS of the second line of \eqref{eq:HrankUB}.}
    
   Therefore, showing that the set of $[\vec{A}\ \vec{B}]$ where $\vec{H}(\vec{A}, \vec{B}, \bvarphi)$ satisfies \eqref{eq:HrankUB} for some invertible $\bvarphi$ is null  implies that $\mathcal{U}^*_{\emptyset}$ is null as well. If either $2(r_0+r)- n < 0$ or there is a $c$ with $2(r_0 + r_c) - n_c < 0$, a bound in \eqref{eq:HrankUB} is negative and so $\mathcal{U}_{\emptyset}$ is empty. For the remainder of this case, assume that $2(r_0 + r)\geq n$ and $2(r_0 + r_c) \geq n_c$ for all $c$.

     For the codomain of $\vec{H}$, let $\mathcal{S} \subset \mathrm{Sym}(r_0 + r)$ contain the vector space of all symmetric matrices whose $r\times r$ lower-right part is block-diagonal with blocks of sizes $r_c \times r_c$ for $c=1,\dots, C$, which has dimension
     \[
     \dim\mathcal{S} = \textstyle \frac{r_0(r_0+1)}{2} + r_0r + \textstyle \sum_{c=1}^C \frac{r_c (r_c+1)}{2}.
     \]
     Recall that both $(\mathbb{A} \oplus \mathbb{B})^{**}$, which contains $[\vec{A}\ \vec{B}]$ satisfying the maximal rank submatrix condition of Lemma \ref{lem:maxranksubload}, and the non-singular diagonal matrices $\mathrm{Diag}^*(n)$ are open submanifolds of their respective vector spaces. So, $\vec{H}$ is a smooth map from the product manifold $(\mathbb{A} \oplus \mathbb{B})^{**} \times \mathrm{Diag}^*(n)$ to $\mathcal{S}$. The differential $d\vec{H}$ of this map at $(\vec{A}, \vec{B}, \bvarphi)$ is
     \begin{equation*}
     \begin{aligned}
     d\vec{H} &= [\vec{A}\ \vec{B}]^{\tran}\bvarphi^{-1}[d\vec{A}\ d\vec{B}] + [d\vec{A}\ d\vec{B}]^{\tran}\bvarphi^{-1}[\vec{A}\ \vec{B}]\\ 
     &+ [\vec{A}\ \vec{B}]^{\tran}\bvarphi^{-1}d\bvarphi\bvarphi^{-1}[\vec{A}\ \vec{B}].
     \end{aligned}
     \end{equation*}
     The differential is surjective. To see this, consider the subspace of  tangent vectors $(d\vec{A}, d\vec{B}, \vec{0})$ with 
     $
     [d\vec{A}\ d\vec{B}] = \bvarphi^{-1}[\vec{A}\ \vec{B}]\vec{L},
     $
     for $\vec{L} \in \mathbb{R}^{(r_0 + r) \times (r_0 + r)}$ lower-triangular and partitioned as $\vec{L} = [\vec{L}_{11}\ \vec{0}; \vec{L}_{12}\ \vec{L}_{22}]$,
     where $\vec{L}_{22}$ is block-diagonal with $c$th block of size $r_c \times r_c$. The space of such $\vec{L}$ has equal dimension to $\mathcal{S}$. It can be verified that $\bvarphi^{-1}[\vec{A}\ \vec{B}]\vec{L}$ has structural zeros in the appropriate places, and so is a valid choice for $[d\vec{A}\ d\vec{B}]$. On this subspace, $d\vec{H}$ is injective. Suppose that, for some $\vec{L}$ with the above structure, 
     \[
     d\vec{H} = \vec{0} = [\vec{A}\ \vec{B}]^{\tran}\bvarphi^{-2}[\vec{A}\ \vec{B}]\vec{L} + \vec{L}^{\tran}[\vec{A}\ \vec{B}]^{\tran}\bvarphi^{-2}[\vec{A}\ \vec{B}].
     \]
     The matrix  $[\vec{A}\ \vec{B}]^{\tran}\bvarphi^{-2}[\vec{A}\ \vec{B}]$ is positive-definite as $\bvarphi^{-2}$ is positive and $[\vec{A}\ \vec{B}]$ is FCR. So, $[\vec{A}\ \vec{B}]^{\tran}\bvarphi^{-2}[\vec{A}\ \vec{B}]$ has Cholesky-type decomposition $\vec{U}\vec{U}^{\tran}$ with $\vec{U}$ upper-triangular and non-singular. This is obtained by taking the usual Cholesky decomposition of the original matrix with the order of rows and columns reversed. So, the above equation can be manipulated to yield $\vec{U}^{\tran}\vec{L}(\vec{U}^{-1})^{\tran} = -\vec{U}^{-1}\vec{L}^{\tran}\vec{U}$ and so $\vec{U}^{\tran}\vec{L}(\vec{U}^{-1})^{\tran}$ is skew-symmetric. However, as the LHS is lower-triangular while the right is upper-triangular, we must also have that $\vec{U}^{\tran}\vec{L}(\vec{U}^{-1})^{\tran}$ diagonal. Diagonal skew-symmetric matrices must be zero, and $\vec{U}^{\tran}\vec{L}(\vec{U}^{-1})^{\tran}$ implies $\vec{L}=\vec{0}$ as $\vec{U}$ is invertible. So, $d\vec{H}$ is injective on a subspace with the same dimension as the codomain, and so $d\vec{H}$ is surjective. As $(\vec{A}, \vec{B}, \bvarphi) \in (\mathbb{A} \oplus \mathbb{B})^{**} \times \mathrm{Diag}^*(n) $ was arbitrary, $\vec{H}$ is a smooth submersion.

     Next, we will show that the subset of $\mathcal{S}$ where the rank conditions  \eqref{eq:HrankUB} are satisfied can be written as a union of embedded submanifolds. For any index set $\alpha \subset \{1,\dots,(r_0+r)\}$, relate $\alpha$ to the block-structure of $\vec{H}$ by defining $\rho_0\equiv |\alpha \cap \{1,\dots,r_0\}|$ and $\rho_c \equiv |\alpha \cap \{(r_{<c}+1),\dots,(r_{<c} + r_c)\}|$ for each $c$. Then for those $\alpha$ where $\brho \equiv [\rho_0, \rho_1, \dots, \rho_C]$ satisfies 
     \begin{equation}\label{eq:rhoconds}
     \begin{aligned}
     \rho_0 + \textstyle \sum_{c=1}^C \rho_c \leq 2(r_0 + r) - n, \\
     \rho_c \leq 2(r_0 + r_c) - n_c,
     \end{aligned}
     \end{equation}
     let $\mathcal{S}_{\alpha}$ be the subset of $\vec{S} \in \mathcal{S}$ where the principal submatrix $\vec{S}[\alpha, \alpha]$ is non-singular {and $\rank(\vec{S}) = |\alpha|$}. If $\vec{H}(\vec{A}, \vec{B}, \bvarphi)$ satisfies \eqref{eq:HrankUB}, then $\vec{H}(\vec{A}, \vec{B}, \bvarphi) \in \mathcal{S}_{\alpha}$ for some $\alpha$ satisfying \eqref{eq:rhoconds}. For $\vec{S} \in \mathcal{S}_{\alpha}$, symmetry and invertibility of $\vec{S}[\alpha, \alpha]$ implies the complementary submatrix, $\vec{S}[\alpha^{c}, \alpha^{c}]$, is a smooth function of $\vec{S}[\alpha, \alpha]$ and $\vec{S}[\alpha, \alpha^c]$. This fact is equivalent to the well-known matrix completion result that, for $\vec{C} \in \Sym(n)$ partitioned as $\vec{C} = [\vec{C}_{11}\ \vec{C}_{12};\ \vec{C}^{\tran}_{12}\ \vec{C}_{22}]$ with $\vec{C}_{11}$ invertible and $\rank(\vec{C}_{11}) = \rank(\vec{C})$, we have $\vec{C}_{22} = \vec{C}_{12}\vec{C}^{-1}_{11}\vec{C}^{\tran}_{12}$.

    For all $\vec{S} \in \mathcal{S}_{\alpha}$, the submatrices $\vec{S}[\alpha, \alpha]$ and $\vec{S}[\alpha, \alpha^c]$ inherit structural zeros from $\mathcal{S}$ which are determined by $\alpha$. Define the vector space $\mathcal{W}_\alpha \subset \mathrm{Sym}(\rho_0 + \rho)$ containing all symmetric matrices with structural zeros in locations which match those of  $\vec{S}[\alpha, \alpha]$, and similarly let $\mathcal{Y}_{\alpha} \subset \mathbb{R}^{\rho \times (r_0+r - \rho)}$ be the vector space containing all matrices with structural zeros matching those of $\vec{S}[\alpha, \alpha^c]$. As $\vec{I}_{|\alpha|}$ is in  $\mathcal{W}_\alpha$, the subset $\mathcal{W}^*_\alpha$ containing only non-singular matrices is the preimage of $\det_{\alpha}^{-1}(\mathbb{R} \setminus \{0\})$ and so is a non-empty open submanifold of the same dimension as $\mathcal{W}_\alpha$. Here $\det_{\alpha} : \mathcal{W}_{\alpha} \to \mathbb{R}$ is the  usual determinant with domain restricted to $\mathcal{W}_{\alpha}$.  The dimensions of these spaces are
    \begin{equation}\label{eq:dimHparts}
    \begin{aligned}
    \dim(\mathcal{W}_{\alpha}) &= \textstyle \frac{\rho_0(\rho_0+1)}{2} + \rho_0\rho + \textstyle\sum_{c=1}^C \frac{\rho_c(\rho_c+1)}{2}, \\
    \dim(\mathcal{Y}_{\alpha})  &=  \rho_0(r_0 + r - \rho_0) - \rho_0\rho +\textstyle\sum_{c=1}^C \rho_c(r_c - \rho_c).
    \end{aligned}
    \end{equation}
    Then there is the obvious embedding from the product manifold $\mathcal{W}^*_{\alpha} \times \mathcal{Y}_{\alpha}$ into $\mathcal{S}$ obtained by setting $\vec{S}[\alpha, \alpha]$ equal to the first component, $\vec{S}[\alpha, \alpha^c]$ and $\vec{S}[\alpha^c, \alpha]$ to the second component and its transpose respectively, then smoothly obtaining $\vec{S}[\alpha^c, \alpha^c]$ as the unique low-rank matrix completion. So, $\mathcal{S}_{\alpha}$ can be treated as an embedded submanifold of dimension $\dim(\mathcal{W}_{\alpha})  + \dim(\mathcal{Y}_{\alpha})$.

    As $\mathcal{S}_{\alpha}$ is an embedded submanifold, $\vec{H}$ is automatically transverse for $\mathcal{S}_{\alpha}$ by virtue of being a submersion. So, a standard application of Sard's Theorem (see, e.g., \cite[Thm. 6.30]{Lee2002}) ensures that $\vec{H}^{-1}(\mathcal{S}_{\alpha})$ is an embedded submanifold of $(\mathbb{A} \oplus \mathbb{B})^{**} \times \mathrm{Diag}^*(n)$ with codimension equal to the codimension of $\mathcal{S}_{\alpha}$ in $\mathcal{S}$, namely
    \begin{equation*}\label{eq:Scodim}
    \begin{aligned}
    \mathop{\mathrm{codim}}\mathcal{S}_{\alpha} &= \textstyle\frac{r_0(r_0+1) -\rho_0(\rho_0 +1)}{2}  + \textstyle\sum_{c=1}^{C} \frac{r_c(r_c+1) - \rho_c(\rho_c+1)}{2} \\
    &+ (r_0 - \rho_0)r - \rho_0(r_0 - \rho_0) - \textstyle\sum_{c=1}^{C} \rho_c(r_c - \rho_c).
    \end{aligned}
    \end{equation*}
    Let $\pi$ be the projection map from $(\mathbb{A} \oplus \mathbb{B})^{**} \times \mathrm{Diag}^*(n)$ to $(\mathbb{A} \oplus \mathbb{B})^{**}$. Dimensional considerations \cite[p 131]{Lee2002} then imply that  $\pi\left (\vec{H}^{-1}(\mathcal{S}_{\alpha}\right))$ is null if
    \begin{equation}\label{eq:dimcrit}
    \dim(\mathbb{A} \oplus \mathbb{B})^* + n - \mathop{\mathrm{codim}}\mathcal{S}_{\alpha} <  \dim (\mathbb{A} \oplus \mathbb{B})^*.
    \end{equation}
    If the above inequality is satisfied for all $\alpha$ with $\brho$ meeting \eqref{eq:rhoconds}, then $\mathcal{U}^*_{\beta}$ is null for $\beta = \emptyset$. This follows as for any $[\vec{A}\ \vec{B}]$ in $\mathcal{U}_{\emptyset}$, $\vec{H}(\vec{A}, \vec{B},\bvarphi)$ is in $\mathcal{S}_{\alpha}$ for some $\bvarphi$ and some $\alpha$ satisfying \eqref{eq:rhoconds}. Hence $\mathcal{U}^*_{\emptyset}$ is a subset of $\bigcup_{\alpha} \pi\left (\vec{H}^{-1}(\mathcal{S}_{\alpha}\right))$, and the latter is a finite union of null sets. 
    
    %Further, $\pi\left (\vec{H}^{-1}(\mathcal{S}_{\alpha}\right))$ is closed, as $\mathcal{S}_{\alpha}$ is an embedded submanifold of non-zero codimension, hence it is closed. As as $\vec{H}$ is continuous, $\vec{H}^{-1}(\mathcal{S}_{\alpha})$ is closed. [TODO: NOT CORRECT].

    \subsubsection*{Case $2$: $\bvarphi$ singular}
    We will show that $\mathcal{U}_{\beta}$ with $|\beta| > 0$ is also null by reducing to the non-singular case with smaller channel sizes and factor numbers. To do so, let $j_c \equiv |\beta \cap \{(r_{<c}+1),\dots,(r_{<c}+r_c)\}|$ be the number of zeros in the $c$th channel on the diagonal of $\bvarphi$ and let the channels be numbered such that $j_1 \geq j_2 \geq \dots \geq j_C$.
    
    For the first channel, we can take $\bvarphi = \mathrm{Diag}(\vec{0}_{j_1}, \bvarphi')$ without loss of generality by permuting $\vec{x}_1$. Continuing to let $\vec{A}_1$ and $\vec{B}_1$ be the common and {distinct} factor loadings for channel $1$ respectively, define the submatrices $\vec{A}_{11}$ and $\vec{B}_{11}$ containing the first $j_1$ rows of $\vec{A}_1$ and $\vec{B}_1$ respectively. Similarly define $\tilde{\vec{A}}_{11}$ and $\tilde{\vec{B}}_{11}$ with respect to $\tilde{\vec{A}}_1$ and $\tilde{\vec{B}}_1$. The remaining submatrices $\vec{A}_{12}, \vec{B}_{12}$ and $\tilde{\vec{A}}_{12}, \tilde{\vec{B}}_{12}$ contain the last $n_1 - j_1$ rows of $\vec{A}_1, \vec{B}_1$ and $\tilde{\vec{A}}_1, \tilde{\vec{B}}_1$ respectively. Finally, let $\vec{A}' = [\vec{A}_{12}^{\tran}\ \vec{A}_{2}^{\tran}\ \dots\ \vec{A}_{C}]^{\tran}$ and $\vec{B}' = \blkdiag(\vec{B}_{12}, \vec{B}_{2}, \dots, \vec{B}_{C})$ be the loadings after exclusion of the top $j_1$ rows, with $\tilde{\vec{A}}'$ and $\tilde{\vec{B}}'$ being similar.

    With these definitions, the zeros of $\bvarphi$ show that \eqref{eq:nonseplowrank} implies 
    \begin{equation}\label{eq:ULblock}
    \vec{A}_{11}\vec{A}_{11}^{\tran} + \vec{B}_{11}\vec{B}_{11}^{\tran} = \tilde{\vec{A}}_{11}\tilde{\vec{A}}_{11}^{\tran} + \tilde{\vec{B}}_{11}\tilde{\vec{B}}_{11}^{\tran}.
    \end{equation}
    As $\bvarphi$ is diagonal, the off-diagonal blocks are also equal, 
        \vspace*{-0.4em}
    \begin{equation}\label{eq:URblock}
    [\vec{A}_{11}\ \vec{B}_{11}\ \vec{0}_{r_{>1}}] [\vec{A}'\ \vec{B}']^{\tran} = [\tilde{\vec{A}}_{11}\ \tilde{\vec{B}}_{11}\ \vec{0}_{r_{>1}}] [\tilde{\vec{A}}'\ \tilde{\vec{B}}']^{\tran}.
    \end{equation}
    Next, recall that the generalized Schur complement of $\vec{A}_{11}\vec{A}_{11}^{\tran} + \vec{B}_{11}\vec{B}_{11}^{\tran}$ in $\vec{A}\vec{A}^{\tran} + \vec{B}\vec{B}^{\tran}$ is
    \vspace*{-0.4em}
    \begin{equation}\label{eq:gsc}
    \vec{A}'\vec{A}'+\vec{B}'\vec{B}' - [\vec{A}'\ \vec{B}'] \vec{W} [\vec{A}'\ \vec{B}']^{\tran},
    \end{equation}
    where $\vec{W}$ is defined as 
    \vspace*{-0.4em}
    \[
    \vec{W} = [\vec{A}_{11} \vec{B}_{11}\ \vec{0}]^{\tran} (\vec{A}_{11}\vec{A}_{11}^{\tran} + \vec{B}_{11}\vec{B}_{11}^{\tran})^{-}[\vec{A}_{11} \vec{B}_{11}\ \vec{0}],
    \]
    with $(\cdot)^{-}$ being the Moore-Penrose psuedo-inverse. As $\vec{A}\vec{A}^{\tran} + \vec{B}\vec{B}^{\tran}$ is positive semi-definite, the generalized Schur complement is uniquely defined \cite[Ch. 6]{zhang2005}, and so \eqref{eq:gsc} equals 
    \begin{equation}\label{eq:gscprod}
        [\vec{A}'\ \vec{B}'] \begin{bmatrix}
            \vec{P} & \vec{0}\\
            \vec{0} & \vec{I}_{r_{>1}}
        \end{bmatrix}
        \begin{bmatrix}
            \vec{A}'^{\tran} \\
            \vec{B}'^{\tran}
        \end{bmatrix},
    \end{equation}
    where $\vec{P}$ is the orthogonal projection onto $\kernel([\vec{A}_{11}\ \vec{B}_{11}])$, which has dimension $(r_0 + r_1 - j_1)_{+}$ by maximal rank submatrix condition. To represent $\vec{P}$, note that $\dim \kernel(\vec{B}_{11})$ is $r'_1 \equiv (r_1 - j_c)_{+}$ as $\vec{B}_{11}$ is a $j_1 \times r_1$ submatrix of $[\vec{A}_1\ \vec{B}_1]$. Choosing $\vec{v}'_{1}, \dots, \vec{v}'_{r'_1}$ as an orthogonal basis for $\kernel(\vec{B}_{11})$, the vectors $\vec{v}_i = [\vec{0}^{\tran}_{r_0}\ \vec{v}_{i}]^{\tran}$ are also in $\kernel([\vec{A}_{11}\ \vec{B}_{11}])$. To the list $\vec{v}_1, \dots, \vec{v}_{r'_1}$, we can extend to an orthogonal basis of $\kernel([\vec{A}_{11}\ \vec{B}_{11}])$ by adding $\vec{w}_{1}, \dots, \vec{w}_{r'_0}$ where 
    \vspace*{-0.4em}
    \[
    r'_0 \equiv (r_0 + r_1 - j_1)_{+} - (r_1 - j_1)_+ = [r_0 - (j_1 - r_1)_+]_{+}.
    \]
    If $\vec{W} = [\vec{w}_1\ \dots\ \vec{w}_{r'_0}]$ and $\vec{V}' = [\vec{v}'_1\ \dots\ \vec{v}_{r'_1}]$, then if
    \vspace*{-0.7em}
    \begin{equation*}
    \vec{D} \equiv  \begin{bNiceMatrix}[first-row,last-col=5]
        r'_{0} & r_0 - r'_0 & r'_1 & r_1 - r'_1 & \\
        \vec{W}_{0} &\vec{0} & \vec{0} & \vec{0} & r_0 \\
        \vec{W}_{1} &\vec{0} & \vec{V}' & \vec{0}& r_1
    \end{bNiceMatrix}\quad ,
    \end{equation*}
    the projection $\vec{P}$ equals $\vec{D}\vec{D}^{\tran}$, where $\vec{W}_0$ and $\vec{W}_1$ contain the first $r_0$ rows and the remaining $r_1$ rows of $\vec{W}$ respectively.
    
    So, the inner term in \eqref{eq:gscprod} factors into $\vec{R}\vec{R}^\tran$ where $\vec{R} = [\vec{D}\ \vec{0};\ \vec{0}\ \vec{I}_{r_{>1}}]$.  As $\vec{R}$ has a block lower-triangular form, it can be verified that $[\vec{A}'\ \vec{B}']\vec{R}$ continues to have the appropriate channel structure. The generalized Schur complement can then be represented as $\vec{A}_o\vec{A}_{o}^{\tran} + \vec{B}_o\vec{B}_o^{\tran}$ where $[\vec{A}_o\ \vec{B}_o]$ is obtained from $[\vec{A}'\ \vec{B}']\vec{R}$ by dropping the zero columns.

    Using this representation, we can take the lower $(n-j_1) \times (n-j_1)$ block of \eqref{eq:nonseplowrank} and subtract $[\vec{A}'\ \vec{B}']\vec{W}[\vec{A}'\ \vec{B}']$ from both sides. By the equalities \eqref{eq:ULblock} and \eqref{eq:URblock}, this implies the relation between the generalized Schur complements,
    \[
    \vec{K} \setminus \vec{K}_{11} = \tilde{\vec{K}} \setminus \tilde{\vec{K}}_{11} + \bvarphi' 
    \]
    where $\vec{K} \equiv \vec{A}\vec{A}^{\tran}+\vec{B}\vec{B}^{\tran}$ and $\vec{K}_{11} \equiv \vec{A}_{11}\vec{A}_{11}^{\tran} + \vec{B}_{11}\vec{B}_{11}^{\tran}$, and $\tilde{\vec{K}}, \tilde{\vec{K}}_{11}$ are defined similarly using $\tilde{\vec{A}}, \tilde{\vec{B}}$. As discussed above, this implies there are $\vec{A}^o, \vec{B}^o$ and $\tilde{\vec{A}}^o, \tilde{\vec{B}}^o$ such that
    \[
    \vec{A}^{o}\vec{A}^{o\tran} + \vec{B}^{o}\vec{B}^{o\tran} = \tilde{\vec{A}}\tilde{\vec{A}}^{o\tran} + \tilde{\vec{B}}^{o}\tilde{\vec{B}}^{o\tran} + \bvarphi'.
    \]
     In the case with $j_c = 0$ for $c>1$, The above procedure exhibits a reduction from $\mathcal{U}_{\beta}^*$ into $\mathcal{U}'^*_{\emptyset}$ where $\mathcal{U}'$ is the set of loadings satisfying \eqref{eq:nonseplowrank} for channel sizes $n_1 - j_1, n_2, \dots, n_C$ and factor numbers $r_0', r_1', r_2, \dots, r_C$. As the orthogonal projection $\vec{P}$ varies smoothly with $[\vec{A}_{11}\ \vec{B}_{11}]$, the matrix $\vec{D}$ can be chosen to smoothly vary in $[\vec{A}_{11}\ \vec{B}_{11}]$ and so the reduction is smooth.

    In other cases, the above procedure can be iterated to remove $j_c$ zeros from $\bvarphi_c$ each time, yielding a smooth reduction from $\mathcal{U}_{\beta}^*$ to $\mathcal{U}'^*_{\emptyset}$. Showing $\mathcal{U}^*_{\beta}$ to be null reduces to showing $\mathcal{U}'^*_{\emptyset}$ to be null, where $\mathcal{U}'$ is the non-separable set \eqref{eq:nonseplowrank} with channel sizes $n_1 -j_1, \dots, n_C -j_C$ and factor sizes 
    $r_0' = (r_0 - \textstyle \sum_{c=1}^C (j_c - r_c)_{+})_{+}$, $r'_c = (r_c - j_c)_{+}$ for $c=1,\dots, C$.   

    \subsubsection*{Verification of Condition \ref{cond:phisep}}
    To demonstrate that Condition \ref{cond:phisep} is sufficient to imply that $\mathcal{U}$ is null, first assume that $M$ as defined in \eqref{eq:phisep2} is non-empty. For $\mathcal{U}^*_{\emptyset}$, if the subset of $M$ containing $(\vec{n}', \vec{r}', \brho)$ with  $\vec{n}'=\vec{n}, \vec{r}'=\vec{r}$ is empty, then \eqref{eq:rhoconds} cannot be satisfied with $\brho$ non-negative and so either $2(r_0+r) - n < 0$ or there is a $c$ with $2(r_0 + r_c) - n_c < 0$. That is, $\vec{H}(\vec{A}, \vec{B}, \bvarphi)$ cannot satisfy \eqref{eq:HrankUB} for any $(\vec{A}, \vec{B})$ and so $\mathcal{U}^*_{\emptyset}$ is empty. If the aforementioned subset of $M$ is non-empty, then at least some non-negative $\brho$ satisfying \eqref{eq:rhoconds} is possible. Then the dimension condition \eqref{eq:dimcrit} is equivalent to $\psi(\vec{n}, \vec{r}, \brho) > 0$ with $\psi$ defined in \eqref{eq:MFAidentcritfunc}. As a function of $\rho_0$ alone, $\psi$ is decreasing on $[0, r_0]$, so $\psi$ being positive  when $\rho_0$ is at its maximum feasible value implies $\psi$ is positive for all smaller values of $\rho_0$. So, $\min_{\rho} \psi(\vec{n}, \vec{r}, \rho) > 0$ ensures that \eqref{eq:dimcrit} is satisfied for all valid $\alpha$. Therefore, $\mathcal{U}^*_{\emptyset}$ is null. 
    
    For any index set $\beta$ with $j_c, c=1,\dots, C$ zeros in the $c$th channel on the diagonal of $\bvarphi$, taking $n'_{c} = n_{c} - j_c$ and $r_0', r'_1, \dots r'_C$ as in \eqref{eq:phisep2} effects the reduction of $\mathcal{U}^*_{\beta}$ to $\mathcal{U}'^*_{\emptyset}$ as discussed in the previous subsection. The above argument can be applied to $\mathcal{U}'^*_{\emptyset}$, showing that $\psi^* > 0$ implies $\mathcal{U}'^*_{\emptyset}$ is null for all possible reductions. Therefore, $\mathcal{U}^*_{\beta}$ is  null for all $\beta$. The remaining part of $\mathcal{U}$ is a subset of $(\mathbb{A} \oplus \mathbb{B}) \setminus (\mathbb{A} \oplus \mathbb{B})^{**}$, which is null by Lemma \ref{lem:maxranksubload}. So, $\mathcal{U}$ is a subset of the finite union of the null sets $\mathcal{U}^*_{\emptyset}, \mathcal{U}^*_{\beta}$ for all $\beta$, and $(\mathbb{A} \oplus \mathbb{B}) \setminus (\mathbb{A} \oplus \mathbb{B})^{**}$.

    Finally, if $M$ is empty, then no feasible value for $\brho$ exists for any possible reduction and so $\mathcal{U}^*$ is empty. Then $\mathcal{U} \subset \mathbb{A} \oplus \mathbb{B} \setminus (\mathbb{A} \oplus \mathbb{B})^{**}$ and so is null. As $\mathbb{A} \oplus \mathbb{B}$ is isomorphic to $\mathbb{A} \times \mathbb{B}$, the result of Theorem \ref{thm:phi_separation} follows.
     \end{proof}

% %%%%%%%%%%%PROP Local Ident%%%%%%%%%%%%%%%%%%%%%%%%%%%%%%%% 
    \begin{proof}[Proof of Theorem \ref{prop:local_identifiability}]\label{sec:prop1proof}
    First, note that $d\vec{R}_{\vec{x}\vec{x}}(\veta)$ is equivalent to $d\vec{R}_{\vec{x}\vec{x}}(\vec{A}, \vec{B}, \bPhi)$ when $(\vec{A}, \vec{B}, \bPhi) \in \mathbb{A}^*_{L} \times \mathbb{B}_{L}^* \times \mathrm{Diag}_{\geq 0}(n)$ is obtained from $\veta$ by inverting \eqref{eq:vectorization}. For $(d\vec{A}, d\vec{B}) \in \mathbb{A}_L \times \mathbb{B}_L$ and $d\bPhi \in \mathrm{Diag}(n)$, the differential 
    \begin{equation}
    \label{eq:Rdifferential}
    d\vec{R}_{\vec{x}\vec{x}}= \vec{A}d\vec{A}^{\tran} + d\vec{A}\vec{A}^{\tran} + \vec{B}d\vec{B}^{\tran} + d\vec{B}\vec{B}^{\tran} + d\bPhi 
    \end{equation}
    is a linear map in $(d\vec{A}, d\vec{B}, d\bPhi)$ from a vector space of dimension $L$ as defined in \eqref{eq:unknown_count} to a space of dimension $n(n+1)/2$. Necessary conditions for injectivity can be obtained by dimensionality considerations. First, the dimension of the domain must be no greater than the codomain for the map to be injective, so $L \leq n(n+1)/2$.  The previous inequality is equivalent to  the condition
    \begin{equation*}
    r_0 \leq \frac{1}{2} \left(2n + 1 - \sqrt{8(n+D) + 1}\right),
    \end{equation*}
    where $D$ is defined in \eqref{eq:K1def}.
    Similarly, setting $d\vec{A}, d\bPhi$ and $d\vec{B}_2, \dots, d\vec{B}_C$ to zero (of the appropriate dimensions), the restricted $d\vec{R}_{\vec{x}\vec{x}}$ is a linear map from a vector space of dimension $n_1r_1 - \frac{r_1(r_1-1)}{2}$ into the subspace of symmetric matrices with only the top $n_1 \times n_1$ block being non-zero. Again, the dimension of the restricted domain must be no greater than the codomain, meaning
     $n_1r_1 -  r_1(r_1-1)/2 \leq n_1(n_1+1)/2$.
    This is equivalent to the condition $
    r_1 \leq \frac{1}{2} \left(2n_1 + 1 - \sqrt{8n_1 + 1}\right).
    $
    The same considerations for other blocks yield the conditions that, for all $c=1,\dots, C$, $r_c \leq \frac{1}{2} \left(2n_c + 1 - \sqrt{8n_c + 1}\right)$.
    Combined, this is Condition \ref{cond:fac_count} and so Proposition \ref{prop:local_ident_nesc} is proven.
    
    Next, we will show that the Condition \ref{cond:fac_count} combined with the separability Condition \ref{cond:rotdet} is sufficient for $d\vec{R}_{\vec{x}\vec{x}}$ to be generically injective in $(d\vec{A}, d\vec{B}, d\bPhi)$. Assume that the combined matrix $[\vec{A}\ \vec{B}]$ is FCR, which excludes a null subset of $\mathbb{A}^*_L \times \mathbb{B}_{L}^*$. The proof proceeds in three steps: first showing that $d\vec{A} \mapsto \vec{A}d\vec{A}^\tran + d\vec{A}\vec{A}^\tran$ and $d\vec{B} \mapsto \vec{B}d\vec{B}^\tran + d\vec{B}\vec{B}^\tran$ are separately injective, then showing the sum of the two is generically injective, then finally showing that the map $(d\vec{A}, d\vec{B}, d\bPhi) \mapsto d\vec{R}_{\vec{x}\vec{x}}$ is generically injective and therefore $d\vec{R}_{\vec{x}\vec{x}}(\veta)$ is generically injective.  
    
    Define the linear maps $\vec{F}_{\vec{A}} : \mathbb{A}_L \to \mathrm{Sym}(n)$ and $\vec{F}_{\vec{B}} : \mathbb{B}_L \to \mathrm{Sym}(n)$ as
    $\vec{F}_{\vec{A}}(\vec{X}) \equiv \vec{A}\vec{X}^{\tran} +  \vec{X}\vec{A}^{\tran}$ and
    $\vec{F}_{\vec{B}}(\vec{Y}) \equiv \vec{B}\vec{Y}^{\tran} +  \vec{Y}\vec{B}^{\tran}$.
    Similarly, $\vec{F}_{\vec{A},\vec{B}} : \mathbb{A}_L \times \mathbb{B}_L \to \mathrm{Sym}(n)$ is
    \begin{equation}
    \begin{aligned}
    \label{eq:FABdef}
    \vec{F}_{\vec{A}, \vec{B}}(\vec{X}, \vec{Y}) &\equiv \vec{F}_{\vec{A}}(\vec{X}) + \vec{F}_{\vec{B}}(\vec{Y}) \\
    &=[\vec{A} \ \vec{B}][\vec{X}\ \vec{Y}]^{\tran} + [\vec{X} \ \vec{Y}][\vec{A}\ \vec{B}]^{\tran}.
    \end{aligned}
    \end{equation}
    
   For $\vec{F}_\vec{A}$, an application of Lemma \ref{LT_lemma} implies that $\vec{F}_\vec{A}$ is injective, as $\vec{F}_{\vec{A}}(\vec{X}) = \vec{0}$ implies $\vec{X} = \vec{0}$ by the structure of $\mathbb{A}_L$ and the FCR assumption. Arguing channel-wise for $\vec{F}_{\vec{B}}$, $\vec{F}_{\vec{B}}$ is injective in a similar fashion.

    Second, define the image subspaces as $\mathcal{A} \equiv \imagesp(\vec{F}_{\vec{A}})$ and $\mathcal{B}  \equiv \imagesp(\vec{F}_{\vec{B}})$. Injectivity of $\vec{F}_\vec{A}$ and $\vec{F}_{\vec{B}}$ implies
    \begin{equation*}
    \begin{aligned}
    \mathrm{dim}(\mathcal{A}) &= \textstyle nr_0 - \frac{r_0(r_0-1)}{2} \\
    \mathrm{dim}(\mathcal{B}) &= \textstyle \sum_{c=1}^C n_cr_c - \frac{r_c(r_c{-}1)}{2}.
    \end{aligned}
    \end{equation*}
    The sum map $\vec{F}_{\vec{A}, \vec{B}}$ will be injective iff $\mathcal{A} \cap \mathcal{B} = \{\vec{0}\}$,  as existence of $(\vec{X}, \vec{Y})$ with $\vec{F}_{\vec{A}, \vec{B}}(\vec{X}, \vec{Y}) = \vec{0}$ implies that $\vec{F}_{\vec{A}}(\vec{X}) = -\vec{F}_{\vec{B}}(\vec{Y})$ and so $\vec{F}_{\vec{B}}(\vec{Y})$ is in $\mathcal{A} \cap \mathcal{B}$. If $\mathcal{A} \cap \mathcal{B} = \{\vec{0}\}$, then injectivity of $\vec{F}_{\vec{A}}$ and $\vec{F}_{\vec{B}}$ separately implies $(\vec{X}, \vec{Y}) = (\vec{0}, \vec{0})$. The converse direction follows as if $\vec{G} \in \mathcal{A} \cap \mathcal{B}$ and $\vec{G} \neq \vec{0}$, then exist non-zero $\vec{X}$ and $\vec{Y}$ such that $\vec{F}_{\vec{A}}(\vec{X}) = \vec{G} = \vec{F}_{\vec{B}}(\vec{Y})$ and so $\vec{F}_{\vec{A}, \vec{B}}(\vec{X}, -\vec{Y}) = \vec{0}$.
    
    To show $\mathcal{A} \cap \mathcal{B} = \{\vec{0}\}$, suppose $\vec{F}_{\vec{A},\vec{B}}(\vec{X}, \vec{Y}) = \vec{0}$ and so
    \begin{equation}
    \label{eq:injective2}
    \vec{A}\vec{X}^{\tran} + \vec{X}\vec{A}^{\tran} = -\vec{B}{\vec{Y}^{\tran}} - \vec{Y}\vec{B}^{\tran}. 
    \end{equation}
    Next, suppose there exists some $\vec{v} \in \mathbb{R}^{n}$ with $\vec{v} \in \kernel(\vec{A}^{\tran})\cap\kernel(\vec{B}^{\tran})$ but at least one of $\vec{X}^{\tran}\vec{v}$ or $\vec{Y}^{\tran}\vec{v}$ is non-zero. Then applying the transformations in \eqref{eq:injective2} to $\vec{v}$, we obtain 
    \[
    \vec{A}\vec{X}^{\tran}\vec{v} = \vec{B}(-\vec{Y})^{\tran}\vec{v}.
    \]
    The LHS is then in $\imagesp(\vec{A})$ and the RHS is in $\imagesp(\vec{B})$. As $\imagesp(\vec{A}) \cap \imagesp(\vec{B}) = \{\vec{0}\}$ is implied by the FCR assumption on $[\vec{A}\ \vec{B}]$, we must have that both $\vec{A}\vec{X}^{\tran}\vec{v}$ and $ \vec{B}(-\vec{Y})^{\tran}\vec{v}$ are zero.
    However, $\vec{A}$ and $\vec{B}$ both being full rank implies that $\kernel(\vec{A}) = \{\vec{0}\}$ and $\kernel(\vec{B}) = \{\vec{0}\}$. As at least one of $\vec{X}^{\tran}\vec{v}$ and $(-\vec{Y})^{\tran}\vec{v}$ is non-zero by assumption, at least one of $\vec{A}\vec{X}^{\tran}\vec{v}$ and $\vec{B}(-\vec{Y})^{\tran}\vec{v}$ is non-zero, yielding a contradiction. So, for all $\vec{v} \in \kernel(\vec{A}^{\tran})\cap\kernel(\vec{B}^{\tran})$, $\vec{X}^{\tran}\vec{v}$ and $\vec{Y}^{\tran}\vec{v}$ are zero. 
    
    Hence, we have that $\kernel(\vec{X}^{\tran}) \supset \kernel(\vec{A}^{\tran})\cap\kernel(\vec{B}^{\tran})$ while $\kernel(\vec{Y}^{\tran}) \supset \kernel(\vec{A}^{\tran})\cap\kernel(\vec{B}^{\tran})$, which is equivalent to $\imagesp(\vec{X}) \subset \imagesp(\vec{A}) \oplus \imagesp(\vec{B})$ and $\imagesp(\vec{Y}) \subset \imagesp(\vec{A}) \oplus \imagesp(\vec{B})$. So, all columns of $[\vec{X}\ \vec{Y}]$ can be written as a linear combination of the columns of $[\vec{A}\ \vec{B}]$ and there exists some $\vec{W} \in \mathbb{R}^{(r_0 + r) \times (r_0 + r)}$ such that
    \begin{equation}
    \label{eq:Wdef}
    [\vec{X}\ \vec{Y}] = [\vec{A}\ \vec{B}]\vec{W}.
    \end{equation}
    Combining \eqref{eq:Wdef} and \eqref{eq:FABdef}, $\vec{F}_{\vec{A},\vec{B}}(\vec{X}, \vec{Y}) = \vec{0}$ can be written as
    $
    [\vec{A}\ \vec{B}](\vec{W}^\tran + \vec{W})[\vec{A}\ \vec{B}]^{\tran} = \vec{0}.
    $
    As $[\vec{A}\ \vec{B}]$ is FCR, left and right multiplying by $([\vec{A}\ \vec{B}]^{\tran}[\vec{A}\ \vec{B}])^{-1}[\vec{A}\ \vec{B}]^{\tran}$ and its transpose yields that $(\vec{W} + \vec{W}^{\tran}) = \vec{0}$, so $\vec{W}$ is skew-symmetric.

    However, Proposition \ref{prop:genericrotdet} implies that $\vec{W}$ must be zero unless $(\vec{A}, \vec{B})$ belong to a null subset of $\mathbb{A}_L^* \times \mathbb{B}^*_L$. To see this, recall that the Cayley transform gives that there exists a  $\vec{Q} \in \mathrm{O}(r_0 +r)$ such that
    $
    \vec{W} = (\vec{Q} - \vec{I})(\vec{Q} + \vec{I})^{-1}.
    $
    So, \eqref{eq:Wdef} is equivalent to 
    \[
    [\vec{X}\ \vec{Y}] + [\vec{A}\ \vec{B}] = \left([\vec{A}\ \vec{B}] -  [\vec{X}\ \vec{Y}]\right)\vec{Q}.
    \]
    The above implies that $(\vec{X}+\vec{A}, \vec{Y}+\vec{B}) \sim_{1} (\vec{A}-\vec{X}, \vec{B}-\vec{Y})$, where both pairs are in $\mathbb{A}_L \times \mathbb{B}_L$. Note that $\vec{F}_{\vec{A},\vec{B}}(\vec{X},\vec{Y}) = \vec{0}$ implies that $\vec{F}_{\vec{A},\vec{B}}(\epsilon\vec{X},\epsilon\vec{Y}) = \vec{0}$ for any $\epsilon$, meaning that $(\vec{A}-\vec{X},\vec{B}-\vec{Y})$ can be assumed to belong to an arbitrary neighborhood of $(\vec{A}, \vec{B})$ in $\mathbb{A}_L^* \times \mathbb{B}_L^*$. By the proof of Proposition \ref{prop:genericrotdet}, the subset $\tilde{\mathcal{C}} \subset \mathbb{A}_L^* \times \mathbb{B}_L^*$ where the full rank submatrix condition of Proposition \ref{prop:submatrices_rotdet} is not satisfied is closed and null. Generically, $(\vec{A}, \vec{B})$ is in the complement $\mathbb{A}_L^* \times \mathbb{B}_L^* \setminus \tilde{\mathcal{C}}$ and so $(\vec{A}-\vec{X},\vec{B}-\vec{Y})$ also belongs to the complement for small enough $(\vec{X}, \vec{Y})$ as $[\vec{A}\ \vec{B}]$ is FCR. So, Proposition \ref{prop:submatrices_rotdet} applies to $(\vec{A}-\vec{X},\vec{B}-\vec{Y})$, meaning that $\vec{Q}$ must be block-diagonal and so $(\vec{X}+\vec{A}, \vec{Y}+\vec{B}) \sim_{2} (\vec{A}-\vec{X}, \vec{B}-\vec{Y})$. By Proposition \ref{prop:LT}, the $\sim_2$-representatives in $\mathbb{A}^*_L \times \mathbb{B}_L^*$ are unique, hence $(\vec{X}, \vec{Y}) = (\vec{0}, \vec{0})$. Thus, $\vec{F}_{\vec{A},\vec{B}}$ is generically injective.
    
    For the third step,  we complete the proof by showing that
    \begin{equation}\label{eq:diagint}
    (\mathcal{A} \oplus \mathcal{B}) \cap \mathrm{Diag}(n) = \{\vec{0}\},
    \end{equation}
    which is equivalent to showing that the  differential \eqref{eq:Rdifferential} is injective. 
    %To see this equivalence, if there exists $(d\vec{A}, d\vec{B})$ and diagonal $d\bPhi$ such that the differential $\vec{F}_{\vec{A}, \vec{B}}(d\vec{A}, d\vec{B}) = - d\bPhi$, then $d\bPhi \in \mathcal{A} \oplus \mathcal{B}$. If the intersection $(\mathcal{A} \oplus \mathcal{B}) \cap \mathrm{Diag}(n)$ is trivial, then $d\bPhi = \vec{0}$ and injectivity of $\vec{F}_{\vec{A}, \vec{B}}$ implies that $(d\vec{A}, d\vec{B}) = (\vec{0}, \vec{0})$. So, the differential \eqref{eq:Rdifferential} is injective. For the converse direction, if some non-zero $\vec{G}$ exists in  $(\mathcal{A} \oplus \mathcal{B}) \cap \mathrm{Diag}(n)$, then $\vec{G}$ is diagonal and there is some non-zero $(\vec{X}, \vec{Y})$ such that $\vec{F}_{\vec{A}, \vec{B}}(\vec{X}, \vec{Y}) = \vec{G}$. Taking $(d\vec{A}, d\vec{B}) = (\vec{X}, \vec{Y})$ and $d\bPhi = -\vec{G}$, we obtain that the differential would fail to be injective.
    To show \eqref{eq:diagint}, we examine the orthogonal complement $\left((\mathcal{A} \oplus \mathcal{B}) \cap \mathrm{Diag}(n)\right)^{\perp}$. Standard properties of the subspace lattice %\cite[Ch. 1]{kollo2005}
    show that $\left((\mathcal{A} \oplus \mathcal{B}) \cap \mathrm{Diag}(n)\right)^{\perp}$ equals $(\mathcal{A}^\perp \cap \mathcal{B}^{\perp}) \cap (\mathcal{A}^{\perp} \cap \mathcal{B}^{\perp} \cap \mathrm{Diag}(n)^{\perp})^{\perp}  \oplus \mathrm{Diag}(n)^{\perp}$. As $\dim(\mathrm{Diag}(n)^{\perp})$ is $n(n-1)/2$, it suffices to show
    \begin{equation}
    \label{eq:nondiagparts}
    n = \dim\big((\mathcal{A}^\perp \cap \mathcal{B}^{\perp}) \cap (\mathcal{A}^{\perp} \cap \mathcal{B}^{\perp} \cap \mathrm{Diag}(n)^{\perp})^{\perp}\big ). 
    \end{equation}
    Expanding $\mathcal{B}^{\perp}$ using that $\vec{F}_{\vec{B}}(\vec{Y})$ is channel-structured block-diagonal, we see that any matrix with zeros in the block-diagonal is within $\mathcal{B}^{\perp}$. Let $\mathcal{Y}$ be the subspace of such matrices,
     \begin{equation*}
     \mathcal{Y} = \mathrm{Span}\big( \{\vec{e}_{i}\vec{e}_{j}^{\tran} + \vec{e}_{j}\vec{e}_{i}^{\tran} :\ \exists\ c\ \mathrm{s.t.}\ i \leq \textstyle r_{<c}+r_{c} < j \} \big).
     \end{equation*}
      The structure of $\mathcal{Y}$ then implies $\mathcal{Y} \subset \mathcal{B}^{\perp}$ and therefore $\mathcal{B}^{\perp} = \mathcal{Y} \oplus (\mathcal{B}^{\perp} \cap \mathcal{Y}^{\perp})$. The subspace $\mathcal{B}^{\perp} \cap \mathcal{Y}^{\perp}$ contains all channel-structured block-diagonal matrices where for all $c$, the $c$th block is orthogonal to all matrices of the form $\vec{B}_c\vec{Y}^{\tran}_c + \vec{Y}_c\vec{B}^{\tran}_c$. Hence, the subspace $\mathcal{B}^{\perp} \cap \mathcal{Y}^{\perp}$ be can be written as $\mathcal{W}_1 \oplus \dots \oplus \mathcal{W}_C$, where $\mathcal{W}_c \subset \mathrm{Sym}(n)$ contains the matrices in $\mathcal{B}^{\perp} \cap \mathcal{Y}^{\perp}$ whose entries for blocks other than $c$ are all zero.
     
    The intersection $\mathcal{A}^{\perp} \cap \mathcal{B}^{\perp} \cap \mathrm{Diag}(n)^{\perp}$ equals $\mathcal{A}^{\perp} \cap (\mathcal{W}_1 \oplus \dots \oplus \mathcal{W}_c \oplus \mathcal{Y}) \cap  \mathrm{Diag}(n)^{\perp}$, which in turn equals $ \mathcal{A}^{\perp} \cap (((\mathcal{W}_1 \oplus \dots \oplus \mathcal{W}_c)\cap \mathrm{Diag}(n)^{\perp}) \oplus \mathcal{Y})$
     as $\mathcal{Y} \subset \mathrm{Diag}(n)^{\perp}$. The subspace $(\mathcal{W}_1 \oplus \dots \oplus \mathcal{W}_c)\cap \mathrm{Diag}(n)^{\perp}$ consists of all $\vec{C}\in \mathcal{B}^{\perp} \cap \mathcal{Y}^{\perp}$ with zero diagonal. This space equals $\widetilde{\mathcal{W}}_1 \oplus \dots \oplus \widetilde{\mathcal{W}}_C$, where $\widetilde{\mathcal{W}}_c$ is the subspace of all elements of $\mathcal{W}_c$ with zero diagonal. So, $\dim(\mathcal{A}^{\perp} \cap \mathcal{B}^{\perp} \cap \mathrm{Diag}(n)^{\perp})$ is
    \begin{equation*}
    \dim(\mathcal{A}) + \textstyle\sum_{c=1}^C \dim(\widetilde{\mathcal{W}}_c) + \dim(\mathcal{Y})- \dim(\mathcal{A} + \textstyle\sum_{c=1}^C \widetilde{\mathcal{W}}_c + \mathcal{Y}). 
    \end{equation*}
    The dimension of  $\mathcal{W}_c$ can be written as
    \begin{equation*}
    \begin{aligned}
    \dim(\widetilde{\mathcal{W}}_c) &= \dim(\mathcal{W}_c) + \textstyle\frac{n_c(n_c-1)}{2} - (\textstyle\frac{n_c(n_c+1)}{2} -J_c) \\
    &= \dim(\mathcal{W}_c) - (n_c - J_c),
    \end{aligned}
    \end{equation*}
    where $J_c$ is the dimension of the intersection between the subspace of matrices in $\mathrm{Sym}(n_c)$ that are realized as $\vec{B}_c\vec{Y}_c^{\tran} + \vec{Y}_c\vec{B}_c^\tran$ and the subspace of diagonal matrices. This will be zero iff the individual channel is locally identifiable as a single channel factor model, which will generically occur when \eqref{eq:local_identifiability_bound2} is satisfied, as shown in \cite[Thm. 3.2]{shapiro1985}. 
    If the above is satisfied, $\mathcal{B}\cap\mathrm{Diag}(n) = \{\vec{0}\}$. From this, we obtain that 
    \begin{equation}
    \begin{aligned}
    \label{eq:AperpBperp_expansion}
    \mathcal{A}^{\perp} + \mathcal{B}^{\perp} &= (\mathcal{A}^\perp + \mathcal{B}^\perp) \cap (\mathcal{A}^\perp + (\mathcal{B}^{\perp} \cap \mathrm{Diag}(n)^{\perp})) \\&\quad \oplus
    (\mathcal{A}^\perp + \mathcal{B}^\perp) \cap (\mathcal{A}\cap(\mathcal{B} \oplus \mathrm{Diag}(n))). \\
    \end{aligned}
    \end{equation}
    However, the term $\mathcal{A}\cap(\mathcal{B} \oplus \mathrm{Diag}(n))$ can be shown to be $\{ \vec{0}\}$ by the results from \cite{Shapiro1982} for single-channel FA when \eqref{eq:local_identifiability_bound1} is satisfied. In particular, if $\vec{C} \in \mathcal{A}\cap(\mathcal{B} \oplus \mathrm{Diag}(n))$, then as $\vec{C} \in \mathcal{A}$, \cite[Lemma 2.1]{Shapiro1982} implies that
    \begin{equation*}
    \vec{v}_{i}^{\tran} \vec{C} \vec{v}_j = 0,\quad 1\leq i \leq j \leq n-r_0,
    \end{equation*}
    where $\vec{v}_1, \dots, \vec{v}_{n-r_0} \in \mathbb{R}^{n}$ is a basis for $\kernel(\vec{A}^{\tran})$. As the structure of $\mathbb{A}$ does not restrict $\kernel(\vec{A}^{\tran})$, all the above linear constraints on $\vec{C}$ are generically independent. There are $\binom{n-r_0}{2}$ constraints, and $\vec{C} \in (\mathcal{B} \oplus \mathrm{Diag}(n))$ has $\mathrm{dim}(\mathcal{B}\oplus \mathrm{Diag}(n))$ degrees of freedom.  Hence, $\mathcal{A}\cap(\mathcal{B} \oplus \mathrm{Diag}(n))$ is $\{\vec{0}\}$ iff
    \begin{equation*}
    \textstyle\frac{(n-r_0)(n-r_0+1)}{2} \geq  \dim(\mathcal{B}) +  \dim(\mathrm{Diag}(n)),
    \end{equation*}
    as the RHS is $\dim(\mathcal{B} \oplus \mathrm{Diag}(n))$.
    This is equivalent to the condition \eqref{eq:local_identifiability_bound1}, and therefore $\mathcal{A}\cap(\mathcal{B} \oplus \mathrm{Diag}(n)) = \{\vec{0}\}$.
    This implies that the RHS of \eqref{eq:AperpBperp_expansion} equals $(\mathcal{A}^\perp + (\mathcal{B}^{\perp} \cap \mathrm{Diag}(n)^{\perp}))$.
    Finally, \eqref{eq:nondiagparts} can be expanded as,
    \begin{equation*}
    \begin{aligned}
    &\dim\big((\mathcal{A}^\perp \cap \mathcal{B}^{\perp}) \cap (\mathcal{A}^{\perp} \cap \mathcal{B}^{\perp} \cap \mathrm{Diag}(n)^{\perp})^{\perp}\big )\\
     &=\dim(\mathcal{A}^\perp)  + \dim(\mathcal{B}^\perp) - \dim(\mathcal{A}^{\perp} + \mathcal{B}^{\perp})\\
    &\quad  - \dim(\mathcal{A}^{\perp} \cap \mathcal{B}^{\perp} \cap \mathrm{Diag}(n)^{\perp}) \\
    &= \dim(\mathcal{A}^\perp) + \textstyle\sum_{c=1}^C\dim(\mathcal{W}_c) + \dim(\mathcal{Y}) \\
    &\quad- \dim(\mathcal{A}^{\perp} + \mathcal{B}^{\perp}) - \dim(\mathcal{A}^\perp) - \textstyle\sum_{c=1}^C \dim(\widetilde{W}_c) \\ &\quad+\dim(\mathcal{A}^\perp + (\mathcal{B}^{\perp} \cap \mathrm{Diag}(n)^\perp)).
    \end{aligned}
    \end{equation*}
    By the previous results, this simplifies to 
    {\small
    \begin{equation*}
    \begin{aligned}
    n {-} \big[\dim(\mathcal{A}^\perp{+}\mathcal{B}^\perp) {-} \dim(\mathcal{A}^\perp{+}(\mathcal{B}^{\perp} \cap \mathrm{Diag}(n)^\perp)) \big]
    {-} \textstyle\sum_{c=1}^C J_c,
    \end{aligned}
    \end{equation*}
    }
    which equals $n$ iff all $J_c = 0$ and 
    \[
    \dim(\mathcal{A}^\perp + (\mathcal{B}^{\perp} \cap \mathrm{Diag}(n)^\perp)) = \dim(\mathcal{A}^\perp + \mathcal{B}^\perp).
    \] This occurs generically when the criteria in Proposition \ref{prop:local_identifiability} are satisfied, and so the differential $d\vec{R}_{\vec{x}\vec{x}}$ will be injective at almost all $(\vec{A}, \vec{B}, \bPhi) \in \mathbb{A}^*_L \times \mathbb{B}_L^* \times \mathrm{Diag}(n)$. This space is isomorphic to $V$, so $d\vec{R}_{\vec{x}\vec{x}}(\veta)$ is generically injective. 
    \end{proof}
%%%%%%%%%%%%%%%%%%%%%%%%%%%%%%

 %%%%%%%%%%%%%%% THM Consistency
\begin{proof}[Proof of Theorem \ref{thm:consistency}]\label{sec:thm2proof}
    Define the function $\ell_0$ over $\veta \in V$,
    \begin{equation*}
    \begin{aligned}
    \ell_0(\veta) &\equiv E[\ell_{T}(\veta)] = \log \det \vec{R}_{\vec{x}\vec{x}}(\veta) + \tr \vec{R}^{-1}_{\vec{x}\vec{x}}(\veta) \bSigma_{\vec{x}\vec{x}}\\
    &= {2D_{KL}\big (\mathcal{N}(\vec{0}, \bSigma_{\vec{x}\vec{x}}) || \mathcal{N}(\vec{0}, \vec{R}_{\vec{x}\vec{x}}(\veta))\big)+ \log\det \bSigma_{\vec{x}\vec{x}}}.
    \end{aligned}
    \end{equation*}
    {By hypothesis, $\ell_0$ has a unique minimum over $V'$ at $\mathring{\veta}$ as $\mathring{\veta}$ belongs to the globally identified set.  Further, this minimum is well-separated. To see this, first note that the sublevel sets of $\ell_0(\veta)$ are compact by the proof of \cite[Thm. 1]{ramirez2020} with $\vec{S} = \bSigma_{\vec{x}\vec{x}}$ and the fact that the map $\veta \mapsto (\vec{A}, \vec{B}, \bPhi)$ is continuous. Let $m = \ell_0(\mathring{\veta})$, which is finite as $\bSigma_{\vec{x}\vec{x}}$ and $\mathring{\vec{R}}_{\vec{x}\vec{x}}$ are positive definite and so yield a finite KL-divergence. For any $h > 0$ and all $\epsilon > 0$, the closed set $V'$ is partitioned into the three sets $B_{\epsilon} \cap L_{h}$, $B^c_{\epsilon} \cap L_h$ and $L_h^{c}$, where 
    \[
    \begin{aligned}
        B_{\epsilon}  &= \{\veta \in V'\ ;\ ||\veta - \mathring{\veta}|| < \epsilon\},\\
        L_h &= \{\veta \in V'\ ;\ \ell_0(\veta)  \leq m + h\}.
    \end{aligned}
    \]
    The set $L_h \cap B^c_{\epsilon}$ is the intersection of a closed set and a compact set, so it is compact. Therefore, the infimum of the continuous function $\ell_0(\veta)$ over $L_h \cap B^c_{\epsilon}$ is achieved at some $\veta'$. By the assumption of a unique minimum, $\ell_0(\veta') > m$ for all $\epsilon > 0$. Additionally, as $h > 0$, the infimum of $\ell_0$ over $L_{h}^{c}$ is strictly greater than $m$. Therefore, $\inf_{\veta \in B^c_{\epsilon}} \ell(\veta) > \ell(\mathring{\veta})$ and so $\mathring{\veta}$ is a well-separated minimum. }
    
    The deviation of $\ell_T$ from $\ell_0$ is controlled as,
    \begin{equation*}
    \begin{aligned}
    \sup_{\veta \in V'}  \lvert \ell_T(\veta) - \ell_0(\veta)  \rvert &= \sup_{\veta \in V'}  \lvert \tr \vec{R}_{\vec{x}\vec{x}}^{-1}(\veta) (\vec{S}_T-\bSigma_{\vec{x}\vec{x}}) \rvert \\
    &\leq \sup_{\veta \in V'} ||\vec{R}_{\vec{x}\vec{x}}^{-1}(\veta)||_F ||\vec{S}_{T}-\bSigma_{\vec{x}\vec{x}}||_F\\
    &\leq \epsilon^{-1}\sqrt{n}||\vec{S}_{T}-\bSigma_{\vec{x}\vec{x}}||_F.
    \end{aligned}
    \end{equation*}
    The third line follows from the definition of $V'$, which imposes that $\lambda_{\min}(\vec{R}_{\vec{x}\vec{x}}(\veta)) \geq \epsilon$. As the second moment of $\vec{x}$ exists, $\vec{S}_{T}$ is consistent so  $||\vec{S}_T-\bSigma_{\vec{x}\vec{x}}||_F \overset{p}{\to} 0$.  So, $\ell_T(\veta)$ converges uniformly in probability to the limiting function $\ell_0(\veta)$. As $\hat{\veta}_{T}$ minimizes $\ell_T$, standard results for $M$-estimators \cite[p. 45]{vaart1998} imply that $\hat{\veta}_{T} \overset{p}{\to} \mathring{\veta}$, so 
    %Inverting \eqref{eq:vectorization} is continuous, so
    $(\hat{\vec{A}}_T, \hat{\vec{B}}_T, \hat{\vec{\bPhi}}_T) \overset{p}{\to} (\mathring{\vec{A}}, \mathring{\vec{B}}, \mathring{\bPhi})$
    as well. 
    Note that if $\bSigma_{\vec{x}\vec{x}} \in \mathcal{R}(\vec{n}, \vec{r})$, then the minimizing $\mathring{\vec{R}}_{\vec{x}\vec{x}}$ is $\bSigma_{\vec{x}\vec{x}}$, which is the unique minimum by the Gibbs inequality. 
\end{proof}
 %%%%%%%%%%%%%%% THM Asymptotic Normality
\begin{proof}[Proof of Theorem \ref{thm:normality}]\label{sec:thm3proof} For any $r>0$, define the set of $\veta \in V'$ with $||\vec{R}_{\vec{x}\vec{x}}(\veta) - \bSigma_{\vec{x}\vec{x}}||_F < r$ and $||\vec{R}^{-1}_{\vec{x}\vec{x}}(\veta) - \bSigma_{\vec{x}\vec{x}}^{-1}||_F < r$. As $\vec{R}_{\vec{x}\vec{x}}(\veta), \vec{R}_{\vec{x}\vec{x}}^{-1}(\veta)$ are continuous and $\mathring{\veta}$ is in both sets, their intersection is a non-empty open neighborhood of $\mathring{\veta}$. In this neighborhood, the objective function $\ell_T(\veta)$ is Lipschitz, as for any  $\veta_1, \veta_2$  in the neighborhood with $\vec{R}_1 \equiv \vec{R}_{\vec{x}\vec{x}}(\veta_1), \vec{R}_2\equiv\vec{R}_{\vec{x}\vec{x}}(\veta_2)$, the difference $|\ell_{T}(\veta_1){-}\ell_{T}(\veta_2) |$ is bounded above,
\begin{equation*}
\begin{aligned}
|\ell_{T}(\veta_1){-}\ell_{T}(\veta_2) | &\leq \lvert \log \det \vec{R}_1\vec{R}_2^{-1} \rvert + \lvert \tr\vec{S}_T [\vec{R}_{1}^{-1} - \vec{R}_2^{-1}] \rvert \\
& \mkern-50mu\leq \vert\tr \vec{R}_1 [\vec{R}_2^{-1} \mkern-3mu- \vec{R}_1^{-1}]| + |\tr \vec{S}_T [\vec{R}_2^{-1}\mkern-3mu - \vec{R}_1^{-1}] |\\
%& \leq \left(||\vec{R}_1||_F + ||\vec{S}_T||_{F}\right) ||\vec{R}_2^{-1} - \vec{R}_1^{-1}||_F \\
& \mkern-50mu\leq \left(||\vec{R}_1 - \bSigma_{\vec{x}\vec{x}}||_F + ||\bSigma_{\vec{x}\vec{x}}||_F + ||\vec{S}_T||_{F}\right) \times\\
    &\qquad ||\vec{R}_1^{-1}||_{F}||\vec{R}_2^{-1}||_{F} ||\vec{R}_1 - \vec{R}_2||_F\\
&\mkern-50mu\leq m(r, \bSigma_{\vec{x}\vec{x}}, \vec{S}) ||\vec{R}_1 - \vec{R}_2||_F,
\end{aligned}
\end{equation*}
with $m(r, \bSigma_{\vec{x}\vec{x}}, \vec{S}) = (r+||\bSigma_{\vec{x}\vec{x}}||_F + ||\vec{S}||_F)(r+||\bSigma_{\vec{x}\vec{x}}^{-1}||_F)^2$.  Since by  assumption $E[||\vec{x}_1||^4] < \infty$, it is implied that $E[||\vec{S}_T||^2_{F}]<\infty$ and therefore $E[m(r, \bSigma_{\vec{x}\vec{x}}, \vec{S})^2] < \infty$. As $||\vec{R}||_F$ is bounded within the neighborhood, it is similarly true that the associated $||\vec{A}||_F, ||\vec{B}||_F$ are bounded. Therefore, 
\begin{equation*}
\begin{aligned}
||\vec{R}_1 - \vec{R}_2||_F &\leq 2 ||\vec{A}_1||_F ||\vec{A}_1 - \vec{A}_2||_F  + ||\bPhi_1 - \bPhi_2||_F \\&\quad + 2||\vec{B}_1||_F||\vec{B}_1 - \vec{B}_2||_F  \\
& \leq C(r, \bSigma_{\vec{x}\vec{x}}) ||\veta_1 - \veta_2||_2
\end{aligned}
\end{equation*}
for some non-random $C(r, \bSigma_{\vec{x}\vec{x}})$. Hence, $\ell_T(\veta)$ is Lipschitz within some neighborhood of $\mathring{\veta}$ with first differential
\begin{equation*}
\begin{aligned}
d\ell_{T}(\veta, d\veta) = &\tr(\vec{R}_{\vec{x}\vec{x}}^{-1}(\veta)d\vec{R}_{\vec{x}\vec{x}}(\veta, d\veta))\\&- \tr(\vec{R}_{\vec{x}\vec{x}}^{-1}(\veta)d\vec{R}_{\vec{x}\vec{x}}(\veta, d\veta)\vec{R}_{\vec{x}\vec{x}}^{-1}(\veta)\vec{S}_T).
\end{aligned}
\end{equation*}
Both $d\vec{R}_{\vec{x}\vec{x}}(\veta, d\veta)$ and $\vec{R}_{\vec{x}\vec{x}}^{-1}(\veta)$ exist for all $\veta \in V$, so $d\ell_{T}(\veta, d\veta)$ is well-defined.
The second differential expands as
\begin{equation*}
\begin{aligned}
d^2\ell_{T}(\veta, d\veta) &= 
\label{eq:2nddiff}
 2\tr\left([\vec{R}_{\vec{x}\vec{x}}^{-1}(\veta)d\vec{R}_{\vec{x}\vec{x}}(\veta, d\veta)]^2\vec{R}_{\vec{x}\vec{x}}^{-1}(\veta) \vec{S}_T\right)\\&-\tr\left([\vec{R}_{\vec{x}\vec{x}}^{-1}(\veta)d\vec{R}_{\vec{x}\vec{x}}(\veta, d\veta)]^2\right).
\end{aligned}
\end{equation*}
Taking expectation and evaluating at $\mathring{\veta}$, the above display equals $||\bSigma^{-1/2}_{\vec{x}\vec{x}}d\vec{R}_{\vec{x}\vec{x}}(\mathring{\veta}, d\veta)\bSigma^{-1/2}_{\vec{x}\vec{x}}||^2_F$ for $d\vec{R}_{\vec{x}\vec{x}}$ as in \eqref{eq:Rdifferential}, evaluated at $(\mathring{\vec{A}}, \mathring{\vec{B}}, \mathring{\bPhi})$ with tangent vector $(d\vec{A}, d\vec{B}, d\bPhi)$.  This norm is zero only if $d\vec{R}_{\vec{x}\vec{x}}(\mathring{\veta}, d\veta)$ is zero. However, as in the proof of Theorem \ref{thm:consistency}, identifiability of $\mathring{\veta}$ implies that $d\vec{R}_{\vec{x}\vec{x}}$ is non-zero for all non-zero $(d\vec{A}, d\vec{B}, d\bPhi)$. Let $\vec{V}_{0}$ be the Hessian matrix of $\ell_0$ at $\mathring{\veta}$. As the second differential is positive, 
%the defining property of the second differential \cite[p. 119]{Magnus1988} shows that
the quadratic form $d\veta^{\tran} \vec{V}_{0} d\veta$ is positive for all $d\veta \neq \vec{0}$, so $\vec{V}_0$ is positive definite. Standard results for M-estimators (e.g. \cite[p. 53]{vaart1998}) then imply that $\sqrt{T}(\hat{\veta}_T - \mathring{\veta}) \overset{d}{\to} \mathcal{N}\left(\vec{0}, \vec{W} \right)$, 
where
\begin{equation}
\label{eq:asymptoticcov}
\vec{W} = \vec{V}_0^{-1}E\Big[\frac{\partial \ell_T(\mathring{\veta})}{\partial \veta} \frac{\partial \ell_T(\mathring{\veta})^{\tran}}{\partial \veta}  \Big] \vec{V}_0^{-1}.
\end{equation}
\end{proof}

\end{appendices}

\end{document}